\newcommand{\Share}{\mathsf{Share}}
\newcommand{\Eval}{\mathsf{Eval}}
\newcommand{\Rec}{\mathsf{Rec}}
\newcommand{\POLY}{\mathrm{POLY}}
\newcommand{\POLYdmF}{\POLY_{d,m}(\F)}
\definecolor{DarkGreen}{rgb}{0.1,0.5,0.1}
\definecolor{DarkRed}{rgb}{0.5,0.1,0.1}
\definecolor{DarkBlue}{rgb}{0.1,0.1,0.5}
\newcommand{\cC}{\ensuremath{\mathcal{C}}}
\newcommand{\cL}{\ensuremath{\mathcal{L}}}
\newcommand{\cM}{\ensuremath{\mathcal{M}}}
\newcommand{\F}{{\mathbb F}}
\newcommand{\fm}{{\mathfrak{m}}}
\newcommand{\inabs}[1]{\left|#1\right|}
\newcommand{\abs}[1]{\inabs{#1}}
\newcommand{\inset}[1]{\left\{#1\right\}}
\renewcommand{\epsilon}{\varepsilon}
\newcommand{\vphi}{\varphi}
\newtheorem{theorem}{Theorem} 
\newtheorem{lemma}[theorem]{Lemma} 
\newtheorem{definition}{Definition}
\newtheorem{observation}[theorem]{Observation} 
\newtheorem{conjecture}[theorem]{Conjecture} 
\newtheorem{corollary}[theorem]{Corollary} 
\newtheorem{remark}{Remark}
\newtheorem{claim}[theorem]{Claim}
\newtheorem{example}{Example}
\title{A Characterization of Optimal-Rate Linear Homomorphic Secret Sharing Schemes, and Applications}
\author{Keller Blackwell and Mary Wootters}
\date{}
\begin{document}

\maketitle

\begin{abstract}
    
A \emph{Homomorphic Secret Sharing} (HSS) scheme is a secret-sharing scheme that shares a secret $x$ among $s$ servers, and additionally allows an output client to reconstruct some function $f(x)$, using information that can be locally computed by each server.  A key parameter in HSS schemes is \emph{download rate}, which quantifies how much information the output client needs to download from each server.
Recent work \emph{(Fosli, Ishai, Kolobov, and Wootters, ITCS 2022)} established a fundamental limitation on the download rate of linear HSS schemes for computing low-degree polynomials, and gave an example of HSS schemes that meet this limit.

In this paper, we further explore optimal-rate linear HSS schemes for polynomials.  Our main result is a complete characterization of such schemes, in terms of a coding-theoretic notion that we introduce, termed \emph{optimal labelweight codes}.  We use this characterization to answer open questions about the \emph{amortization} required by HSS schemes that achieve optimal download rate.  In more detail, the construction of Fosli et al. required amortization over $\ell$ instances of the problem, and only worked for particular values of $\ell$. We show that---perhaps surprisingly---the set of $\ell$'s for which their construction works is in fact nearly optimal, possibly leaving out only \emph{one} additional value of $\ell$.  We show this by using our coding-theoretic characterization to prove a necessary condition on the $\ell$'s admitting optimal-rate linear HSS schemes. We then provide a slightly improved construction of optimal-rate linear HSS schemes, where the set of allowable $\ell$'s is optimal in even more parameter settings.  Moreover, based on a connection to the MDS conjecture, we conjecture that our construction is optimal for all parameter regimes.
\end{abstract}

\section{Introduction}\label{sec:intro}
A \emph{Homomorphic Secret Sharing} (HSS) scheme is a secret sharing scheme that supports computation on top of the shares~\cite{C:Benaloh86a, BGI16, BGILT18}.
Homomorphic Secret Sharing has found many applications in recent years, from private information retrieval to secure multiparty computation (see, e.g., \cite{BCGIO17, BGILT18}).

In more detail, a standard $t$-private (threshold) secret sharing scheme shares an input $x$ as $s$ \emph{shares}, $\Share(x) = (y_1, \ldots, y_s)$, which are then distributed among $s$ servers; the goal is that any $t+1$ of the servers can together recover the secret $x$, while no $t$ of the servers can learn anything about $x$.\footnote{In this work, we focus on \emph{information-theoretic} security, so the above statement means that the joint distribution of any $t$ shares does not depend on the secret $x$.}

A $t$-private \emph{HSS} scheme has the additional feature that the servers are able to compute functions $f$ in some function class $\mathcal{F}$, as follows.  Each server $j$ does some local computation on its share $y_j$ to obtain an \emph{output share} $z_j = \Eval(f,j,y_j)$.  The output shares $z_1, \ldots, z_s$ are then sent to an \emph{output client}, who uses these output shares to recover $f(x) = \Rec(z_1, \ldots, z_s)$.  Formally, we say that the HSS scheme $\pi$ is given by the tuple of functions $(\Share, \Eval, \Rec)$ (see Definition~\ref{def:HSS}).  

In order for this notion to be interesting, the output shares $z_j$ should be substantially smaller than the original shares $y_j$; otherwise any $t+1$ servers could just communicate their entire shares $y_j$ to the output client, who would recover $x$ and then compute $f(x)$.  To that end, prior work~\cite{FIKW22} focused on the the \emph{download rate} of (information-theoretically secure) HSS schemes.  The download rate (see Definition~\ref{def:downloadrate}) of an HSS scheme is the ratio of the number of bits in the output $f(x)$ (that is, the number of bits the output client wants to compute), to the number of bits in all of the output shares $z_j$ (that is, the number of bits that the output client downloads); ideally this rate would be as close to $1$ as possible.

The work~\cite{FIKW22} focused on HSS schemes for the function class $\mathcal{F} = \POLY_{d,m}(\F)$, the class of all $m$-variate, degree-$d$ polynomials over a finite field $\F$, and we will do the same here.  
One of the main results of that work was an infeasibility result on the download rate for \emph{linear} HSS schemes, which are schemes where both $\Share$ and $\Rec$ are linear over some field; note that $\Eval$ (which converts the shares $y_j$ to the output shares $z_j$) need not be linear.

\begin{theorem}[\cite{FIKW22} (informal, see Theorem~\ref{thm:FIKW rate LB})]\label{thm:FIKWinformal}
    Any $t$-private $s$-server linear HSS scheme for $\text{POLY}_{d,m}(\mathbb{F})$ has download rate at most $(s - dt)/s$.
\end{theorem}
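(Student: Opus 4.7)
The plan is to prove the rate upper bound via an information-theoretic argument that reduces to establishing a $dt$-privacy property of the output shares with respect to $f(x)$. Once $dt$-privacy is in hand, a standard Shannon-inequality argument yields the rate bound.

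First I would set up the linear structure. Since $\Share$ is linear, each share can be written as $y_j = A_j x + B_j r$ for matrices $A_j, B_j$ over $\F$, where $r$ is the shared randomness; and since $\Rec$ is linear, for each $f$ we can write $f(x) = \sum_{j=1}^s L_j z_j$ for linear maps $L_j$ depending only on $f$. The $t$-privacy assumption says that for every $T \subset [s]$ of size $t$, the distribution of $(y_j)_{j \in T}$ is independent of $x$, and hence, since $z_j$ is a function of $y_j$ alone, so is $(z_j)_{j \in T}$.

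The key technical claim is a stronger $dt$-privacy property: for every $D \subset [s]$ of size $dt$ and every $f \in \POLYdmF$, the joint distribution of $(z_j)_{j \in D}$ is independent of $f(x)$. The intuition is the BGW-style observation that applying a degree-$d$ function to $t$-private linear shares yields shares that are $dt$-private with respect to the resulting polynomial evaluation. To prove the claim in this general setting I would try one of two routes. The first is a simulator argument: decompose $D = T_1 \cup T_2 \cup \cdots \cup T_d$ with each $|T_i| = t$, and use the $t$-private simulators for each $T_i$ together with the degree-$d$ structure of $f$ to stitch together a joint simulator for $(z_j)_{j \in D}$ that does not need $f(x)$. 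The second is a linearization: replace each $z_j$ by the vector of all degree-$\le d$ monomials in $y_j$, which yields a larger but fully linear scheme, and then run a polynomial-degree argument on the associated linear code to conclude $dt$-privacy directly.

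Given $dt$-privacy, the rate bound follows by an averaging argument. For every $T$ of size $dt$, we have $I(f(X); z_T) = 0$, so $H(f(X)) = I(f(X); z_{\bar T} \mid z_T) \le H(z_{\bar T}) \le \sum_{j \notin T} H(z_j)$. Summing over all $\binom{s}{dt}$ choices of $T$ and dividing yields $\sum_{j=1}^s H(z_j) \ge \frac{s}{s - dt} H(f(X))$, which (after maximizing $H(f(X))$ by appropriate choice of $f$ and the input distribution) translates to the download rate being at most $(s - dt)/s$. The main obstacle is the $dt$-privacy claim of the previous paragraph. The simulator route is delicate because the $d$ local simulators for $T_1, \ldots, T_d$ must be made consistent with a single underlying randomness $r$, and naively gluing them together can fail when $r$ is not cleanly partitioned among the servers. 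The linearization route sidesteps this but requires careful bookkeeping to verify that replacing $z_j$ by all degree-$\le d$ monomials in $y_j$ can be restricted back (via the linear maps $L_j$ used by $\Rec$) without blowing up the effective share size, so that the bound on the enlarged scheme transfers to the original.
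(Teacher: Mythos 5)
The central claim your argument rests on---that for every $D \subseteq [s]$ with $|D| = dt$ the output shares $(z_j)_{j \in D}$ are jointly independent of $f(x)$---is false, and the theorem cannot be proved this way. Here is a counterexample. Take $d=2$, $t=1$, $s=3$, so $dt=2$, and use $1$-CNF sharing: $x = y_{\{1\}}+y_{\{2\}}+y_{\{3\}}$ with server $j$ holding $(y_T : j \notin T)$. Servers $1$ and $2$ jointly hold all three additive shares and hence determine $x$ exactly; if $\Eval$ simply has each server output its full input share (plus whatever products are needed), one obtains a perfectly valid, if wasteful, linear HSS scheme in which $dt$ output shares determine $f(x)$ completely. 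The theorem must hold for \emph{all} linear schemes, including this one, so no $dt$-privacy lemma of the kind you want is available. You have misremembered the direction of the BGW threshold shift: applying a degree-$d$ map to $t$-private shares raises the \emph{reconstruction} threshold to $dt+1$ (one needs $dt+1$ servers to recover the evaluation); it does not raise the \emph{privacy} threshold to $dt$. Since your entropy computation $H(f(X)) = I(f(X); z_{\bar T} \mid z_T) \le \sum_{j \notin T} H(z_j)$ needs the full independence $I(f(X); z_T) = 0$, the averaging step collapses along with the false premise.

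The statement that is actually true, and is what the argument from \cite{FIKW22} uses, is a reconstruction-impossibility statement about the linear map $\Rec$: no $\F$-linear combination of output shares coming from only $dt$ servers can equal $f(x)$ \emph{as a formal polynomial} in the CNF shares, because those $dt$ servers can be covered by $d$ sets $T_1^*,\dots,T_d^*$ of size $t$, and then the monomial $\prod_{i=1}^d y_{i,T_i^*}$ appears in the formal expansion of $\prod_{i=1}^d x_i$ but in none of those servers' output polynomials (this is exactly Lemma~\ref{lem: query spread}). Applied to every nonzero row-combination of the reconstruction matrix $G_\pi$, this says the row span of $G_\pi$ is a code of dimension $\ell$ and length $n$ in which every nonzero codeword touches at least $dt+1$ of the $s$ server labels (Lemma~\ref{lem:HSS_implies_lblwt}). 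The rate bound then follows not from entropy averaging but from a Singleton-type count: choose the $dt$ labels carrying the most coordinates (at least $n\cdot dt/s$ of them); projection onto the remaining at most $n(s-dt)/s$ coordinates must be injective on the code, so $\ell \le n(s-dt)/s$, i.e., the download rate is at most $(s-dt)/s$. Your ``linearization'' variant could be steered toward this correct target, but the target must be ``cannot reconstruct from $dt$ servers,'' not ``learns nothing from $dt$ servers.''
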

In fact, \cite{FIKW22} proved that the bound $(s-dt)/s$ holds even if the HSS scheme is allowed to ``amortize'' over $\ell$ instances of the problem.  That is, given $\ell$ secrets $x^{(1)}, \ldots, x^{(\ell)}$, shared \emph{independently} as $y_j^{(i)}$ for $j \in [s], i \in [\ell]$, the $j$'th server may do local computation on its shares $y_j^{(1)}, \ldots, y_j^{(\ell)}$ to compute an output share $z_j$; the output client needs to recover $f_1(x^{(1)}), \ldots, f_\ell(x^{(\ell)})$, given $z_1, \ldots, z_s$.  

Moreover, \cite{FIKW22} complemented this infeasibility result with a construction achieving download rate $(s - dt)/s$, provided that the amortization factor $\ell$ is a sufficiently large multiple of $(s-dt)$:
\begin{theorem}[\cite{FIKW22}] \label{thm:FIKWInformalUB}
Suppose that $s > dt$, and suppose that $\ell = j(s-dt)$ for some integer $j \geq \log_{|\F|}(s)$.
Then there is a $t$-private, $s$-server linear HSS scheme for $\POLYdmF$ (with CNF sharing, see Definition~\ref{def:CNF}), download rate at least $(s-dt)/s$, and amortization parameter $\ell$.
\end{theorem}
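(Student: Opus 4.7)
My plan is to combine $t$-private CNF sharing with an MDS code over an extension field $\F' = \F^j$, using the slack in CNF evaluation to match the MDS decoding structure. Since $|\F'| = |\F|^j \geq s$, there exists a Reed-Solomon $[s, s-dt]$ MDS code over $\F'$; let $G \in (\F')^{(s-dt) \times s}$ be its generator matrix, so every $s-dt$ columns of $G$ are $\F'$-linearly independent. I also fix an $\F$-basis $e_1,\dots,e_j$ of $\F'$, which I will use to pack $j$ outputs per coordinate.

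For sharing, I would use $t$-CNF to share each $x^{(i)}$, and expand each monomial of $f_i(x^{(i)})$ as a sum of terms $\tau$, where each $\tau$ is a product of $d$ CNF components and is locally computable by every server in a set $A_\tau\subseteq[s]$ with $|A_\tau|\geq s-dt$ (since $|A_\tau|$ is the complement of the union of $d$ sets of size $t$). I would then view the $\ell = j(s-dt)$ outputs as an element $\mathbf{u} \in (\F')^{s-dt}$ by partitioning the instances into $s-dt$ blocks of size $j$ and packing each block via $e_1,\dots,e_j$. The goal is to have each server $j$ locally produce $z_j \in \F'$ such that $G\mathbf{z}^{T} = \mathbf{u}^{T}$; then the output client recovers the $\ell$ outputs by applying $G$ and unpacking.

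The key step is choosing, for each term $\tau$ and each target coordinate $k \in [s-dt]$, coefficients $\alpha_{j,\tau,k} \in \F'$ supported on $j \in A_\tau$ so that the server-wise aggregate $z_j := \sum_\tau \sum_k \alpha_{j,\tau,k}\, \tau$ satisfies the decoding equation. By linearity, the condition $\sum_j G_{k',j}\, z_j = u_{k'}$ decomposes into an independent linear system per $\tau$: the restricted matrix $G_{A_\tau}$ applied to the vector of $\alpha$'s should produce a fixed target determined by the packing. The MDS property of $G$ guarantees $G_{A_\tau}$ has full row rank $s-dt$, so each such system has a solution. Rate and privacy then follow cleanly: each $z_j$ is one $\F'$-element ($j$ symbols of $\F$), so total download is $sj\log|\F|$ for $\ell = j(s-dt)$ output symbols, giving rate $(s-dt)/s$, and $t$-privacy is inherited from CNF since $\Eval$ is purely local. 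The main obstacle is bookkeeping---specifically, organizing the CNF expansion, the packing map, and the per-term linear systems so that the support constraint on each $\alpha_{j,\tau,k}$ is visibly compatible with MDS solvability---but no additional algebraic input is needed beyond the MDS property and CNF structure.
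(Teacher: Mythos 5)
Your proposal is correct and follows essentially the same route as the paper's own machinery for this result: CNF-share the inputs, take a systematic $[s,s-dt]$ Reed--Solomon (MDS) generator matrix over the extension field $\F_{q^j}$ (equivalently, the block-totally-nonsingular matrix obtained in Lemma~\ref{lem:TNtoBlockTN} via the embedding $\varphi$), and define $\Eval$ by solving a linear system that decomposes, monomial by monomial, into full-row-rank subsystems $G_{A_\tau}$ with $|A_\tau|\geq s-dt$ --- exactly the block-diagonal system and rank argument of Theorem~\ref{thm:lblwtImpliesHSS} (Claims~\ref{claim:correct} and~\ref{claim:fullrank}). The only cosmetic differences are that you work directly over $\F_{q^j}$ rather than over $\F_q$ with $j\times j$ blocks, and you overload the symbol $j$ as both the extension degree and a server index; neither affects correctness.
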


This state of affairs leaves open two questions, which motivate our work:
\begin{itemize}
    \item[(1)] Are there optimal-rate linear HSS schemes for $\POLYdmF$ beyond the example in Theorem~\ref{thm:FIKWInformalUB}?  Can we characterize them?
    \item[(2)] Is some amount of amortization necessary to achieve the optimal rate?  If so, which amortization parameters $\ell$ admit optimal-rate linear HSS schemes for $\POLYdmF$?
\end{itemize}

\subsection{Main Results}\label{sec:results}

We answer both questions (1) and (2) above. 
For all of our results, we consider \emph{CNF sharing}~\cite{ISN89} (see Definition~\ref{def:CNF}).  It is known that CNF sharing is universal for linear secret sharing schemes, in that $t$-CNF shares can be locally converted to shares of \emph{any} linear $t$-private secret sharing scheme~\cite{CDI05}.

\paragraph{High-level answers to Questions (1) and (2).}
\begin{itemize}
\item[(1)] Our results give a {complete characterization} of the $\Rec$ functions for optimal-rate linear HSS schemes for $\POLYdmF$.  Our characterization is in terms of linear codes with optimal \emph{labelweight}, a notion that we introduce, which generalizes distance.  Similar to how optimal-distance (MDS) codes are characterized by a property of their generator matrices, we are also able to characterize the generator matrices of optimal-labelweight codes.

We hope that the notion of labelweight will find other uses, and we believe that this characterization will be useful for studying linear HSS schemes beyond our work.

\item[(2)] Using our characterization, we show that the amortization parameters $\ell$ given in Theorem~\ref{thm:FIKWInformalUB} are in fact (usually) \emph{all} of the admissible $\ell$'s.\footnote{The parenthetical ``(usually)'' is due to a corner case; it is possible that $\ell = j(s - dt)$ with $j = \lceil \log_{|\F|}(s) \rceil - 1$, which is one smaller than the bound given in Theorem~\ref{thm:FIKWInformalUB}.}  In particular, some amortization is required, and there are parameter regimes where one cannot obtain optimal download rate with amortization parameters $\ell$ other than those in the construction in \cite{FIKW22}. Moreover, we give a construction that \emph{slightly} improves on the construction in \cite{FIKW22}, closing the gap between the lower and upper bounds in a few more parameter regimes.  

We find this answer to Question (2) somewhat surprising; we expected that the particular form of $\ell$ in Theorem~\ref{thm:FIKWInformalUB} was an artifact of the construction, but it turns out to be intrinsic.
\end{itemize}

 We describe our results in more detail below.

 \paragraph{(1) A characterization of optimal-rate linear HSS schemes.}  Our main result is a characterization of linear HSS schemes for $\POLYdmF$ with optimal download rate $(s - dt)/s$.  In particular, we show that the $\Rec$ algorithms for such schemes (with CNF sharing) are \emph{equivalent} to a coding-theoretic notion that we introduce, which we refer to as codes with \emph{optimal labelweight.}   
 \begin{definition}[Labelweight]\label{def:LW}
     Let $\cC \subseteq \F^n$ be a linear code of dimension $\ell$.  Let $\mathcal{L}:[n] \to [s]$ be any function, which we refer to as a \emph{labeling} function.  The \emph{labelweight} of $\mathbf{c} \in \cC$ is the number of distinct labels that the support of $\mathbf{c}$ touches:
     \[ \Delta_{\cL}(\mathbf{c}) = |\{ \cL(i) \,: \, i \in [n], c_i \neq 0 \}|.\]
     The \emph{labelweight} of $\cC$ is the minimum labelweight of any nonzero codeword:
     \[ \Delta_{\cL}(\cC) = \min_{c \in \cC\setminus \{0\}} \Delta_{\cL}(\mathbf{c}).\]
 \end{definition}
 In particular, if $s = n$ and $\mathcal{L}(j) = j$ for all $j \in [n]$, then $\Delta_\cL(\cC)$ is just the minimum Hamming distance of $\cC$.  Thus, the labelweight of a code generalizes the standard notion of distance.

Our main characterization theorem is the following.
\begin{theorem}[Optimal linear HSS schemes are equivalent to optimal labelweight codes. (Informal, see Lemma~\ref{lem:HSS_implies_lblwt} and Theorem~\ref{thm:lblwtImpliesHSS})]\label{thm:maininformal}
Let $\pi = (\Share, \Eval, \Rec)$ be a $t$-private, $s$-server linear HSS for $\POLYdmF$, with download rate $(s - dt)/s$.  Let $G$ be the matrix that represents $\Rec$ (see Observation~\ref{obs: linear reconstruction is matrix multiplication}).  Then there is some labeling function $\cL$ so that $G$ is the generator matrix for a code $\cC$ with rate $(s-dt)/s$ and with $\Delta_{\cL}(\cC) \geq dt + 1$.

Conversely, suppose that there is a labeling function $\cL:[n] \to [s]$ and a linear code $\cC \subseteq \F^n$ with rate $(s-dt)/s$ and with $\Delta_{\cL}(\cC) \geq dt + 1$.  Then any generator matrix $G$ of $\cC$ describes a linear reconstruction algorithm $\Rec$ for an $s$-server $t$-private linear HSS for $\POLYdmF$ that has download rate $(s - dt)/s$.
\end{theorem}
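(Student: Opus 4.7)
The theorem has two directions, and my strategy for each is to translate the linear reconstruction identity into an explicit polynomial identity in the free CNF-piece variables and read off the combinatorial condition on $G$. For the forward direction, let $G \in \F^{\ell \times n}$ be the matrix representing $\Rec$, so that $n = \sum_j n_j$ is the total downloaded length and $\ell$ is the amortization parameter; the optimal-rate hypothesis gives $\ell/n = (s-dt)/s$. Set $\cL(i)$ to be the server owning coordinate $i$ and take $\cC$ to be the row span of $G$. I model CNF sharing by declaring the pieces $\{x_A^{(i),k}\}_{A \in \binom{[s]}{s-t},\, i \in [\ell],\, k \in [m]}$ to be free variables and the secrets to be $x^{(i),k} := \sum_A x_A^{(i),k}$; note that each $z_j = \Eval(\ldots, j, y_j)$ depends only on pieces with $j \in A$, so its canonical polynomial representation (of degree $<|\F|$ in each variable) uses only those variables. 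Now assume for contradiction that a nonzero codeword $\mathbf{c} = \mathbf{u}^\top G$ has support confined to the coordinates of some $T' \subseteq [s]$ with $|T'| \leq dt$. Correctness then reads
\[ \sum_{j \in T'} \mathbf{c}_j \cdot \mathbf{z}_j \;=\; \sum_{i=1}^{\ell} u_i \, f_i(x^{(i)}) \]
as a polynomial identity in $\F[\text{pieces}]$, where $\mathbf{c}_j, \mathbf{z}_j$ denote the restrictions to server $j$'s coordinates.

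The main combinatorial step is to pick a tuple of $f_i$'s and a piece-monomial that the left-hand side provably cannot contain. I specialize to $f_i(y) = y_1 \cdots y_d$ for every $i$ (assuming $m \geq d$; the case $m<d$ is handled by an analogous power substitution). The right-hand side becomes $\sum_i u_i \sum_{(A_1,\ldots,A_d)} \prod_k x_{A_k}^{(i),k}$, so each monomial $\prod_k x_{A_k}^{(i_0),k}$ appears with coefficient $u_{i_0}$. On the left, such a monomial can appear only if some $j \in T'$ lies in every $A_k$, i.e.\ $T' \cap \bigcap_k A_k \neq \emptyset$. I now choose $(A_1,\ldots,A_d)$ so that the complements $B_k := [s] \setminus A_k$ (each of size $t$) together cover $T'$: partition $T'$ into $d$ parts of size $\leq t$ and extend each to a $t$-subset $B_k$, which is possible precisely because $|T'| \leq dt$. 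Then $\bigcap_k A_k$ is disjoint from $T'$, so the monomial's left-hand side coefficient is $0$ while the right-hand side coefficient is $u_{i_0}$, forcing $u_{i_0} = 0$ for every $i_0$ and contradicting $\mathbf{u} \neq 0$. This establishes $\Delta_\cL(\cC) \geq dt+1$.

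For the converse direction, I first observe that $\Delta_\cL(\cC) \geq dt+1$ is equivalent to the statement that for every $M \subseteq [s]$ with $|M| \geq s - dt$, the submatrix $G_M$ of $G$ consisting of the columns labeled by $M$ has full row rank $\ell$ (otherwise there would be a nonzero codeword supported in $\cL^{-1}([s] \setminus M)$). I then pre-select, once and for all, a right-inverse $\mathbf{w}_{M,i} \in \F^{n_M}$ with $G_M \mathbf{w}_{M,i} = e_i$ for every such $M$ and every $i \in [\ell]$. To build the HSS, expand each $f_i(x^{(i)})$ as the sum of its CNF-piece monomials $\tau = \prod_k x_{A_k^\tau}^{(i),k}$, each known to the servers in $M^\tau := \bigcap_k A_k^\tau$ (of size $\geq s - dt$ by inclusion-exclusion). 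Define server $j$'s output share $z_j$ by setting its coordinate $r \in \cL^{-1}(j)$ equal to $\sum_{(i,\tau):\, j \in M^\tau} \tau \cdot (\mathbf{w}_{M^\tau, i})_r$, which server $j$ can compute locally from the pieces it holds. Summing over all servers yields $G \mathbf{z} = \sum_{(i,\tau)} \tau\, e_i = (f_1(x^{(1)}), \ldots, f_\ell(x^{(\ell)}))$, so correctness holds; privacy is inherited directly from CNF sharing, and the download rate is $\ell/n = (s-dt)/s$ by construction.

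I expect the forward direction to be the main obstacle, and two technicalities there merit care: $\Eval$ is a priori an arbitrary function rather than a polynomial, which I handle by passing to its unique canonical representation as a polynomial of degree $<|\F|$ in each variable; and the identity might appear to live modulo the constraint $\sum_A x_A = x$, but by treating $x$ as \emph{defined} to be the sum $\sum_A x_A$, it becomes an unconstrained identity in $\F[\text{pieces}]$, so coefficients of distinct monomials can be compared individually. Together these two points reduce the forward direction to the clean combinatorial covering argument above.
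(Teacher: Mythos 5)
Your proposal is correct and follows essentially the same route as the paper: the forward direction is the paper's monomial-covering argument (Lemma~\ref{lem: query spread} inlined — choosing $t$-sets whose union covers the $\leq dt$ supporting servers so that a CNF monomial appearing in the target cannot appear in any local output), and the converse is the paper's construction of $\Eval$ from the full-row-rank submatrices $G_M$ (Corollary~\ref{cor: generator of good labelweight code is MDS like}), with your right-inverses $\mathbf{w}_{M^\tau,i}$ simply being an explicit solution of the paper's linear system $S\mathbf{e}=\mathbf{g}$.
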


We remark that the converse direction is constructive: given the description of such a code $\cC$, the proof (see Theorem~\ref{thm:lblwtImpliesHSS}) gives an efficient construction of the $\Eval$ function as well as the $\Rec$ function.

Given Theorem~\ref{thm:maininformal}, we now want to know when optimal labelweight codes exist.  Our second main result---which may be of independent interest---characterizes these codes in terms of their generator matrices.  We describe this characterization more in the Technical Overview (Section~\ref{sec:tech}), and next we describe the implications for HSS schemes.

\paragraph{(2) Understanding the amortization parameter $\ell$.}

Using our characterization above, we are able to nearly completely classify which amortization parameters $\ell$ admit download-optimal linear HSS schemes for $\POLYdmF$.
Given Theorem~\ref{thm:maininformal}, it suffices to understand when optimal labelweight codes exist.  We show the following theorem.

\begin{theorem}[Limitations on optimal labelweight codes.  (Informal, see Theorem~\ref{thm:lblwt_limitations})]\label{thm:ell_LB_informal}
    Suppose that $G \in \F_q^{\ell \times n}$ is the generator matrix for a code $\cC$ with rate $\ell/n = (s-dt)/s$, and suppose that there is a labeling function $\cL:[n] \to [s]$ so that $\Delta_{\cL}(\cC) \geq dt + 1$.  Then $\ell = j(s -dt)$ for some integer $j$, with $j \geq \max\{ \log_{q}(s - dt + 1), \log_{q}(dt + 1) \}$.
\end{theorem}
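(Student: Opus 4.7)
The statement has two pieces---a divisibility claim that $\ell = j(s-dt)$ for some integer $j$, and two logarithmic lower bounds on $j$---which I would establish in sequence.

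For the divisibility claim, set $n_i := |\cL^{-1}(i)|$. The hypothesis $\Delta_\cL(\cC) \geq dt+1$ says that for every $T \subseteq [s]$ with $|T|=dt$, no nonzero codeword of $\cC$ is supported on $\cL^{-1}(T)$; equivalently, the columns of $G$ outside $\cL^{-1}(T)$ span all of $\F_q^\ell$, forcing $\sum_{i\in T} n_i \leq n-\ell$. Combined with the rate identity $n - \ell = n \cdot dt/s$ and the averaging bound $\sum_{i \in T_{\max}} n_i \geq dt \cdot n/s$ (where $T_{\max}$ consists of the $dt$ heaviest labels), these inequalities collapse to equalities, so $n_i = n/s$ for every $i$. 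Writing $j := n/s$ gives $\ell = j(s-dt)$.

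For the bound $q^j \geq dt+1$, I would group the $n = js$ positions by label into $s$ super-positions of size $j$; the labelweight condition then reads exactly that any $s-dt$ super-positions determine the codeword. Put $G$ into block-systematic form $[I_\ell \mid H]$, where $H = (h_{\lambda, i})_{\lambda \in [s-dt], i \in [dt]}$ is an $(s-dt) \times dt$ array of $j \times j$ blocks over $\F_q$; the super-position condition translates to: every square block submatrix of $H$ is invertible over $\F_q$. Using the freedom to change basis within each label (which preserves labelweight), normalize so that $h_{\lambda, 1} = I = h_{1, i}$ for all $\lambda, i$. The $1 \times 1$ block invertibilities then say each $h_{2, i}$ is invertible, and a Schur-complement computation on the $2 \times 2$ submatrices involving rows $1$ and $2$ says $h_{2, i} - h_{2, i'}$ is invertible for all $i \neq i'$. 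Hence the $dt+1$ matrices $\{0, h_{2,1}, h_{2,2}, \ldots, h_{2,dt}\}$ have pairwise invertible differences, and evaluating them at any nonzero $v \in \F_q^j$ produces $dt+1$ distinct elements of $\F_q^j$, so $q^j \geq dt+1$.

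For the second bound $q^j \geq s-dt+1$, I would pass to the dual. Since $\cC|_{\cL^{-1}(U)} = \F_q^{|U|j}$ for every $|U| = s-dt$ (by the injectivity from the divisibility step), any $c' \in \cC^\perp$ supported on $\cL^{-1}(U)$ pairs to zero against every element of $\F_q^{|U|j}$ and must vanish; hence $\Delta_\cL(\cC^\perp) \geq s-dt+1$, and $\cC^\perp$ (of dimension $j \cdot dt$) is itself an optimal labelweight code with the roles of $dt$ and $s-dt$ swapped, so applying the previous step to it yields $q^j \geq s-dt+1$. The main obstacle throughout is the block-matrix bound $q^j \geq dt+1$, a block analogue of the classical $[n,k]_q$ MDS Singleton bound $q \geq n-k+1$; I expect the delicate point to be restricting attention to normalizations of $H$ that come from basis changes within individual labels (the operations that preserve labelweight), and then verifying via Schur complement that $2 \times 2$ block submatrix invertibility is exactly what gives pairwise invertibility of differences among the claimed $dt+1$ matrices.
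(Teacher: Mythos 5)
Your proposal is correct and follows essentially the same route as the paper: balancedness of the labeling forcing $\ell=j(s-dt)$, reduction to a block-systematic generator $[I\mid H]$ with $H$ block-totally-nonsingular, normalization to identity blocks, the $2\times 2$ block determinant identity producing $dt+1$ matrices with pairwise invertible differences, and a counting bound in $\F_q^j$. The only cosmetic differences are that your averaging argument replaces the paper's three-case analysis for balancedness and your dual-code step replaces the paper's ``identical argument applied to block-columns''; note that both your argument and the paper's implicitly require $\min\{dt,\,s-dt\}\ge 2$ so that a second block-row and block-column exist.
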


This result
 \emph{nearly} matches the feasibility result of \cite{FIKW22} (Theorem~\ref{thm:FIKWInformalUB} above).  That is, Theorem~\ref{thm:FIKWInformalUB}
allows for $\ell = j(s-dt)$ for any integer $j \geq \log_{q}(s)$.
As we always have $\max(s - dt + 1, dt + 1) \geq s/2$, the conclusion in Corollary~\ref{thm:ell_LB_informal} implies that $j \geq \log_{q}(s/2) = \log_{q}(s) - 1/\log_2(q)$.  As $j$ must be an integer, this exactly matches the amortization parameter in Theorem~\ref{thm:FIKWInformalUB} whenever 
\[ \left\lceil \frac{ \log(s) - 1 }{\log(q) } \right\rceil = \left\lceil \frac{ \log (s) }{ \log(q) }  \right \rceil, \]
which holds for most values of $s$ and $q$ when $q$ is large. Moreover, we give a construction that \emph{very} slightly improves on the one from Theorem~\ref{thm:FIKWInformalUB}, which exactly matches Theorem~\ref{thm:ell_LB_informal} for even more settings of $s$:

\begin{theorem}[Construction of optimal labelweight codes. (Informal, see Theorem~\ref{thm:lblwt})]\label{thm:lblwt_Informal}
Let $j$ be any integer so that
\[ j \geq \begin{cases}
            \log_q(s-1) & q^j \text{ odd, or } (s-dt) \not\in \lbrace 3, q^j-1 \rbrace\\
            \log_q(s-2) & q^j \text{ even, and } (s-dt) \in \lbrace 3, q^j-1 \rbrace
        \end{cases}.
        \]
There is an explicit construction of a code $\cC \subseteq \F_q^n$ with dimension $\ell = j(s-dt)$, block length $n = js$, (and hence rate $(s-dt)/s$), and a labeling function $\cL:[n] \to [s]$ so that $\Delta_\cL(\cC) \geq dt + 1$.
\end{theorem}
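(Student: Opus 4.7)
The plan is to construct $\cC$ by \emph{lifting} an $[s, s-dt]$ MDS code over the extension field $\F_{q^j}$ down to $\F_q$. Specifically, I would fix any $\F_q$-linear isomorphism $\phi : \F_{q^j} \to \F_q^j$ (obtained by choosing an $\F_q$-basis of $\F_{q^j}$), take an $\F_{q^j}$-linear MDS code $\cC' \subseteq \F_{q^j}^s$ of dimension $s-dt$, and set
\[ \cC = \{ (\phi(c_1), \phi(c_2), \ldots, \phi(c_s)) \suchthat (c_1,\ldots,c_s) \in \cC' \} \subseteq \F_q^{js}. \]
Since $\phi$ is $\F_q$-linear and $\cC'$ is $\F_q$-linear (by restriction of scalars), $\cC$ is an $\F_q$-linear code of block length $n = js$ and $\F_q$-dimension $\ell = j(s-dt)$, as required. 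I then define the labeling $\cL:[n] \to [s]$ by $\cL((i-1)j + k) = i$ for $i \in [s]$ and $k \in [j]$, so that label $i$ collects the $j$ coordinates arising from the $i$-th symbol of the underlying $\cC'$.

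The labelweight bound $\Delta_\cL(\cC) \geq dt + 1$ would then follow directly from the MDS property of $\cC'$. Indeed, any nonzero $(\phi(c_1), \ldots, \phi(c_s)) \in \cC$ comes from a nonzero codeword $(c_1, \ldots, c_s) \in \cC'$ (as $\phi$ is a bijection), which by Singleton has Hamming weight at least $s - (s-dt) + 1 = dt + 1$ over $\F_{q^j}$; each nonzero $c_i$ produces a nonzero block $\phi(c_i) \in \F_q^j$ that touches precisely label $i$, so the lifted codeword touches at least $dt+1$ distinct labels.

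The substantive step is therefore producing $\cC'$ under the stated constraints on $j$. In the generic case $j \geq \log_q(s-1)$, we have $s \leq q^j + 1$, so $\cC'$ is obtained by puncturing an extended Reed--Solomon code of length $q^j + 1$ (which is MDS of every dimension $\leq q^j+1$) down to length $s$; puncturing preserves the MDS property. In the exceptional case $j \geq \log_q(s-2)$ with $q^j$ even and $s - dt \in \{3, q^j - 1\}$, we have $s \leq q^j + 2$, and I would instead puncture the \emph{doubly extended} Reed--Solomon code of length $q^j + 2$, which is known to be MDS precisely when the dimension lies in $\{3, q^j - 1\}$ and $q^j$ is even (this is the only known exception to the MDS conjecture, arising from hyperovals in $\mathrm{PG}(2, q^j)$). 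I expect the only subtle point to be invoking the correct MDS existence result in the even exceptional case and verifying that the punctured dimension is still $s-dt$; the lifting-and-labeling reduction itself is essentially immediate from Definition~\ref{def:LW}.
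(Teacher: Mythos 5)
Your construction is correct, and at bottom it is the same as the paper's: both start from an $[s,\,s-dt]$ (doubly) extended Reed--Solomon/MDS code over $\F_{q^j}$ (the paper's Theorem~\ref{thm: MDS matrices meeting mds conjecture}, which is exactly your extended/hyperoval case split) and descend it to $\F_q$ by expanding each $\F_{q^j}$-symbol into a block of $j$ coordinates, labeling each block with one server index. Up to the choice of basis, your restriction-of-scalars code is precisely the code the paper generates by $[I \mid \varphi(B)]$. Where you genuinely diverge is in \emph{verifying} the labelweight bound: the paper routes through its characterization machinery --- put the MDS generator in systematic form $[I\mid B]$, show $B$ is totally nonsingular (Theorem~\ref{thm: mds iff tns}), apply the multiplication-matrix embedding entrywise to get a block totally nonsingular matrix (Lemma~\ref{lem:TNtoBlockTN}), and invoke Lemma~\ref{lem:BTN_iff_goodLabelWt} --- whereas you observe directly that a coordinatewise $\F_q$-isomorphism $\phi$ sends nonzero symbols to nonzero blocks, so the labelweight of the lifted code equals the Hamming weight of the parent code, which is at least $dt+1$ by the MDS property. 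Your verification is shorter and more elementary for this feasibility direction; the paper's detour through block-TN matrices is not wasted, though, because the same lemma also yields the converse (every optimal labelweight code has a generator of this form), which is what drives the infeasibility result Theorem~\ref{thm:lblwt_limitations}. Your puncturing step to reach length $s$ and the dimension bookkeeping are both fine.
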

Given Theorem~\ref{thm:maininformal} (our equivalence between codes with good labelweight and linear HSS schemes), Theorem~\ref{thm:lblwt_Informal} immediately implies that there is a download-optimal linear HSS scheme for $\POLYdmF$ with amortization parameter $\ell = j(s-dt)$ for any $j$ as in Theorem~\ref{thm:lblwt_Informal}.  (See Corollary~\ref{cor:amortization} for a formal statement).

The HSS result implied by Theorem~\ref{thm:lblwt_Informal} is quite close to the existing result from \cite{FIKW22} (Theorem~\ref{thm:FIKWInformalUB}).\footnote{In addition to the quantitative results being quite close, the constructions themselves are also similar.  This is not surprising, given that one of our main results is that such schemes must be extremely structured.  We discuss the relationship between our construction and that of \cite{FIKW22} in the technical overview below.}  The only quantitative difference between Theorem~\ref{thm:lblwt_Informal} and the result of \cite{FIKW22} is that we can take $j$ to be either $\lceil \log_q(s-1) \rceil$ or $\lceil \log_q(s-2) \rceil$, rather than $\lceil \log_q(s) \rceil$.  Meanwhile, Theorem~\ref{thm:ell_LB_informal} says that we must have $j \geq \lceil \log_q(s/2 + 1) \rceil$.  So there are a few values of $s$ where Theorem~\ref{thm:lblwt_Informal} is tight (matching Theorem~\ref{thm:ell_LB_informal}) but Theorem~\ref{thm:FIKWInformalUB} (the construction from~\cite{FIKW22}) is not.  For example, if $q = 2$ and $s = 2^r + 1$, then 
\[ \lceil \log_q(s-2) \rceil = \lceil \log_q(s/2 + 1) \rceil = r,\]
but $\lceil \log_q(s) \rceil = r + 1$ is larger.

There are still a few parameter regimes where there is a gap (of size one) between the $j$'s that work in Theorem~\ref{thm:lblwt_Informal} and bound of Theorem~\ref{thm:ell_LB_informal}.  We conjecture that in fact the \emph{construction} is optimal, and Theorem~\ref{thm:ell_LB_informal} is loose.
In more detail, in Section~\ref{sec:MDS}, we establish a connection between constructions of download-optimal HSS schemes and the \emph{MDS conjecture} over extension fields, which  may be of independent interest.  Further, we show that, assuming the MDS conjecture, our construction of Theorem~\ref{thm:lblwt_Informal} is optimal within a natural class of constructions, even when it does not match our infeasibility result in Theorem~\ref{thm:ell_LB_informal}.

\subsection{Technical Overview}\label{sec:tech}

In this section, we give a high-level overview of our techniques.

\paragraph{Relationship between download-optimal HSS schemes and optimal labelweight codes.}

Theorem~\ref{thm:maininformal} states that download-optimal linear HSS schemes are equivalent to optimal-labelweight codes.  To give some intuition for the connection, we explain the forward direction, which is simpler.  For simplicity, suppose that the function $f$ is identity, amortized over $\ell$ secrets $x^{(1)}, \ldots, x^{(\ell)} \in \F$. 
 This problem is called \emph{HSS for concatenation} in \cite{FIKW22}; the goal is to share the $\ell$ secrets independently, and then communicate them to the output client using significantly less information than simply transmitting $t+1$ shares of each secret.
 
 Consider a linear HSS scheme for this problem.  By definition, the reconstruction algorithm $\Rec$ is linear, and so can be represented by a matrix $G \in \F^{\ell \times n}$, where $n$ is the total number of symbols of $\F$ sent by all of the servers together.  In more detail, we view each output share $z_j$ as a vector over $\F$, and concatenate all of them to obtain a vector $\mathbf{z} \in \F^n$.  Then we can linearly recover the $\ell$ concatenated secrets from $\mathbf{z}$:
\[ \begin{pmatrix} x^{(1)} \\ \vdots \\ x^{(\ell)} \end{pmatrix} = \Rec(z_1, \ldots, z_s) = G\mathbf{z}. \]  

Now we can define a labeling function $\mathcal{L}:[n] \to [s]$ so that $\mathcal{L}(r) \in [s]$ is the identity of the server sending the $r$'th symbol in $\mathbf{z}$.  We claim that $G$ is the generator matrix\footnote{When we say that $G$ is the \emph{generator matrix} of $\cC \subseteq \F^n$, we mean that $\cC$ is the rowspan of $G$.} of a code $\cC$ with $\Delta_{\cL}(\cC) \geq t + 1$.  (Recall that in HSS for concatenation, $d = 1$, so $t+1 = dt+1$ is the bound on $\Delta_{\cL}(\cC)$ that we want in Theorem~\ref{thm:maininformal}).  To see this, suppose that there were some codeword $\mathbf{m}^T G$ for some $\mathbf{m} \in \F^\ell$ with labelweight at most $t$.  But consider the quantity
\[ X := \sum_{i=1}^\ell m_i x^{(i)} = \mathbf{m}^T G \mathbf{z}. \]
If $\mathbf{m}^T G$ had labelweight at most $t$ (with this labeling $\mathcal{L}$), that means that the quantity $X$ can be computed using messages coming from only $t$ of the servers.
But this contradicts the $t$-privacy of the original secret sharing scheme.  Indeed, suppose without loss of generality that $m_1 \neq 0$.  Then if $x^{(2)} = \cdots = x^{(\ell)} = 0$, some set of $t$ servers would be able to recover $X = m_1 x^{(1)}$ and hence $x^{(1)}$, and this should be impossible.
For the full proof of this direction, we need to generalize to larger $d$'s and more general functions, but the basic idea is the same.  

The converse, showing that any optimal-labelweight code $\cC$ implies an optimal HSS scheme, is a bit trickier.  The main challenge is that the connection above tells us what $\Rec$ should be---it should be given by the generator matrix of $\cC$---but it does not tell us what $\Eval$ should be, or that an appropriate $\Eval$ function even exists.  To find $\Eval$, we view the output shares as vectors of polynomials in the input shares.  That is, each server must return some function of the shares that it holds, and over a finite field, every function is a polynomial.  In this view, we can set up an affine system to solve for $\Eval$, where the variables are the coefficients that appear in each server's output polynomials.  Then we show that this system has a solution, and that this solution indeed leads to a legitimate $\Eval$ function.

\paragraph{Characterization of Optimal Labelweight Codes.}
In order to understand when optimal labelweight codes exist, we give a characterization of them in terms of their generator matrices. We show that the generator matrices of optimal labelweight codes can be taken to be \emph{block-totally-nonsingular}.  We defer the formal definition to Definition~\ref{def:blockMDS}, but informally, a block-totally-nonsingular matrix is made up of $j \times j$ invertible blocks $A \in GL(\F, j)$, with the property that any square sub-array of blocks (not necessarily contiguous) is full-rank.
\begin{lemma}[Characterization of optimal label-weight codes. (Informal, see Lemma~\ref{lem:BTN_iff_goodLabelWt})]\label{lem:charblockmat_informal}

Suppose that $\cC \subseteq \F^n$ is a code of rate $(s - dt)/s$, and suppose that there is a labeling function $\cL:[n] \to [s]$ so that $\Delta_{\cL}(\cC) \geq dt + 1$.  Then, up to a permutation of the coordinates, there is a generator matrix $G$ for $\cC$ that looks like 
\[ G = [ I | A ] \]
where $A$ is block-totally-nonsingular.  Conversely, any such matrix is the generator matrix for a code $\cC$ with $\Delta_{\cL}(\cC) \geq dt + 1$, where $\cL:[n] \to [s]$ is the labeling function $\cL(x) = \lceil x/j \rceil$.
\end{lemma}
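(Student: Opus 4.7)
The plan is to translate the labelweight condition into an assertion about which coordinate projections of $\cC$ are injective, and then to read off the block structure of the generator matrix. Write $n = js$ and $\ell = j(s-dt)$. The condition $\Delta_{\cL}(\cC) \geq dt+1$ is equivalent to the following statement: for every $T \subseteq [s]$ with $|T| \geq s - dt$, the projection of $\cC$ onto the coordinates $\cL^{-1}(T)$ is injective, since a nonzero codeword vanishing on $\cL^{-1}(T)$ would touch at most $s - |T| \leq dt$ distinct labels.

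For the forward direction, I would first show that $\cL$ is \emph{balanced}, i.e., $|\cL^{-1}(a)| = j$ for every $a \in [s]$. Injectivity forces $|\cL^{-1}(T)| \geq \ell$ for every $T$ with $|T| = s - dt$, and averaging over all such $T$ gives
\[ \frac{1}{\binom{s}{s-dt}} \sum_{|T| = s-dt} |\cL^{-1}(T)| = \frac{s-dt}{s} \cdot n = \ell, \]
so every term equals $\ell$, forcing $|\cL^{-1}(a)| = j$ uniformly. In particular, for every $T$ of size $s-dt$, the $\ell$ columns of any generator matrix at indices $\cL^{-1}(T)$ form a basis of $\F^\ell$. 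Taking $S_1 = \{1, \dots, s-dt\}$ as the information set, and permuting coordinates so that $\cL^{-1}(S_1)$ appears first with labels grouped in order, row-reduction yields $G = [I_\ell \mid A]$ with $A \in \F^{\ell \times jdt}$. The identity block induces a partition of the row index set $[\ell]$ into groups $R_a$ of size $j$ indexed by $a \in S_1$, and the columns of $A$ are partitioned into blocks of size $j$ indexed by labels $b \in S_2 := \{s-dt+1, \dots, s\}$.

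To verify block-totally-nonsingularity, fix $T_1 \subseteq S_1$ and $T_2 \subseteq S_2$ with $|T_1| = |T_2| = k$, and set $T = (S_1 \setminus T_1) \cup T_2$, so $|T| = s-dt$. Then $\cL^{-1}(T)$ indexes a basis of $\F^\ell$. The columns at $\cL^{-1}(S_1 \setminus T_1)$ are standard basis vectors spanning the coordinate subspace indexed by $\bigcup_{a \notin T_1} R_a$; quotienting out this subspace reduces the spanning condition on the remaining columns at $\cL^{-1}(T_2)$ exactly to the invertibility of the $jk \times jk$ submatrix $A[\bigcup_{a \in T_1} R_a,\, \cL^{-1}(T_2)]$, which is the desired $k \times k$ block sub-array.

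The converse reverses this chain: given $G = [I \mid A]$ with $A$ block-totally-nonsingular under the labeling $\cL(x) = \lceil x/j \rceil$, fix any $T \subseteq [s]$ with $|T| = s-dt$, write $T = (S_1 \setminus T_1) \cup T_2$ with $|T_1| = |T_2| = k$, and use the invertibility of the corresponding block sub-array of $A$ to conclude that $\cL^{-1}(T)$ indexes a basis, hence the projection is injective; monotonicity handles $|T| > s-dt$. I expect the only real difficulty to be notational bookkeeping: cleanly tracking the correspondence between the row index set $[\ell]$, the column index set $\cL^{-1}(S_1)$ (identified via the identity block), and the label-induced partition, so that the ``$k \times k$ block sub-array'' in the definition of block-totally-nonsingular exactly matches the invertibility statement produced by the information-set argument.
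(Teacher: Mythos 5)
Your proposal is correct, and while the overall architecture matches the paper's (balancedness of $\cL$, then systematic form via invertibility of the label-restricted submatrices, then the block-TN verification), two of your local arguments are genuinely different and worth noting. First, your proof of balancedness is an averaging argument: injectivity of the projection onto $\cL^{-1}(T)$ forces $|\cL^{-1}(T)| \geq \ell$ for every $T$ of size $s-dt$, and since the average of these quantities over all such $T$ is exactly $\ell$, equality holds everywhere (a one-step exchange argument then equalizes the individual $|\cL^{-1}(a)|$). The paper's Lemma~\ref{lem: dl is balanced} instead runs a three-case extremal analysis depending on how $s$ compares to $2dt$, exhibiting in each case a codeword of labelweight at most $dt$; your version is shorter and avoids the case split entirely. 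Second, for the block-TN property you argue in the positive direction via information sets: the columns indexed by $\cL^{-1}((S_1\setminus T_1)\cup T_2)$ form a basis, and after quotienting by the span of the standard basis vectors coming from $S_1 \setminus T_1$, this is equivalent to invertibility of the complementary block minor of $A$. The paper's Lemma~\ref{lem:BTN_iff_goodLabelWt} proves the contrapositive: a kernel vector of a singular sub-array is extended to a message vector whose codeword has labelweight at most $dt$. These are dual formulations of the same fact (the paper itself uses your complementary-minor identity elsewhere, in Lemma~\ref{lem: total non-singular implies a larger MDS matrix (apx)}), so neither buys extra generality, but your uniform reformulation of $\Delta_\cL(\cC) \geq dt+1$ as ``every $s-dt$ labels form an information set'' makes both directions of the equivalence fall out of a single statement. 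The only thing to make explicit in a full write-up is the bookkeeping you already flag: after permutation, the non-systematic columns must also be grouped by label into contiguous width-$j$ blocks so that the sub-arrays in Definition~\ref{def:blockMDS} line up with the sets $\cL^{-1}(T_2)$.
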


To give some intuition for the lemma, we first explain where the ``block'' structure comes from. We show (Lemma~\ref{lem: dl is balanced}) that in fact any labeling function $\cL$ so that $\Delta_{\cL}(\cC) = dt + 1$ must be \emph{balanced}, meaning that each set $\cL^{-1}(i)$ for $i \in [s]$ has the same cardinality, $|\cL^{-1}(i)| = j$, for some integer $j$.  The blocks then correspond to the sets $\mathcal{L}^{-1}(i)$ for $i \in [s]$. 

Next, we give some intuition for the ``totally non-singular'' part.  The basic idea is that if there were a square sub-array of blocks that were singular, then there would be a vector $\mathbf{m} \in \F^\ell$ with support on only the relevant blocks, so that $\mathbf{m}^T A = 0$.  If $G = [ I | A ]$ as in Lemma~\ref{lem:charblockmat_informal}, then this implies that  $\mathbf{m}^T G$ has support only on the labels that appear on the first $\ell$ columns of $G$, which it turns out is small enough to contradict the optimal labelweight property.  We refer to the proof of Lemma~\ref{lem:BTN_iff_goodLabelWt} for details on both points.

\paragraph{Applications to Amortization.}
We apply the machinery described above to Question (2), about what amortization parameters $\ell$ are possible, as follows.  As described above, we show that optimal-download-rate HSS schemes are equivalent to optimal labelweight codes, which in turn are equivalent to to block-totally-nonsingular matrices.  At this point, we already know that the amortization parameter $\ell$ must be equal to $j  (s - dt)$ for some integer $j$; this follows from the $j \times j$ block structure of block-totally-nonsingular matrices. It remains only to understand for which $j$'s such matrices exist.  A limitation on such matrices follows from a standard counting argument, and this implies our infeasibility result, given formally as Theorem~\ref{thm:lblwt_limitations}.

As noted above, there is already a near-optimal construction of HSS schemes in \cite{FIKW22} (Theorem~\ref{thm:FIKWInformalUB}).  In our language, that construction is based on systematic generator matrices of \emph{Reed-Solomon Codes.}  After an appropriate conversion to block form, such matrices are of the form $[I|A]$, where $A$ is a block-totally-nonsingular matrix.  Up to the conversion to block-form, the famous \emph{MDS Conjecture} (Conjecture~\ref{conj: MDS conjecture}) implies that Reed-Solomon codes are \emph{nearly} optimal in this context, but it is known that a slight improvement is possible: instead of requiring $j \geq \log_q(s)$ as in Theorem~\ref{thm:FIKWInformalUB}, one can get $j \geq \log_q(s-2)$  or $\log_q(s-1)$, as in Theorem~\ref{thm:lblwt_Informal}.
The change is simple: one essentially adds one or two more (block) columns to the generator matrix (see Appendix~\ref{sec: rs with extra col (apx)} and the proof of Theorem~\ref{thm:lblwt}).
Thus, we can slightly improve on the construction of \cite{FIKW22} (Theorem~\ref{thm:FIKWInformalUB}) by making this slight improvement to the underlying code.

\subsection{Related Work}\label{sec:related}
Linear HSS schemes (for, e.g. low-degree polynomials) are implicit in classical protocols for tasks like secure multiparty computation and private information retrieval~\cite{C:Benaloh86a,STOC:BenGolWig88,STOC:ChaCreDam88,EC:CraDamMau00,BeaverF90,BeaverFKR90,Chor:1998:PIR:293347.293350}. More recently, \cite{BGILT18} initiated the systematic study of HSS, and in there has been a long line of work on the topic, most of which has focused on HSS schemes that are \emph{cryptographically} secure~\cite{BoyleGI15,DHRW16,BGI16a,BGI16,FazioGJS17,BGILT18,BoyleKS19,BCGIKS19,CouteauM21,OrlandiSY21,RoyS21,DIJL23}.  In contrast, we focus on \emph{information-theoretic security}.  The information-theoretic setting was explored in~\cite{BGILT18} and was further studied in \cite{FIKW22}, which is the closest to our work and also our main motivation.  In particular, \cite{FIKW22} focused the download rate of information-theoretically secure HSS schemes (both linear and non-linear), but did not focus quantitatively on the amortization parameter $\ell$.  In contrast, we restrict our attention to linear schemes, but focus on characterizing such schemes and on pinning down $\ell$. 

We note that the work \cite{FIKW22} also obtained linear HSS schemes using coding-theoretic techniques, but the connection that they exploited is different.  In particular, they show that for the case of $d=1$ (that is, \emph{HSS for concatenation}), the existence of linear HSS schemes for $\POLY_{d=1,m}(\F)^\ell$ is equivalent to the existence of linear codes with a particular rate and distance.  However, this characterization only works when $d=1$.  In contrast, our characterization, in terms of codes with good \emph{labelweight}, a generalization of distance, applies for general $d$.

Finally, we mention the \emph{MDS conjecture}, which is related to our results.  The MDS conjecture (Conjecture~\ref{conj: MDS conjecture}) was stated by Segre in 1955~\cite{Segre1955CurveRN}, and roughly says that Reed-Solomon codes have the best alphabet size possible for any \emph{Maximum-Distance Separable} (MDS) code.  After being open for over 60 years, the conjecture has been proved for \emph{prime} order fields, and particular extension fields~\cite{Ball2012OnSO}.  Our characterization of optimal-rate HSS schemes leads us to consider the related question of the best alphabet size possible for any \emph{Totally Nonsingular} matrix, which we observe in Section~\ref{sec:MDS} is equivalent to the MDS conjecture over extension fields.  This connection leads us to conjecture that our construction is in fact optimal, and we show that the MDS conjecture implies that it is, within a natural class of constructions.  See the discussion in Section~\ref{sec:MDS} for more on this connection. 

\subsection{Open Questions and Future Directions}\label{sec:next}

Before we get into the details, we take a moment to highlight some open questions. 
\begin{itemize}
    \item The most obvious question that our work leaves open is about the edge cases in the characterization of amortization parameters $\ell$ that are admissible for optimal-download linear HSS schemes.  Given the relationship to the MDS conjecture over extension fields, resolving this may be quite a hard problem.  However, there is some hope.  In particular, the MDS conjecture would imply that our construction is optimal only within a natural class of constructions; it may be possible to get improved results by leaving that class. 
    
    \item Another open question is to find further applications of our characterization of download optimal HSS schemes.  We use this characterization to nearly resolve the question of what amortization parameters are admissible for such schemes, but we hope that it will be useful for other questions about linear information-theoretic HSS.
    \item Finally, it would be interesting to extend our results to linear HSS schemes that do not have optimal download rate.  For example, perhaps it is possible to get a drastic improvement in the amortization parameter $\ell$ by backing off from the optimal download rate by only a small amount.
\end{itemize}

\subsection{Organization}
In Section~\ref{sec:prelim}, we set notation and record a few formal definitions that we will need.  In Section~\ref{sec:equiv}, we show that download-optimal HSS schemes are equivalent to codes with good labelweight: Lemma~\ref{lem:HSS_implies_lblwt} establishes that HSS schemes imply codes with good labelweight, and Theorem~\ref{thm:lblwtImpliesHSS} establishes the converse.  In Section~\ref{sec:codes}, we characterize the generator matrices of codes with good labelweight (Lemma~\ref{lem:BTN_iff_goodLabelWt}), and use this to prove bounds on good labelweight codes (Theorems~\ref{thm:lblwt} and \ref{thm:lblwt_limitations}); this implies Corollary~\ref{cor:amortization}, which gives nearly-tight bounds on the amortization parameter $\ell$ in download-optimal linear HSS schemes for $\POLYdmF$.

\section{Preliminaries}\label{sec:prelim}

We begin by setting notation and the basic definitions that we will need throughout the paper.

\textbf{Notation.} For $n \in \mathbb{Z}^+$, we denote by $[n]$ the set $\lbrace 1, 2, \ldots, n \rbrace$. We use bold symbols (e.g., $\mathbf{x}$) to denote vectors. For an object $w$ in some domain $\mathcal{W}$, we use $\|w\| = \log_2 (\abs{\mathcal{W}})$ to denote the number of bits used to represent $w$.

\subsection{Homomorphic Secret Sharing}

We consider homomorphic secret sharing (HSS) schemes with $m$ inputs and $s$ servers; each input is shared independently. We denote by $\mathcal{F} = \lbrace f: \mathcal{X}^m \to \mathcal{O} \rbrace$ the class of functions we wish to compute, where $\mathcal{X}$ and $\mathcal{O}$ are input and output domains, respectively. 

\begin{definition}[HSS]\label{def:HSS}

Given a collection of $s$ servers and a function class $\mathcal{F}= \lbrace f: \mathcal{X}^m \to \mathcal{O} \rbrace$, consider a tuple $\pi = (\Share, \Eval, \Rec)$, where $\Share: \mathcal{X} \times \mathcal{R} \to \mathcal{Y}^s$, $\Eval: \mathcal{F} \times [s] \times \mathcal{Y} \to \mathcal{Z}^\ast$, and $\Rec: \mathcal{Z}^\ast \to \mathcal{O}$ as follows\footnote{By $\mathcal{Z}^\ast$, we mean a vector of some number of symbols from $\mathcal{Z}$.}:

\begin{itemize}
    \item[$\bullet$] $\Share(x_i,r_i)$: For $i \in [m]$, $\Share$ takes as input a secret $x_i \in \mathcal{X}$ and randomness $r_i \in \mathcal{R}$; it outputs $s$ 
 shares $\left( y_{i,j} : j \in [s] \right) \in \mathcal{Y}^s$. We refer to the $y_{i,j}$ as \emph{input shares}; server $j$ holds shares $(y_{i,j} 
 : i \in [m])$.

    \item[$\bullet$] $\Eval\left(f, j, \left( y_{1,j}, y_{2,j}, \ldots, y_{m,j} \right) \right)$: Given $f \in \mathcal{F}$, server index $j \in [s]$, and server $j$'s input shares $\left( y_{1,j}, y_{2,j}, \ldots, y_{m,j} \right)$, $\Eval$ outputs $z_j \in \mathcal{Z}^{n_j}$, for some $n_j \in \mathbb{Z}$. We refer to the $z_j$ as \emph{output shares}.

    \item[$\bullet$] $\Rec(z_1, \ldots, z_s)$: Given output shares $z_1, \ldots, z_s$, $\Rec$ computes $f(x_1, \ldots, x_m) \in \mathcal{O}$. 

\end{itemize}

\noindent
We say that $\pi = (\Share, \Eval, \Rec)$ is a $s$-server HSS scheme for $\mathcal{F}$ if the following requirements hold:

\begin{itemize}
    \item[$\bullet$] \textbf{Correctness:} For any $m$ inputs $x_1, \ldots, x_m \in \mathcal{X}$ and $f \in \mathcal{F}$, 

    \begin{equation*}
        \Pr_{\mathbf{r} \in \mathcal{R}^m}\left[ \Rec(z_1, \ldots, z_s) = f(x_1, \ldots, x_m) : 
            \begin{aligned} 
                &\forall i \in [m], \; \left( y_{i,1}, \ldots, y_{i,s} \right) \leftarrow \Share(x_i,r_i)\\
                &\forall j \in [s], \; z_j \leftarrow \Eval\left( f, j, (y_{1,j}, \ldots, y_{m,j}) \right)
            \end{aligned}
         \right] = 1
    \end{equation*}
Note that the random seeds $r_1, \ldots, r_m$ are independent.

    \item[$\bullet$] \textbf{Security:} Fix $i \in [m]$; we say that $\pi$ is \emph{$t$-private} if for every $T\subseteq [s]$ with $|T| \leq t$ and $x_i, x'_i \in \mathcal{X}$, $\Share(x_i)|_T$ has the same distribution as $\Share(x'_i)|_T$, over the randomness $\mathbf{r} \in \mathcal{R}^m$ used in $\Share$.

\end{itemize}

\end{definition}

\begin{remark}
    We remark that in the definition of HSS, the reconstruction algorithm $\Rec$ does \emph{not} need to know the identity of the function $f$ being computed, while the $\Eval$ function does.  In some contexts it makes sense to consider an HSS scheme for $\mathcal{F} = \{f\}$, in which case $f$ is fixed and known to all.  Our results in this work apply for general collections $\mathcal{F}$ of low-degree, multivariate polynomials, and in particular cover both situations.
\end{remark}

We will focus on \emph{linear} HSS schemes, which means that both $\Share$ and $\Rec$ are $\F$-linear over some finite field $\F$; we never require $\Eval$ to be linear.  More precisely, we have the following definition.
\begin{definition}[Linear HSS]
    Let $\mathbb{F}$ be a finite field.
    \begin{itemize}
        \item[$\bullet$] We say that an $s$-server HSS $\pi = (\Share,\Eval,\Rec)$ has \emph{linear reconstruction} if:
        \begin{itemize}
            \item  $\mathcal{Z} = \F$, so each output share $z_i \in \F^{n_i}$ is a vector over $\F$;
            \item $\mathcal{O} = \F^o$ is a vector space over $\F$; and
            \item $\Rec: \F^{\sum_i n_i} \to \F^o$ is $\F$-linear.
        \end{itemize}

        \item[$\bullet$] We say that $\pi$ has \emph{linear sharing} if $\mathcal{X}$, $\mathcal{R}$, and $\mathcal{Y}$ are all $\F$-vector spaces, and $\Share$ is $\F$-linear.

        \item[$\bullet$] We say that $\pi$ is \emph{linear} if it has both linear reconstruction and linear sharing. Note there is no requirement for $\Eval$ to be $\mathbb{F}$-linear.
    \end{itemize}
\end{definition}

Our main focus will be on the \emph{download rate} of linear HSS schemes.
\begin{definition}[Download cost, dowload rate]\label{def:downloadrate}
    Let $s, t$ be integers and let $\mathcal{F}$ be a class of functions with input space $\mathcal{X}^m$ and output space $\mathcal{O}$. Let $\pi$ be a $s$-server $t$-private HSS for $\mathcal{F}$. Let $z_i \in \mathcal{Z}^{n_i}$ for $i \in [s]$ denote the output shares.

    \begin{itemize}
        \item[$\bullet$] The \emph{download cost} of $\pi$ is given by

        \begin{equation*}
            \textsf{DownloadCost}(\pi):=\sum_{i \in [s]} \|{z_i}\|,
        \end{equation*}
        where we recall that $\|{z_i}\| = n_i \log_2|\mathcal{Z}|$ denotes the number of bits used to represent $z_i$. 
        \item[$\bullet$] The \emph{download rate} of $\pi$ is given by 
        \begin{equation*}
            \textsf{DownloadRate}(\pi):=\frac{\log_2| \mathcal{O}|}{\textsf{DownloadCost}(\pi)}.
        \end{equation*}
        
    \end{itemize}
\end{definition}

Thus, the download rate is a number between $0$ and $1$, and we would like it to be as close to $1$ as possible.

\subsection{Polynomial Function Classes}

Throughout, we will be interested in classes of functions $\mathcal{F}$ comprised of low-degree polynomials.
\begin{definition}
    Let $m > 0$ be an integer and $\mathbb{F}$ be a finite field. We define
    \begin{equation*}
        \POLYdmF := \lbrace f \in \mathbb{F}[X_1, \ldots, X_m] : \deg(f) \leq d \rbrace
    \end{equation*}
    to be the class of all $m$-variate polynomials of degree at most $d$, with coefficients in $\mathbb{F}$. 
\end{definition}
We are primarily interested in \emph{amortizing} the HSS computation over $\ell$ instances of $\POLYdmF$.  In this case, we will take our function class to be (a subset of) $\POLYdmF^\ell$ for some $\ell \in \mathbb{Z}^+$. 
\begin{remark}[Amortization over the computation of $\ell$ polynomials]\label{rem:amortization}
    If $\mathcal{F}$ is (a subset of) $\POLYdmF^\ell$, then the definition of HSS can be interpreted as follows: 
\begin{itemize}
\item     There are $\ell \cdot m$ input secrets, $x_k^{(i)}$ for $i \in [\ell]$ and $k \in [m]$, and each are shared independently among the $s$ servers.  (That is, in Definition~\ref{def:HSS}, we take $m \gets \ell \cdot m$). 

    \item There are $\ell$ target functions $f_1, \ldots, f_\ell \in \POLYdmF$, and the goal is to compute $f_i(x_1^{(i)}, \ldots, x_m^{(i)})$ for each $i \in [\ell]$.  (That is, in Definition~\ref{def:HSS}, $\mathbf{f} = (f_1, \ldots, f_\ell)$ is an element of $\mathcal{F} = \POLYdmF^\ell$).

    \item Each server $j \in [s]$ sends a function $z_j$ of all of their output shares, which importantly can combine information from across the $\ell$ instances; then the output client reconstructs $f_i(x_1^{(i)}, \ldots, x_m^{(i)})$ for each $i \in [\ell]$ from these shares.
    \end{itemize}

    In particular, we remark that this notion of amortization is interesting even for $d = m = 1$, when $f_1 = f_2 = \cdots = f_\ell$ is identity function.  In \cite{FIKW22}, that problem was called \emph{HSS for concatenation.}
\end{remark}

As hinted at above, our infeasibility results hold not just for $\POLYdmF^\ell$ but also for any $\mathcal{F} \subseteq \POLYdmF$ that contains monomials with at least $d$ different variables.
\begin{definition}\label{def:nontriv}
    Let $\mathcal{F} \subseteq \POLYdmF^\ell$.  We say that $\mathcal{F}$ is \emph{non-trivial} if there exists some $\mathbf{f} = (f_1, \ldots, f_\ell) \in \mathcal{F}$ so that for all $i \in [\ell]$, $f_i$ contains a monomial with at least $d$ distinct variables.
\end{definition}
(We note that our \emph{feasibility} results are stated in terms of $\mathcal{F} = \POLYdmF^\ell$; trivially these extend to any subset $\mathcal{F}\subseteq \POLYdmF^\ell$).

As mentioned in the introduction (Theorem~\ref{thm:FIKWinformal}), the work \cite{FIKW22} showed that any linear HSS scheme for $\POLYdmF^\ell$ (for any $\ell$) can have download rate at most $(s-dt)/s$:
We recall the following theorem from \cite{FIKW22}.
\begin{theorem}[\cite{FIKW22}]\label{thm:FIKW rate LB}
    Let $t,s,d,m,\ell$ be positive integers so that $m \geq d$. Let $\mathbb{F}$ be any finite field and $\pi$ be a $t$-private $s$-server linear HSS scheme for $\POLYdmF^\ell$. Then $dt< s$, and $\textsf{DownloadRate}(\pi) \leq (s - dt)/s$.
\end{theorem}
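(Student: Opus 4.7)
The plan is to combine a ``$dt$-insulation'' argument (extending $t$-privacy through degree-$d$ products) with an entropy/counting bound over random subsets of servers. Throughout, I fix a non-trivial witness $\mathbf{f} = (f_1, \ldots, f_\ell) \in \POLYdmF^\ell$ in which each $f_i$ contains a degree-$d$ monomial; such $\mathbf{f}$ exists because $m \geq d$.

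First I would establish $dt < s$. Since $\Share$ is $\F$-linear, each input share $y_{i,j}$ is an $\F$-linear combination of $x_i$ and randomness $r_i$, and the $t$-privacy guarantee supplies, for every $T \subseteq [s]$ of size $\leq t$, a way to re-randomize the shares seen by $T$ without altering their marginal distribution. The central claim is that for any $T \subseteq [s]$ with $|T| \leq dt$, the output shares $\{z_j : j \in T\}$ are jointly distributed independently of $\mathbf{f}(\mathbf{x})$. I would prove this by a hybrid over the degree-$d$ monomial structure: partition $T$ into $d$ sub-pockets of size $\leq t$, re-randomize each monomial factor's contribution within its own pocket, and compose the hybrids (which works because the pockets are disjoint). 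If $dt \geq s$, then taking $T = [s]$ makes all output shares independent of $\mathbf{f}(\mathbf{x})$, contradicting the correctness of $\Rec$ on the nonconstant $\mathbf{f}$.

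Second, I would apply the same $dt$-insulation to bound the rate. Fix $T \subseteq [s]$ with $|T| = dt$ and set $T^c = [s] \setminus T$. By correctness, $\mathbf{f}(\mathbf{x})$ is a deterministic function of $(z_1, \ldots, z_s)$; combined with the insulation claim and the linearity of $\Rec$, this forces $\mathbf{f}(\mathbf{x})$ to be determined by $z_{T^c}$ alone (intuitively, the contribution of $z_T$ to the linear reconstruction must be zero, else the $z_T$ distribution would vary with $\mathbf{f}(\mathbf{x})$). Shannon's inequality then yields
\begin{equation*}
    \ell \log_2 |\F| \;=\; \log_2 |\mathcal{O}| \;\leq\; H(\mathbf{f}(\mathbf{x})) \;\leq\; H\!\left(\{z_j\}_{j \in T^c}\right) \;\leq\; \sum_{j \in T^c} \|z_j\|.
\end{equation*}
Averaging over uniform $T \subseteq [s]$ of size $dt$ (so each server appears in $T^c$ with probability $(s-dt)/s$) gives
\begin{equation*}
    \log_2 |\mathcal{O}| \;\leq\; \frac{s-dt}{s} \sum_{j \in [s]} \|z_j\| \;=\; \frac{s-dt}{s}\,\textsf{DownloadCost}(\pi),
\end{equation*}
which rearranges to $\textsf{DownloadRate}(\pi) \leq (s-dt)/s$.

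The main obstacle is justifying the $dt$-insulation step rigorously in the presence of a nonlinear $\Eval$. While $\Share$ and $\Rec$ are $\F$-linear, each server's $\Eval$ can be an arbitrary function of its input shares, so the $z_j$ may be arbitrary polynomials in $(\mathbf{x}, \mathbf{r})$. The hybrid depends on the fact that a degree-$d$ computation over $t$-private linear shares decomposes into $d$ factor-contributions, each localized within a size-$t$ pocket of servers, and that each pocket's local randomness can be re-sampled without altering the adversary's marginal view. A convenient way to make this precise, which I would adopt, is to reduce to CNF sharing via the universality result of \cite{CDI05}: in CNF sharing the $y_{i,j}$ are explicitly indexed parts, and products of shares factor cleanly across disjoint pockets, turning the hybrid into a short calculation rather than an abstract symmetry claim.
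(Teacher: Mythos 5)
The paper itself does not reprove Theorem~\ref{thm:FIKW rate LB} (it is imported from \cite{FIKW22}), but the key ingredient it does prove, Lemma~\ref{lem: query spread}, is exactly where your argument goes wrong. Your central ``$dt$-insulation'' claim --- that for every $T\subseteq[s]$ with $|T|\le dt$ the \emph{joint distribution} of the output shares $\{z_j : j\in T\}$ is independent of $\mathbf{f}(\mathbf{x})$ --- is false. Under $t$-CNF sharing, any $t+1\le dt$ servers \emph{collectively} hold every share $y_{i,T}$ of every secret (a share $y_{i,S}$ is missing from a coalition $T$ only if $T\subseteq S$, impossible once $|T|>t$), so they can jointly reconstruct all the secrets. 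Since the insulation claim is a statement about $\Share$ and $\Eval$ only, it would have to hold for every $\Eval$; taking $\Eval$ to forward the raw input shares (or appending them to the output of any correct scheme) gives an immediate counterexample. The hybrid over $d$ disjoint ``pockets'' cannot be repaired, because a single server's output share mixes all $d$ factors and all pockets' randomness; there is no per-factor contribution localized to a pocket that can be resampled independently. A further warning sign: a purely distributional insulation statement would bound the rate of \emph{nonlinear}-reconstruction schemes by $(s-dt)/s$ as well, whereas \cite{FIKW22} show nonlinear schemes can beat this.

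The correct mechanism is syntactic, not distributional, and it is where linearity of $\Rec$ enters essentially: each \emph{individual} server $j\in T$ is missing at least one share $y_{i,T_i^*}$ (where $T_1^*,\dots,T_d^*$ are size-$t$ sets covering $T$), so the formal monomial $W=\prod_i y_{i,T_i^*}$ cannot appear in any single server's output polynomial, hence not in any fixed $\F$-linear combination of them --- yet $W$ does appear in the formal expansion of $\prod_i x_i$. This is the proof of Lemma~\ref{lem: query spread}, and applied to every nonzero linear combination $\mathbf{v}^T G_\pi$ of the rows of the reconstruction matrix (as in Lemma~\ref{lem:HSS_implies_lblwt} and Corollary~\ref{cor: generator of good labelweight code is MDS like}) it shows that for every $\Lambda\subseteq[s]$ with $|\Lambda|=s-dt$ the column-submatrix $G_\pi(\Lambda)$ has full row rank $\ell$, so $\sum_{j\in\Lambda}n_j\ge\ell$. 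Your final averaging step is then fine and is essentially the standard one: averaging over $\Lambda$ gives $\frac{s-dt}{s}\,n\ge\ell$, i.e.\ the rate bound; and taking $J=[s]$ in the monomial argument when $dt\ge s$ gives the contradiction establishing $dt<s$. Note also that your intermediate inference ``$z_T$ independent of $f(\mathbf{x})$ and $(z_T,z_{T^c})$ determines $f(\mathbf{x})$, hence $z_{T^c}$ determines $f(\mathbf{x})$'' is not valid in general (consider $f(\mathbf{x})=A\oplus B$ with $A,B$ uniform), so even that bridge would need the rank argument rather than an independence argument.
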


\subsection{CNF Sharing}
The main $\Share$ function that we consider in this work is \emph{CNF sharing}~\cite{ISN89}.

\begin{definition}[$t$-private CNF sharing] \label{def:CNF}
Let $\F$ be a finite field.
The $t$-private, $s$-server CNF secret-sharing scheme over $\F$ is a function $\Share: \F \times \F^{\binom{s}{t} - 1} \to \left(\F^{\binom{s-1}{t}}\right)^s$ that shares a secret $x \in \F$ as $s$ shares $y_j \in \F^{\binom{s-1}{t}}$, using $\binom{s}{t} - 1$ random field elements, as follows.

Let $x \in \F$, and let $\mathbf{r} \in \F^{\binom{s}{t} - 1}$ be a uniformly random vector.  Using $\mathbf{r}$, choose $y_T \in \F$ for each set $T \subseteq [s]$ of size $t$, as follows: The $y_T$ are uniformly random subject to the equation
\[ x = \sum_{T \subseteq [s]: |T| = t} y_T.\]
Then for all $j \in [s]$, define
\[ \Share(x,\mathbf{r})_j = ( y_T \,:\, j \not\in T ) \in \F^{\binom{s-1}{t} }.\]
\end{definition}

We observe that CNF-sharing is indeed $t$-private.  Any $t+1$ servers between them hold all of the shares $y_T$, and thus can reconstruct $x = \sum_T y_T$.  In contrast, any $t$ of the servers (say given by some set $S \subseteq [s]$) are missing the share $y_S$, and thus cannot learn anything about $x$.

The main reason we focus on CNF sharing is that it is \emph{universal} for linear secret sharing schemes:

\begin{theorem}[\cite{CDI05}]\label{thm:CNFuniversal}
    Suppose that $x \in \F$ is $t$-CNF-shared among $s$ servers, so that server $j$ holds $y_j \in \F^{\binom{s-1}{t}}$, and let $\Share'$ be any other linear secret-sharing scheme for $s$ servers that is (at least) $t$-private.  Then the shares $y_j$ are locally convertible into shares of $\Share'$.  That, is there are functions $\phi_1, \ldots, \phi_s$ so that
    $(\phi_1(y_1), \ldots, \phi_s(y_s))$
    has the same distribution as
    $\Share'(x, \mathbf{r})$
    for a uniformly random vector $\mathbf{r}$.
\end{theorem}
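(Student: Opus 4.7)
The plan is to build explicit linear local conversion maps $\phi_j$ using the linear structure of $\Share'$. First, I would write $\Share'(x, \mathbf{r})_j = A_j x + B_j \mathbf{r}$ for matrices $A_j, B_j$ over $\F$, with $\mathbf{r}$ uniform on $\F^k$. The key linear-algebraic consequence of $t$-privacy is: for every $T \subseteq [s]$ with $|T| = t$, the distribution of $(A_j x + B_j \mathbf{r})_{j \in T}$ must not depend on $x$; since $B_j \mathbf{r}$ varies over the column span of the stacked matrix $B_T = (B_j)_{j \in T}$, the stacked vector $A_T = (A_j)_{j \in T}$ must also lie in that span. Hence for every such $T$ there exists $\mathbf{m}^T \in \F^k$ with $B_j \mathbf{m}^T = -A_j$ for all $j \in T$.

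With these $\mathbf{m}^T$ in hand, I would define
\[\phi_j(y_j) := \sum_{T :\, |T| = t,\, T \not\ni j} (A_j + B_j \mathbf{m}^T)\, y_T,\]
which depends only on $y_j$ since each summand uses only $y_T$ for $T \not\ni j$. A direct calculation using $\sum_T y_T = x$ (the CNF constraint) together with $B_j \mathbf{m}^T = -A_j$ for $j \in T$ (to absorb the otherwise-missing $T \ni j$ contributions into a randomness term) yields
\[\phi_j(y_j) = A_j x + B_j \mathbf{r}^\ast, \qquad \mathbf{r}^\ast := \sum_T \mathbf{m}^T y_T.\]
Thus every realization of $(\phi_1(y_1), \ldots, \phi_s(y_s))$ is a legal $\Share'$-sharing of $x$ with internal randomness $\mathbf{r}^\ast$.

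The step I expect to be the main obstacle is verifying that the induced distribution of $(\phi_j(y_j))_j$ (over the CNF randomness) matches that of $\Share'(x, \mathbf{r})$ (over uniform $\mathbf{r} \in \F^k$). Both are uniform on their supports, being linear images of uniform distributions on affine subspaces, and both supports lie inside the affine subspace $V_x$ of valid $\Share'$-sharings of $x$; so it suffices to show the supports agree, which amounts to showing that the image of $\{(y_T) : \sum_T y_T = x\}$ under $(y_T) \mapsto B \mathbf{r}^\ast$ (with $B = (B_j)_j$ stacked) is the full column span of $B$. I would handle this by exploiting the freedom in each $\mathbf{m}^T$ (determined only modulo $\bigcap_{j \in T} \ker B_j$) and, if necessary, by first replacing $\Share'$ with an equivalent scheme padded by extra internal randomness so the $y_T$'s carry enough entropy to hit every direction in $\operatorname{col}(B)$.
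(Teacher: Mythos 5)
The paper does not actually prove this statement; it is imported from \cite{CDI05} as a black box, so there is no internal proof to compare against. Your construction is essentially the standard one from that line of work, and the parts you carry out are correct: $t$-privacy of $\Share'$ forces $A_T \in \operatorname{col}(B_T)$ for every $t$-set $T$ (the restricted share vector is uniform on the coset $A_T x + \operatorname{col}(B_T)$, and these cosets coincide for all $x$ only if $A_T$ lies in the column span), which yields the witnesses $\mathbf{m}^T$; your formula for $\phi_j$ uses only shares $y_T$ with $T \not\ni j$, and the computation $\phi_j(y_j) = A_j x + B_j \mathbf{r}^\ast$ with $\mathbf{r}^\ast = \sum_T \mathbf{m}^T y_T$ checks out. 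You have therefore shown that the converted shares always form a \emph{valid} $\Share'$-sharing of $x$, which is the guarantee actually proved in \cite{CDI05} and is all this paper needs from the theorem (correctness of the composed HSS only requires landing in the support of $\Share'(x,\cdot)$, and $t$-privacy of the composed scheme is inherited from CNF sharing itself, since any $t$ servers see only deterministic functions of their CNF shares).

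The step you flag as the ``main obstacle,'' however, is not merely an obstacle: exact distributional equality, as literally stated in the theorem, is false in general for deterministic local conversions, and neither of your proposed fixes can repair it. A simple entropy count shows why. Conditioned on $x$, the CNF shares carry $\left(\binom{s}{t}-1\right)\log_2|\F|$ bits of randomness, while $\Share'$ may use many more random field elements that are moreover correlated \emph{across} servers. For instance, take $s=2$, $t=1$, and let $\Share'$ give server $1$ the pair $(x+r_1,\, r_2)$ and server $2$ the pair $(x+r_2,\, r_1)$ for independent uniform $r_1,r_2$: this is linear and $1$-private, its joint share distribution has two field elements of entropy given $x$, but any function of the $1$-CNF shares has at most one. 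Padding $\Share'$ with additional internal randomness only enlarges the target entropy, and private local coins at each server cannot manufacture the required cross-server correlations. The honest resolution is that the theorem should be read in the support/validity form your argument establishes (or, equivalently, with ``same distribution'' weakened to ``a correctly reconstructing and at-least-as-private sharing of $x$''); exact distributional equality would require restricting $\Share'$, e.g.\ to schemes whose randomness complexity does not exceed that of CNF sharing.
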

In particular, we prove several results of the form ``no linear HSS with CNF sharing can do better than 
\_\_\_\_.'' Because of Theorem~\ref{thm:CNFuniversal}, these results imply that ``no linear HSS with \emph{any} linear sharing scheme can do better than \_\_\_\_.''

Finally, we record a useful lemma about HSS with $t$-CNF sharing, which says that in order to recover a degree-$d$ monomial, the output client must contact at least $dt + 1$ servers. This lemma is implicit in Lemma 2 of \cite{FIKW22} 
, but for clarity we give a statement that is more directly applicable to the viewpoint of our work and provide a proof.

\begin{lemma}\cite{FIKW22}\label{lem: query spread}
Fix $s,d,t$ so that $s \geq dt + 1$, and suppose $m \geq d$.
Let $\pi = (\Share, \Eval, \Rec)$ be any linear HSS that $t$-CNF shares $m$ secrets $x_1, \ldots, x_m \in \F$ among $s$ servers, for a nontrivial function class $\mathcal{F} \subset \POLYdmF$.
Then $\Rec$ must depend on output shares $z_j$ from at least $dt + 1$ distinct servers $j \in [s]$.

\end{lemma}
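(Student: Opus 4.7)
The plan is to argue by contradiction: assume $\Rec$ depends on $z_j$ only for $j$ in some set $S \subseteq [s]$ with $|S| \leq dt$, and derive a failure of correctness for a suitable $f \in \mathcal{F}$. The heart of the argument is a coefficient-matching identity obtained after treating the CNF shares as free variables.

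First, by linearity of $\Rec$, I would write $\Rec(z_1,\ldots,z_s) = \sum_{j \in S} L_j(z_j)$ for $\F$-linear maps $L_j$. By non-triviality, I would fix an $f \in \mathcal{F}$ containing a monomial $c \, x_{i_1} \cdots x_{i_d}$ with $c \neq 0$ and distinct variables $i_1,\ldots,i_d$; relabel so the monomial is $M = c\, x_1 \cdots x_d$. Writing the CNF shares of $x_i$ as $y^{(i)}_T$ for $T \subseteq [s]$ with $|T| = t$, the sharing satisfies $x_i = \sum_T y^{(i)}_T$. Correctness requires $\sum_j L_j(z_j) = f(x_1,\ldots,x_m)$ for every sharing, and after substituting out the $x_i$'s, both sides become functions of the free variables $\{y^{(i)}_T\}$. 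I view these as elements of the reduced polynomial ring $R$ over $\F$ in these variables (quotienting each variable by $y^q - y$, where $q = |\F|$), inside which every function has a unique polynomial representation.

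Next I would compare, in $R$, the coefficients of the multilinear monomial $\mu = \prod_{i=1}^d y^{(i)}_{T_i}$ for a chosen tuple $(T_1,\ldots,T_d)$ of size-$t$ subsets of $[s]$. On the right-hand side, expanding $M = c\prod_i\sum_T y^{(i)}_T$ contributes $c$ to $\mu$, and any other monomial $x^\alpha$ of $f$ cannot contribute: after accounting for the reduction $y^q = y$ (which remains robust even when $q \leq d$), contributing to $\mu$ requires $\alpha_i \geq 1$ for $i \in [d]$ and $\alpha_k = 0$ for $k > d$, and the degree bound $|\alpha| \leq d$ then forces $\alpha = e_1 + \cdots + e_d$, i.e., $x^\alpha$ is (up to scalar) exactly $M$. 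On the left-hand side, each $z_j$ depends only on $\{y^{(i)}_T : T \not\ni j\}$, so the reduced polynomial of $L_j(z_j)$ contains no monomial involving any variable $y^{(i)}_T$ with $T \ni j$; hence only servers $j \notin \bigcup_i T_i$ can contribute to $\mu$. Combined with $L_j = 0$ for $j \notin S$, the LHS coefficient of $\mu$ is a sum over $j \in S \setminus \bigcup_i T_i$.

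Finally, I would pick the tuple $(T_1,\ldots,T_d)$ to cover $S$: since $|S| \leq dt$, partition $S$ as $S_1 \sqcup \cdots \sqcup S_d$ with $|S_i| \leq t$, and extend each $S_i$ to a size-$t$ subset $T_i \subseteq [s]$ (possible since $s \geq dt + 1 > t$). Then $\bigcup_i T_i \supseteq S$, so the LHS coefficient of $\mu$ is an empty sum equal to $0$, while the RHS is $c \neq 0$, a contradiction. I expect the main technical care to be (a) justifying that ``coefficient of a monomial'' is well-defined, which is handled by passing to $R$, and (b) verifying that no monomial of $f$ other than $M$ contributes to $\mu$ even in the small-$q$ regime; both points admit short direct checks.
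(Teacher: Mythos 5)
Your proposal is correct and follows essentially the same argument as the paper: choose size-$t$ sets $T_1,\ldots,T_d$ covering the $\leq dt$ servers that $\Rec$ depends on, and observe that the multilinear share-monomial $\prod_{i=1}^d y^{(i)}_{T_i}$ appears with nonzero coefficient in the target polynomial but cannot appear in any contributing server's output polynomial. Your treatment is slightly more careful in two spots where the paper is terser --- you work explicitly in the reduced ring modulo $y^q-y$ to make ``coefficient of a monomial'' well-defined, and you rule out contributions from the other monomials of $f$ by direct coefficient comparison rather than by the paper's appeal to linear independence of monomials --- but the underlying idea is identical.
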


\begin{proof}
From the definition of a nontrivial function class (Definition~\ref{def:nontriv}, with $\ell=1$), there is some $\hat{f} \in \mathcal{F}$ so that $\hat{f}$ contains a monomial with at least $d$ distinct variables.
    Up to relabeling the secrets, we may assume that $\hat{f}$ contains the monomial $f(x_1, \ldots, x_m) := \prod_{i=1}^d x_i$.  As the monomials of $\hat{f}$ are linearly independent over $\F[X_1, \ldots, X_m]$ and $\Rec$ is linear, if $\Rec$ could recover $\hat{f}$ with probability $1$, then $\Rec$ could recover $f$ with probability $1$ as well.  Thus, it suffices to show that no linear $\Rec$ with output shares from at most $dt$ servers can compute $f$ with probability $1$. 
    
    Let $J \subseteq [s]$ with $\abs{J} < dt+1$ be any collection of at most $dt$ servers, and suppose towards a contradiction that the function $\Rec$ depends only on the output shares $(z_j : j \in J)$.
    
    For $j \in J$, let
    \begin{equation*}
        \mathbf{y}_{j} = \left( y_{h,T}\; : \; h \in [d], T \subseteq [s], |T| = t, j \not\in T \right)
    \end{equation*}
    denote the set of $t$-CNF shares of $x_1, \ldots, x_d$ held by server $j$. Let
    $g_{j}$ be the polynomial over $\F$ in the variables $\mathbf{y}_{j}$ so that
    \[ g_{j}(\mathbf{y}_{j}) = \Eval(f, j, \mathbf{y}_{j}).\]
    (Note that over a finite field $\F$, any function can be represented as a polynomial, whose degree in each variable is at most $|\F|-1$).
   
    Let $T_1^*, \ldots, T_d^* \subseteq [s]$ be any collection of $t$-sized subsets satisfying $J \subseteq \bigcup_{i \in [d]} T_i^*$. 
    This exists because $\abs{J} \leq dt$, so we can choose $d$ size-$t$ subsets that cover $J$. 
    Next we observe that the monomial 
    \[ W = \prod_{i=1}^d y_{i,T_i^*} \]  cannot appear in the polynomial $g_{j}$ for any $j \in J$, as none of the servers $j \in J$ can compute $W$.  

    Because the HSS scheme is linear and we are assuming that $\Rec$ depends only on $(z_j : j \in J)$, the output of $\Rec$ can be written as a linear combination of the polynomials $g_j$ for $j \in J$:
    \[ \Rec(z_j :j \in J) = \sum_{j \in J} \alpha_j g_j(\mathbf{y}_j), \]
    for some $\alpha_j \in \F$.
    In particular, $W$ cannot appear in $\Rec(z_j : j \in J)$ when viewed as a formal polynomial.
    
However, as a formal polynomial, the function that we would like to reconstruct is 
\[ f(x_1, \ldots, x_d) = \prod_{i \in [d]} x_i = \prod_{i \in [d]} \left( \sum_T y_{i,T_i} \right) = \sum_{T_1, \ldots, T_d} \prod_{i=1}^d y_{i,T_i}, \]
which \emph{does} include the monomial $W$.  (Above, the sum on the right hand side is over all $(T_1, \ldots, T_d)$ where each $T_i \subseteq [s]$ has size $t$). 

In particular, there are values of the shares $y_{i,T_i}$ so that $\Rec(z_j : j \in J)$ is not the same as $f(x_1, \ldots, x_d)$, which means that the probability that they are the same is less than 1, contradicting the correctness requirement of an HSS scheme.

\end{proof}

\subsection{Linear Codes and Labelweights}

Throughout, we will be working with \emph{linear codes} $\cC \subset \F^n$.  Such a code is just a subspace of $\F^n$.  For a linear code $\cC \subseteq \F^n$ of dimension $\ell$, we say that a matrix $G \in \F^{\ell \times n}$ is a \emph{generator matrix} for $\cC$ if $\cC = \mathrm{rowSpan}(G)$.  Note that generator matrices are not unique.  If $G$ has the form
\[ G = [I | A ]\]
where $I = \F^{\ell \times \ell}$ is the identity matrix and $A \in \F^{\ell \times (n-\ell)}$, we say that $G$ is in \emph{systematic form}, and we refer to the first $\ell$ coordinates as \emph{systematic coordinates}.  
The other coordinates we refer to as \emph{non-systematic coordinates}.
The \emph{rate} of a linear code $\cC \subset \F^n$ of dimension $\ell$ is defined as
\[ \textsf{Rate}(\cC) := \frac{\ell}{n}. \]

As discussed in the Introduction, our characterization of download-optimal linear HSS schemes is in terms of linear codes with good \emph{labelweight}.

\begin{definition}[Labeling Function]
    Let $U, V$ be finite domains satisfying $|U| \geq |V|$. We say that a mapping $\mathcal{L}: U \to V$ is a \textit{labeling function} (or simply \textit{labeling}) of $U$ by $V$ if $\mathcal{L}$ is a surjection.
\end{definition}

\noindent
We will usually take $U = [n]$ and $V = [s]$ for $n, s \in \mathbb{Z}^+$ with $n \geq s$.
For a code $\cC \subseteq \F^n$, and a labeling function $\mathcal{L}:[n] \to [s]$, the labelweight of a codeword is the number of distinct labels that its support touches, and the labelweight of a code is the minimum labelweight of any nonzero codeword.  More precisely, we have the following definition.

\begin{definition}[Labelweight]\label{def:labelweight}
Let $\mathcal{L}$ be a labeling of $[n]$ by $[s]$ where $n,s \in \mathbb{Z}^+$ with $n \geq s$. Let $\mathbb{F}$ denote a field. Given $\mathbf{c} \in \mathbb{F}^n$, we define the \textit{labelweight} of $\mathbf{c}$ by
\begin{equation*}
    \Delta_\mathcal{L} (\mathbf{c}) := \left|\left\lbrace \mathcal{L}(i) \; : \; \mathbf{c}_i \neq 0 \right\rbrace\right|.
\end{equation*}

\noindent
For a code $\mathcal{C} \subseteq \mathbb{F}^n$, we define \textit{labelweight} of $\mathcal{C}$ by
\begin{equation*}
    \Delta_\mathcal{L} (\mathcal{C}) := \min_{\mathbf{0} \neq \mathbf{c} \in \mathcal{C}} \Delta_\mathcal{L}(\mathbf{c}).
\end{equation*}

\end{definition}

We note that labelweight is a generalization of Hamming weight; indeed, let $\iota: [n] \to [n]$ denote the identity function on $[n]$, which may be viewed as a labeling of $[n]$ by $[n]$. Given a linear code $\mathcal{C}$ of length $n$ and $\mathbf{c} \in \mathcal{C}$, we see that $\Delta_\iota (\mathbf{c})$ and $\Delta_\iota(\mathcal{C})$ are equivalent to the Hamming weight of $\mathbf{c}$ and the minimum Hamming distance of $\mathcal{C}$, respectively.

\section{Linear HSS Schemes and Good Labelweight Codes}\label{sec:equiv}
In this section we show that finding optimal-download linear HSS schemes for low-degree multivariate polynomials is equivalent to finding linear codes with high labelweight. 
Before we show the equivalence, we make a few observations about linear reconstruction algorithms.  The first is just the observation that any linear reconstruction scheme can be regarded as matrix:

\begin{observation}\label{obs: linear reconstruction is matrix multiplication}
    Let $\ell, t, s, d, m, n$ be integers. Let $\pi = (\Share, \Eval, \Rec)$ be a $t$-private, $s$-server HSS for some function class $\mathcal{F} \subseteq \POLYdmF^\ell$ with linear reconstruction $\Rec: \F^n \to \F^\ell$, where $n = \sum_{j \in [s]} n_j$, and the output share $z_j$ of server $j$ is an element of $\F^{n_j}$.
    Let $\mathbf{z} \in \F^{n}$ be the vector of all output shares.  That is,
    \[ \mathbf{z} = z_1 \circ z_2 \circ \cdots \circ z_s, \]
    where $\circ$ denotes concatenation.

    Then there exists a matrix $G_\pi \in \mathbb{F}^{\ell \times n}$ so that, for all $f \in \mathcal{F}$ and for all secrets $\mathbf{x} \in (\F^m)^\ell$, 
    \begin{equation*}
        \Rec(\mathbf{z}) = G_\pi \mathbf{z} = f(\mathbf{x})= \left[\begin{array}{c}
             f_1( \mathbf{x}^{(1)})  \\
             f_2( \mathbf{x}^{(2)}) \\
             \vdots\\
             f_\ell( \mathbf{x}^{(\ell)}) 
        \end{array} \right]
    \end{equation*}
\end{observation}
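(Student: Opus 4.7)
The plan is to prove this observation in two steps, both essentially routine. The first step extracts the matrix representation from the linearity of $\Rec$ alone; the second step uses the correctness of the HSS scheme to identify the output of the matrix multiplication with $f(\mathbf{x})$.

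First I would note that $\Rec$ is assumed to be $\F$-linear as a map from $\F^n$ to $\F^\ell$. By the standard correspondence between linear maps of finite-dimensional vector spaces and matrices, there is a unique matrix $G_\pi \in \F^{\ell \times n}$, whose $i$-th column is $\Rec(\mathbf{e}_i)$ (where $\mathbf{e}_i$ is the $i$-th standard basis vector of $\F^n$), such that $\Rec(\mathbf{z}) = G_\pi \mathbf{z}$ for every $\mathbf{z} \in \F^n$. This is independent of any HSS considerations; it only uses the fact that $\Rec$ is a linear map between $\F$-vector spaces of the stated dimensions.

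Next I would invoke the correctness condition of Definition~\ref{def:HSS}. For any $f \in \mathcal{F}$ and any choice of secrets $\mathbf{x} \in (\F^m)^\ell$, the scheme produces shares $\mathbf{z}$ (via $\Share$ and $\Eval$) with the property that $\Rec(\mathbf{z}) = f(\mathbf{x})$ holds with probability $1$ over the randomness of $\Share$. Combining this with the matrix representation of $\Rec$ from the previous step, we get $G_\pi \mathbf{z} = f(\mathbf{x})$ for (all but a measure-zero set of) valid share vectors $\mathbf{z}$. Writing out $f(\mathbf{x})$ as the column vector with entries $f_i(\mathbf{x}^{(i)})$ gives the displayed equation in the observation.

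There is no real obstacle here: the content is entirely the translation of ``linear map'' into ``matrix,'' together with the tautological use of correctness to identify the value of $\Rec(\mathbf{z})$ with $f(\mathbf{x})$. The only mild subtlety worth flagging is that the matrix $G_\pi$ depends on $\pi$ (in particular on the chosen ordering of output shares via the concatenation $\mathbf{z} = z_1 \circ \cdots \circ z_s$) but not on the specific $f \in \mathcal{F}$ or secret $\mathbf{x}$; the same matrix $G_\pi$ correctly decodes for every $f$ and $\mathbf{x}$, because the $f$-dependence is absorbed entirely into $\Eval$ and hence into $\mathbf{z}$.
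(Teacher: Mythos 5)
Your proposal is correct and matches the paper's (implicit) reasoning exactly: the paper states this as an unproved Observation, relying on precisely the two facts you use --- that an $\F$-linear map $\F^n \to \F^\ell$ is given by a matrix, and that the correctness condition of Definition~\ref{def:HSS} forces $\Rec(\mathbf{z}) = f(\mathbf{x})$ on every share vector the scheme can produce. (One tiny note: since the randomness space is finite and correctness holds with probability $1$, the equation holds for \emph{all} valid $\mathbf{z}$, so the ``measure-zero'' hedge is unnecessary.)
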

For a linear HSS $\pi$, we call $G_\pi$ as in the observation above the \emph{reconstruction matrix} corresponding to $\Rec$.

We next observe that $G_{\pi}$ has full rank.
\begin{lemma}\label{lem:Gpi full rank}
    Let $t,s,d,m,\ell$ be positive integers so that $m \geq d$ and $n \geq \ell$, and let $\pi$ be a $t$-private $s$-server linear HSS for some $\mathcal{F} \subseteq \POLYdmF$, so that $\mathcal{F}$ contains an element $(f_1, \ldots, f_\ell)$ where for each $i \in [\ell]$, $f_i$ is non-constant.  Then $G_\pi \in \F^{\ell \times n}$ has rank $\ell$.
\end{lemma}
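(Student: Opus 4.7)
The plan is to argue by contradiction: if $G_\pi$ had rank strictly less than $\ell$, then the image of $\Rec$ would lie in a proper subspace of $\F^\ell$, and this would force a nontrivial $\F$-linear relation among the outputs $f_1(\mathbf{x}^{(1)}), \ldots, f_\ell(\mathbf{x}^{(\ell)})$ that holds for every choice of inputs. Combined with the fact that the secrets $\mathbf{x}^{(i)}$ are independently chosen, such a relation collapses to the statement that at least one $f_i$ is constant, contradicting the hypothesis.

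In more detail, first I invoke Observation~\ref{obs: linear reconstruction is matrix multiplication} to write $G_\pi \mathbf{z} = f(\mathbf{x})$, where this identity holds with probability $1$ over the randomness of $\Share$. In particular, for every fixed choice of $\mathbf{x} \in (\F^m)^\ell$, there exists at least one value of the randomness $\mathbf{r}$ for which the equation holds, and hence $f(\mathbf{x}) = (f_1(\mathbf{x}^{(1)}), \ldots, f_\ell(\mathbf{x}^{(\ell)}))^T$ lies in the column span (i.e., image) of $G_\pi$ for every $\mathbf{x}$.

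Now I suppose for contradiction that $\rank(G_\pi) < \ell$. Then the image of $G_\pi$ is a proper $\F$-linear subspace of $\F^\ell$, so there exists a nonzero vector $\boldsymbol\lambda = (\lambda_1, \ldots, \lambda_\ell) \in \F^\ell$ such that $\boldsymbol\lambda^T \mathbf{v} = 0$ for every $\mathbf{v}$ in the image of $G_\pi$. Applying this to $\mathbf{v} = f(\mathbf{x})$, I obtain the identity
\[ \sum_{i=1}^\ell \lambda_i \, f_i(\mathbf{x}^{(i)}) = 0 \qquad \text{for every } \mathbf{x}^{(1)}, \ldots, \mathbf{x}^{(\ell)} \in \F^m. \]
Pick any index $i^*$ with $\lambda_{i^*} \neq 0$. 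Fixing the inputs $\mathbf{x}^{(j)}$ for $j \neq i^*$ at arbitrary values and varying only $\mathbf{x}^{(i^*)}$, the right-hand side above is a constant (in $\mathbf{x}^{(i^*)}$), which forces $\lambda_{i^*} f_{i^*}(\mathbf{x}^{(i^*)})$ to be constant on all of $\F^m$. Since $\lambda_{i^*}$ is a nonzero scalar, this means $f_{i^*}$ itself is constant, contradicting the hypothesis that each $f_i$ is non-constant.

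There is no real obstacle here; the only thing to watch out for is being careful that the image of $G_\pi$ really does contain $f(\mathbf{x})$ for \emph{every} input $\mathbf{x}$ (not merely for typical ones), which follows because the correctness property of HSS holds with probability exactly $1$, so at least one accepting $\mathbf{z}$ exists for each $\mathbf{x}$. The argument does not use $t$-privacy nor the specific structure of $\POLYdmF^\ell$; it relies only on linearity of $\Rec$, independence of the sharing across the $\ell$ instances, and non-constancy of each $f_i$.
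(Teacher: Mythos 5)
Your proof is correct and follows essentially the same route as the paper's: both take a nonzero vector $\boldsymbol\lambda$ annihilating the row space (equivalently, orthogonal to the image) of $G_\pi$ and derive the identity $\sum_i \lambda_i f_i(\mathbf{x}^{(i)}) = 0$ from correctness, then contradict non-constancy of some $f_{i^*}$ using the fact that the $\ell$ instances involve disjoint variables. Your final step (fixing the other blocks and varying only $\mathbf{x}^{(i^*)}$) is just a slightly more explicit rendering of the paper's ``the variables cannot cancel'' argument.
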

\begin{proof}
    Suppose that $G_\pi$ does not have rank $\ell$.  Then there is a left kernel vector $\mathbf{v}$, so that $\mathbf{v}^T G_\pi = 0$.  Let $(f_1, \ldots, f_\ell) \in \mathcal{F}$ be as in the lemma statement. By Observation~\ref{obs: linear reconstruction is matrix multiplication}, we then have
    \[ 0 = \mathbf{v}^T G_\pi \mathbf{z} = \mathbf{v}^T \begin{bmatrix} f_1(\mathbf{x}^{(1)}) \\ \vdots \\ f_\ell(\mathbf{x}^{(\ell)}) \end{bmatrix}\] 
    for all values of  secrets $\mathbf{x}^{(1)}, \ldots, \mathbf{x}^{(\ell)}$.  In particular, 
    $\sum_{i=1}^\ell v_i f_i(\mathbf{x}^{(i)})$ is identically zero as a polynomial in the variables $x_k^{(i)}$ for $k \in [m], i \in [\ell]$.
    However, this is a contradiction because the variables that show up in $\mathbf{x}^{(i)}$ are different for different values of $i \in [\ell]$, and thus cannot cancel with each other.  
\end{proof}

\subsection{Optimal-Download HSS Implies Good Labelweight Codes}\label{sec:HSS_to_lblwt}
We now show the forward direction of the equivalence.

\begin{lemma}\label{lem:HSS_implies_lblwt}
    Let $\ell, t, s, d, m$ be integers, with $m \geq d$. Suppose there exists a $t$-private, $s$-server HSS $\pi = \left( \Share, \Eval, \Rec \right)$ for some non-trivial (see Definition~\ref{def:nontriv}) $\mathcal{F} \subseteq \text{POLY}_{d,m}(\mathbb{F})^\ell$, with download rate $\textsf{DownloadRate}(\pi)=(s-dt)/s$. Let $n = \ell / \textsf{DownloadRate}(\pi)$.  Suppose also that $\pi$ is linear over $\F$, and the $\Share$ is $t$-CNF sharing.  Then there exists a linear code $\cC \subseteq \F^n$ with rate $\textsf{DownloadRate}(\pi)$ and a labeling $\cL:[n] \to [s]$ so that $\Delta_\cL(\cC) \geq dt + 1$.
    
\end{lemma}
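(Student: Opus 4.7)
The plan is to let $\cC$ be the rowspan of the reconstruction matrix $G_\pi$ from Observation~\ref{obs: linear reconstruction is matrix multiplication}, equip it with the natural labeling into servers, and derive the labelweight bound by combining $t$-privacy (in the guise of Lemma~\ref{lem: query spread}) with a short contradiction argument.

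Concretely, I would first set $n := \sum_{j \in [s]} n_j$, observe that $\textsf{DownloadRate}(\pi) = \ell/n = (s-dt)/s$ so the rate of the code comes out right, and define $\cC := \mathrm{rowSpan}(G_\pi) \subseteq \F^n$. Since $\mathcal{F}$ is non-trivial, it contains some $(f_1,\ldots,f_\ell)$ in which every $f_i$ has a monomial with $\geq d$ distinct variables (and in particular is non-constant), so Lemma~\ref{lem:Gpi full rank} gives $\rank(G_\pi) = \ell$ and $\cC$ has dimension $\ell$. The labeling $\cL:[n] \to [s]$ is the natural one: $\cL(r) = j$ whenever the $r$-th coordinate of the concatenation $\mathbf{z} = z_1 \circ \cdots \circ z_s$ belongs to $z_j$. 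Servers with $n_j = 0$ can be dropped without loss of generality, since they contribute nothing to either the code or the download.

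The heart of the argument is to show $\Delta_\cL(\cC) \geq dt+1$. I would suppose toward a contradiction that there is a nonzero $\mathbf{m} \in \F^\ell$ for which $\mathbf{c} := \mathbf{m}^T G_\pi$ has labelweight at most $dt$, so that $\supp(\mathbf{c}) \subseteq \bigcup_{j \in J} \cL^{-1}(j)$ for some $J \subseteq [s]$ with $|J| \leq dt$. From $\pi$ I would build a new linear $t$-private HSS $\pi' = (\Share, \Eval, \Rec')$ for the singleton class $\mathcal{F}' = \{g\} \subseteq \POLY_{d,\ell m}(\F)$, where
\[ g(\mathbf{x}^{(1)},\ldots,\mathbf{x}^{(\ell)}) := \sum_{i \in [\ell]} m_i \, f_i(\mathbf{x}^{(i)}) \]
and $\Rec' := \mathbf{m}^T \Rec$. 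Correctness follows from Observation~\ref{obs: linear reconstruction is matrix multiplication}, $t$-privacy is inherited from $\pi$, and the sharing is still $t$-CNF once we view the $\ell m$ input secrets as being independently CNF-shared. Crucially, $\Rec'(\mathbf{z}) = \mathbf{c} \cdot \mathbf{z}$ depends only on the output shares $\{z_j : j \in J\}$ because $\mathbf{c}$ vanishes outside those label classes.

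The final step is to apply Lemma~\ref{lem: query spread} to $\pi'$, which requires $\{g\}$ to be non-trivial (in the sense of Definition~\ref{def:nontriv} with $\ell = 1$ and $m$ replaced by $\ell m$). Picking any $i^*$ with $m_{i^*} \neq 0$, the monomial in $f_{i^*}$ with $\geq d$ distinct variables contributes a monomial to $g$ with coefficient $m_{i^*} \neq 0$; since the variable sets $\mathbf{x}^{(1)}, \ldots, \mathbf{x}^{(\ell)}$ are disjoint, this monomial cannot be cancelled by any contribution from $f_i$ for $i \neq i^*$, and so it survives in $g$. Lemma~\ref{lem: query spread} then forces $\Rec'$ to depend on output shares from at least $dt+1$ distinct servers, contradicting $|J| \leq dt$ and completing the proof. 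The main obstacle I anticipate is precisely this non-cancellation check, together with verifying that recasting $\ell$ independent CNF sharings of $m$ secrets as a single CNF sharing of $\ell m$ secrets legitimately falls within the hypotheses of Lemma~\ref{lem: query spread}; both are routine once the framing is set up.
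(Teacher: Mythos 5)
Your proposal is correct and follows essentially the same route as the paper's proof: take $\cC = \mathrm{rowSpan}(G_\pi)$ with the natural server labeling, use Lemma~\ref{lem:Gpi full rank} for the rate, and for each nonzero $\mathbf{m}$ view $\mathbf{m}^T G_\pi$ as the reconstruction vector of an HSS for the single non-trivial polynomial $\sum_i m_i f_i(\mathbf{x}^{(i)})$, then invoke Lemma~\ref{lem: query spread}. Your explicit non-cancellation check (picking $i^*$ with $m_{i^*}\neq 0$ and using disjointness of the variable sets) is exactly the observation the paper makes, just spelled out slightly more carefully.
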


\begin{proof}
    Let $G_\pi \in \mathbb{F}^{\ell \times n}$ be the matrix representation of the linear reconstruction algorithm $\Rec$, as in Observation~\ref{obs: linear reconstruction is matrix multiplication}.   Let $\mathcal{C}$ denote the linear code spanned by the rows of $G_\pi$. 
    By Lemma~\ref{lem:Gpi full rank},  $G_\pi$ has rank $\ell$, so the rate of $\cC$ is
    \[ \frac{\ell}{n} = \textsf{DownloadRate}(\pi).\]
    It remains only to show $\Delta_{\mathcal{L}}(\mathcal{C}) \geq dt+1$ for some labeling function $\cL$.  Recall from Observation~\ref{obs: linear reconstruction is matrix multiplication} that $n = \sum_{j \in [s]} n_j$ is the total number of symbols of $\F$ that appear in output shares across all of the servers, and that each column of $G$ is associated with one such symbol. Let $\cL:[n] \to [s]$ be the labeling function so that the $j$'th column of $G$ is associated with a symbol sent by the $\mathcal{L}(j)$'th server.

    Let $\mathbf{x} = \left( \mathbf{x}_1, \ldots, \mathbf{x}_\ell \right) \in \left(\mathbb{F}^m \right)^\ell$ be the vector of $m\ell$ secrets, each of which is independently $t$-CNF shared to the $s$ servers. Let $\mathbf{f} = (f_1, \ldots, f_\ell) \in \mathcal{F}$ be a function guaranteed by the fact that $\mathcal{F}$ is nontrivial (Definition~\ref{def:nontriv}).

    As in Observation~\ref{obs: linear reconstruction is matrix multiplication}, let $\mathbf{z} \in \mathbb{F}^n$ be the vector of $n$ output shares.  Notice that with the labeling function defined above, this means that the $j$'th coordinate of $\mathbf{z}$ is sent by server $\cL(j)$.

    Let $G_{\pi,1}, \ldots, G_{\pi,\ell}$ denote the rows of $G_\pi$ and observe that
    \begin{equation*}
        G_\pi \mathbf{z} = \left[ \begin{array}{c}
            G_{\pi,1}^T \mathbf{z}  \\
            G_{\pi, 2}^T \mathbf{z} \\
            \vdots\\
            G_{\pi, \ell}^T \mathbf{z}
        \end{array} \right] = \mathbf{f}(\mathbf{x}) = \left[ \begin{array}{c}
            f_1(\mathbf{x}^{(1)})  \\
             f_2(\mathbf{x}^{(2)}) \\
            \vdots\\
            f_\ell (\mathbf{x}^{(\ell)})
        \end{array} \right]
    \end{equation*}

    Let $\mathbf{v} \in \F^\ell \setminus \lbrace \mathbf{0} \rbrace$, and consider the codeword $\mathbf{c} \in \cC$ given by $\mathbf{c} = \mathbf{v}^TG_\pi$.  Then from the above, 
    \[ \mathbf{c}^T \mathbf{z} = \sum_{i=1}^\ell v_i f_i(\mathbf{x}^{(i)}) =: f(\mathbf{x}). \]
    In particular, for $\mathcal{F}' = \inset{ f }$, $\mathbf{c}$ gives a linear reconstruction algorithm $\Rec'$ for a linear HSS $\pi' = (\Share, \Eval, \Rec')$ for $\mathcal{F}'$.  Notice that $\mathcal{F}'$ is also non-trivial; indeed, each $f_i(\mathbf{x}^{(i)})$ contains distinct variables (so they cannot cancel), and each $f_i$ has a monomial with at least $d$ distinct variables; thus $f$ contains a monomial with at least $d$ distinct variables.
    
    Then by Lemma~\ref{lem: query spread}, $\Rec'$ must depend on output shares from at least $dt + 1$ distinct servers $j \in [s]$.  It follows from the definition of $\cL$ that $\Delta_\cL(\mathbf{c}) \geq dt + 1$.  Since $\mathbf{v}$, and hence $\mathbf{c}$, was arbitrary (non-zero), this implies that $\Delta_\cL(\cC) \geq dt + 1$, as desired.
\end{proof}

Lemma \ref{lem:HSS_implies_lblwt} shows that the existence of an optimal-rate linear HSS scheme for polynomials implies the existence of a linear code with with optimum labelweight.  The converse is also true, but before we prove that result, we need to dive a bit deeper into the structure of codes with good labelweight, which we do next.

\subsection{Structural Properties of Labeling Functions for Optimal Labelweight Codes}
We establish a few structural properties of the labeling functions $\cL$ that appear in codes with optimal labelweight.

Our main structural result states that for any code $\cC$ of rate $(s-dt)/s$, if $\cL$ is such that  $\Delta_\cL(\cC) \geq dt + 1$, then in fact the labeling function $\cL$ must be ``balanced,'' in the sense that each label shows up the same number of times.  Below and throughout the paper, for $\cL:[n] \to [s]$, we use $\cL^{-1}(\lambda) = \inset{ j \in [n] : \cL(j) = \lambda }$ to denote the preimage of $\lambda \in [s]$ under $\cL$.

\begin{lemma}\label{lem: dl is balanced}
    Let $\mathcal{C} \subseteq \mathbb{F}^n$ be a rate $(s-dt)/s$ linear code and $\mathcal{L}:[n] \to [s]$ a labeling of its coordinates by $[s]$ so that $\Delta_\mathcal{L}(\mathcal{C}) \geq dt+1$. Then for all $\lambda\neq \lambda' \in [s]$, $\abs{ \mathcal{L}^{-1}(\lambda)} = \abs{\mathcal{L}^{-1}(\lambda')}$. 
\end{lemma}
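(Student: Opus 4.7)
The plan is to use a double-counting argument. Let $n_\lambda = |\mathcal{L}^{-1}(\lambda)|$ for each $\lambda \in [s]$, so that $\sum_{\lambda \in [s]} n_\lambda = n$, and recall that the rate assumption gives $n = s\ell/(s-dt)$.

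First I would establish the following ``restriction'' lemma: for any subset $S \subseteq [s]$ with $|S| = s-dt$, the projection of $\mathcal{C}$ onto the coordinates indexed by $\mathcal{L}^{-1}(S) := \bigcup_{\lambda \in S} \mathcal{L}^{-1}(\lambda)$ is injective. Indeed, any codeword in the kernel of this projection has support contained in $\mathcal{L}^{-1}([s] \setminus S)$, which touches at most $dt$ distinct labels. By the hypothesis $\Delta_{\mathcal{L}}(\mathcal{C}) \geq dt+1$, such a codeword must be zero. Since $\mathcal{C}$ has dimension $\ell$, this forces $\sum_{\lambda \in S} n_\lambda \geq \ell$.

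Next I would sum this inequality over all $\binom{s}{s-dt}$ choices of $S$. Each label $\lambda$ appears in exactly $\binom{s-1}{s-dt-1}$ such subsets, so the left-hand side becomes $\binom{s-1}{s-dt-1} \cdot n$, while the right-hand side becomes $\binom{s}{s-dt} \cdot \ell$. Using the identity $\binom{s}{s-dt} = \frac{s}{s-dt}\binom{s-1}{s-dt-1}$ together with $n = s\ell/(s-dt)$, the two sides are in fact equal. Hence every individual inequality must be tight:
\[
\sum_{\lambda \in S} n_\lambda = \ell \qquad \text{for every } S \subseteq [s] \text{ with } |S| = s-dt.
\]

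The conclusion then follows immediately: given any two labels $\lambda, \lambda' \in [s]$, pick a common subset $S'' \subseteq [s] \setminus \{\lambda, \lambda'\}$ of size $s-dt-1$ and apply the equality to $S'' \cup \{\lambda\}$ and $S'' \cup \{\lambda'\}$; subtracting gives $n_\lambda = n_{\lambda'}$. I do not anticipate a genuine obstacle here; the entire argument is an averaging/equality-in-counting argument, and the only subtlety is verifying that the rate assumption $\ell/n = (s-dt)/s$ is precisely what makes the sum-of-bounds inequality tight, so that per-set equality is forced.
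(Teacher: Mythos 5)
Your proof is correct, and it takes a genuinely different route from the paper's. The paper argues by contradiction with a three-way case analysis on the relationship between $s$ and $2dt$: in each case it selects an extremal (minimizing or maximizing) collection of labels whose preimage is too small, and then exhibits a nonzero left-kernel vector $\mathbf{m}$ of the corresponding column submatrix of a generator matrix, so that $\mathbf{m}^T G$ is a codeword of labelweight at most $dt$. Your argument isolates the same underlying dimension count as a uniform ``restriction'' inequality---$\bigl|\mathcal{L}^{-1}(S)\bigr| \geq \ell$ for \emph{every} $(s-dt)$-subset $S$, since the projection onto $\mathcal{L}^{-1}(S)$ is injective---and then replaces the case analysis by an averaging step: summing over all $\binom{s}{s-dt}$ subsets and using $\binom{s}{s-dt} = \tfrac{s}{s-dt}\binom{s-1}{s-dt-1}$ together with $n = s\ell/(s-dt)$ shows the summed inequality is tight, forcing per-set equality, from which balancedness follows by subtracting over two subsets differing in one label. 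This is cleaner and avoids the cases entirely; it also directly yields the stronger conclusion $\bigl|\mathcal{L}^{-1}(S)\bigr| = \ell$ for every such $S$, which is essentially the content of the paper's Corollary~\ref{cor: generator of good labelweight code is MDS like} (nonsingularity of $G(\Lambda)$), obtained there only after balancedness is established. The only housekeeping worth making explicit is that $s \geq dt+1$ (which follows from $\Delta_{\mathcal{L}}(\mathcal{C}) \geq dt+1$ for a nonzero code, as the paper notes in its Case~3) and $dt \geq 1$, so that the subset $S''$ of size $s-dt-1$ avoiding both $\lambda$ and $\lambda'$ exists.
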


\begin{proof} 
    Since $\cC$ has rate $(s-dt)/s$, there exists some $j \in \mathbb{Q}^+$ such that $\mathcal{C}$ has block length $n = js$ and dimension $\ell = j(s-dt)$.  Let $G \in \mathbb{F}^{j(s-dt) \times js}$ be an arbitrary generator matrix of $\mathcal{C}$. The lemma holds if and only if for each $\lambda \in [s]$, we have $|\mathcal{L}^{-1}(\lambda)| = js/s = j$.  Thus, we assume towards a contradiction that this is not the case and break the proof into three cases, depending on the relationship between $s$ and $2dt$.

    \noindent
    \textbf{Case 1: $s=2dt$.} In this case, consider $\lambda_1, \ldots, \lambda_{dt}$ chosen so that $\sum_{i=1}^{dt} |\mathcal{L}^{-1}(\lambda_i)|$ is minimized. As we assume towards a contradiction that there exists $\lambda \in [s]$ such that $\abs{\mathcal{L}^{-1}(\lambda)} \neq j$, there exists a choice of $\lambda_1, \lambda_2, \ldots, \lambda_{dt} \in [s]$ so that 
    \begin{equation*}
        \left|\bigcup_{i=1}^{dt} \mathcal{L}^{-1}(\lambda_i)\right| = jdt-\sigma < jdt
    \end{equation*}
    for some $\sigma >0$. Without loss of generality we may assume that $\lambda_1, \ldots, \lambda_{dt}$ label the final $jdt-\sigma$ columns of $G$. Since $s=2dt$, $G$ has $j(s-dt)=jdt$ rows; hence there exists some nonzero $\mathbf{m} \in \mathbb{F}^{jdt}$ such that $\mathbf{m}^T G$ has no support in its final $jdt-\sigma$ coordinates. But then $\Delta_\mathcal{L}(\mathbf{m}^T G) \leq dt$, contradicting $\Delta_\mathcal{L}(\mathcal{C})\geq dt+1$.

    \noindent
    \textbf{Case 2: $s\geq 2dt+1$.} In this case, consider $\lambda_1, \ldots, \lambda_{dt}$ so that $\sum_{i=1}^{dt} |\mathcal{L}^{-1}(\lambda_i)|$ is maximized. As we assume towards a contradiction that there exists $\lambda \in [s]$ such that $\abs{\mathcal{L}^{-1}(\lambda)} \neq j$, there exists a choice of $\lambda_1, \lambda_2, \ldots, \lambda_{dt} \in [s]$ so that
    \begin{equation*}
        \left|\bigcup_{i=1}^{dt} \mathcal{L}^{-1}(\lambda_i)\right| = jdt + \sigma > jdt
    \end{equation*}
    for some $\sigma > 0$. Without loss of generality we may assume that $\lambda_1, \ldots, \lambda_{dt}$ label the final $jdt+\sigma$ columns of $G$, leaving the first $js-(jdt+\sigma) = j(s-dt) -\sigma$ columns labeled with $[s] \setminus \lbrace \lambda_1, \ldots, \lambda_{dt} \rbrace$. Since $G$ has precisely $j(s-dt)$ rows, there exists some $\mathbf{m} \in \mathbb{F}^{jdt}$ such that $\mathbf{m}^TG$ has no support in its first $j(s-dt) -\sigma$ coordinates. It follows that $\Delta_\mathcal{L}(\mathbf{m}^TG) \leq dt$, contradicting $\Delta_\mathcal{L}(\mathcal{C})\geq dt+1$.

    \noindent
    \textbf{Case 3: $s \leq 2dt - 1$.} Note that $s \geq dt + 1$, as we assume that $\Delta_{\mathcal{L}}(\cC) \geq dt + 1$.  Thus, we may write $s = dt + \hat{s}$, where $\hat{s} \in [dt-1]$. In this case, consider $\lambda_1, \ldots, \lambda_{\hat{s}}$ so that $\sum_{i=1}^{\hat{s}} |\mathcal{L}^{-1}(\lambda_i)|$ is minimized. As we assume towards a contradiction that there exists $\lambda \in [s]$ such that $\abs{\mathcal{L}^{-1}(\lambda)} \neq j$, there exists a choice of $\lambda_1, \lambda_2, \ldots, \lambda_{\hat{s}} \in [s]$ so that
    \begin{equation*}
        \left|\bigcup_{i=1}^{\hat{s}} \mathcal{L}^{-1}(\lambda_i) \right| = j\hat{s} - \sigma < j\hat{s}
    \end{equation*}
    for some $\sigma > 0$. Without loss of generality we may assume that $\lambda_1, \ldots, \lambda_{\hat{s}}$ label the first $j\hat{s} - \sigma$ columns of $G$. Since there are $j\hat{s}$ rows of $G$, there exists some $\mathbf{m} \in \mathbb{F}^{j\hat{s}}$ such that $\mathbf{m}^TG$ has no support in its first $j\hat{s} -\sigma$ coordinates. It follows that $\Delta_\mathcal{L}(\mathbf{m}^TG) \leq dt$, contradicting $\Delta_\mathcal{L}(\mathcal{C})\geq dt+1$.
\end{proof}

\begin{corollary}\label{cor: j is an int, labeling is wlog}
     Let $s,d,t \in \mathbb{Z}^+$ satisfying $s-dt > 0$. Let $\mathcal{C}$ be a rate $(s-dt)/s$ linear code and $\mathcal{L}$ a labeling of its coordinates by $[s]$ so that $\Delta_\mathcal{L}(\mathcal{C}) \geq dt+1$. Then there exists some $j \in \mathbb{Z}^+$ such that:
     \begin{itemize}
         \item[(i)] $\mathcal{C}$ has length $n = js$ and dimension $\ell = j(s-dt)$; 
         \item[(ii)] $|\mathcal{L}^{-1}(\lambda)| = j $ for all $\lambda \in [s]$; 
         \item[(iii)] there is a re-ordering of the coordinates of $\cC$ so that $\mathcal{L}:[js] \to [s]$ is given by $\mathcal{L}:x \mapsto \lceil x/j \rceil$.
     \end{itemize}
\end{corollary}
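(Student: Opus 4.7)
The plan is to deduce this corollary almost immediately from Lemma~\ref{lem: dl is balanced}. That lemma guarantees that $|\cL^{-1}(\lambda)|$ takes the same value on every $\lambda \in [s]$; call this common value $j$. Since $\cL$ is a labeling function (hence a surjection), each preimage is nonempty, so $j \geq 1$ is a positive integer. The sets $\inbrac{\cL^{-1}(\lambda)}_{\lambda \in [s]}$ partition $[n]$, so $n = \sum_{\lambda \in [s]} |\cL^{-1}(\lambda)| = js$. This immediately gives (ii) and the length portion of (i); combining $n = js$ with the rate hypothesis $\ell/n = (s-dt)/s$ yields $\ell = j(s-dt)$, completing (i).

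For (iii), I would apply any permutation of $[n]$ that sends $\cL^{-1}(1)$ onto $\inset{1, \ldots, j}$, $\cL^{-1}(2)$ onto $\inset{j+1, \ldots, 2j\}}$, and so on; this is well-defined because each preimage has size exactly $j$. Permuting the coordinates of $\cC$ by this permutation gives a new code $\cC' \subseteq \F^n$ whose corresponding labeling is exactly the map $x \mapsto \lceil x/j \rceil$. Because permuting coordinates merely permutes the support of each codeword, labelweight is preserved under the simultaneous permutation of $\cC$ and $\cL$, so the hypothesis $\Delta_\cL(\cC) \geq dt+1$ is inherited by $\cC'$ with the new labeling. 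Since we are free to work up to a re-ordering of coordinates throughout the remainder of the paper, we may identify $\cC$ with $\cC'$.

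There is really no main obstacle here: the corollary is just a packaging step that consolidates the structural information given by Lemma~\ref{lem: dl is balanced} into a normal form that later results in Section~\ref{sec:codes} can freely assume. The only point worth emphasizing is that the rational-valued $j$ introduced inside the proof of Lemma~\ref{lem: dl is balanced} is in fact an integer; this is exactly what comes out once the lemma's conclusion is applied, since $j = n/s = |\cL^{-1}(\lambda)|$ is the cardinality of a finite set and hence lies in $\mathbb{Z}^+$.
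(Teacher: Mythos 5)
Your proposal is correct and follows essentially the same route as the paper: invoke Lemma~\ref{lem: dl is balanced} to get the common preimage size $j$, deduce $n = js$ and $\ell = j(s-dt)$ from the rate, and sort coordinates by label for (iii). In fact you spell out parts (ii) and (iii) in more detail than the paper does, which simply asserts that the lemma establishes all three points.
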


\begin{proof}
Define $j = n/s$ so that $js$ is equal to the block length of $\cC$.  Then Lemma~\ref{lem: dl is balanced} implies that $|\mathcal{L}^{-1}(\lambda)| = j$ for all $\lambda \in [s]$, and in particular that $j$ is an integer.  This establishes all three points.
\end{proof}

\begin{corollary}\label{cor: generator of good labelweight code is MDS like}
Let $j$ be a positive integer, and 
    let $\mathcal{C}\subseteq \mathbb{F}_q^{js}$ be a linear code of length $js$ and  dimension $j(s-dt)$.  Let $\mathcal{L}:[js] \to [s]$ be a labeling such that $\Delta_\mathcal{L}(\mathcal{C}) \geq dt+1$. Let $G \in \mathbb{F}_q^{j(s-dt)\times js}$ be an arbitrary generator matrix for $\mathcal{C}$.  Given $\Lambda \subseteq [s]$, let $G(\Lambda)$ be the submatrix of $G$ consisting of the columns $G_i$ of $G$ for all $i \in [js]$ so that $\mathcal{L}(i) \in \Lambda$.
    Then for any $\Lambda  \subseteq [s]$ with $|\Lambda| = s - dt$,  $G(\Lambda) \in \mathbb{F}^{j(s-dt) \times j(s-dt)}$ and $\det(G(\Lambda)) \neq 0$.
\end{corollary}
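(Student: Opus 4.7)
The plan is to argue by a single contradiction using the structural corollary (Corollary \ref{cor: j is an int, labeling is wlog}) on labeled preimages. First I would invoke that corollary to assert that for every $\lambda \in [s]$ we have $\abs{\cL^{-1}(\lambda)} = j$. Since $G(\Lambda)$ is obtained by keeping all columns whose labels lie in $\Lambda$, and $\abs{\Lambda} = s-dt$, this gives
\[ \abs{\{i \in [js] : \cL(i) \in \Lambda \}} = j(s-dt),\]
so $G(\Lambda)$ has exactly $j(s-dt)$ columns, matching the row count. Hence $G(\Lambda)$ is a square matrix of the claimed dimensions.

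Next, I would suppose for contradiction that $\det(G(\Lambda)) = 0$. Then there exists a nonzero row vector $\mathbf{m} \in \F_q^{j(s-dt)}$ such that $\mathbf{m}^T G(\Lambda) = \mathbf{0}$. Consider the codeword $\mathbf{c} := \mathbf{m}^T G \in \cC$. I would first confirm that $\mathbf{c}$ is nonzero: $G$ is a generator matrix of a code of dimension $j(s-dt)$ and has exactly $j(s-dt)$ rows, so its rows are linearly independent, giving $\mathbf{m}^T G \neq \mathbf{0}$ whenever $\mathbf{m} \neq \mathbf{0}$.

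Now by the defining property of $G(\Lambda)$, the equation $\mathbf{m}^T G(\Lambda) = \mathbf{0}$ says precisely that $c_i = 0$ for every $i$ with $\cL(i) \in \Lambda$. Consequently the set of labels touched by $\supp(\mathbf{c})$ is contained in $[s] \setminus \Lambda$, a set of size $s - (s-dt) = dt$. Therefore $\Delta_\cL(\mathbf{c}) \leq dt$, contradicting the hypothesis $\Delta_\cL(\cC) \geq dt+1$. Hence $\det(G(\Lambda)) \neq 0$.

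The proof is short and essentially mechanical given the preceding corollary, so there is no substantial obstacle. The only point requiring care is ensuring that the candidate codeword $\mathbf{c} = \mathbf{m}^T G$ is itself nonzero; this is handled by noting that the row count of $G$ equals the dimension of $\cC$, forcing full row rank, which is a standard property of generator matrices that I would invoke explicitly rather than via Lemma \ref{lem:Gpi full rank} (the latter is tailored to the HSS setting).
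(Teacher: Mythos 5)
Your proof is correct and follows essentially the same route as the paper's: square dimensions from Corollary~\ref{cor: j is an int, labeling is wlog}, then a left-kernel vector of $G(\Lambda)$ yields a codeword supported only on labels outside $\Lambda$, contradicting $\Delta_\cL(\cC) \geq dt+1$. Your explicit check that $\mathbf{m}^T G \neq \mathbf{0}$ (via full row rank of $G$) is a small point the paper leaves implicit, but otherwise the arguments coincide.
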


\begin{proof}
     The dimensions of $G$ follow immediately from Corollary \ref{cor: j is an int, labeling is wlog}; it remains only to show that $G(\Lambda)$ is non-singular. Assume towards a contradiction there exists $\mathbf{m} \in \mathbb{F}_2^{j(s-dt)}$ such that $\mathbf{m}^TG(\Lambda) = \mathbf{0}$.
     But then the support of $\mathbf{m}^T G \in \cC$ does not include any coordinates labeled with any $\lambda \in \Lambda$.  In particular, $\Delta_\cL(\mathbf{m}^T G) \leq dt$, which contradicts $\Delta_\cL(\cC) \geq dt + 1$.
\end{proof}

\subsection{Good Labelweight Codes Imply Optimal-Download HSS}\label{sec:lblwt_to_HSS}

Now, we can prove a converse to Lemma~\ref{lem:HSS_implies_lblwt}.  

\begin{theorem}\label{thm:lblwtImpliesHSS}
    Let $\ell, t, s, d, m, n$ be integers, with $m \geq d$ and $\ell s = n(s-dt)$. 
    Suppose that there exists a linear code $\cC \subseteq \F^n$ and a labeling $\cL: [n] \to [s]$ so that $\Delta_\cL(\cC) \geq dt + 1$.
    Then there exists a $t$-private, $s$-server linear HSS $\pi = (\Share, \Eval, \Rec)$ for $\POLYdmF^\ell$ with download rate $\textsf{DownloadRate}(\pi) \geq (s - dt)/s$.
\end{theorem}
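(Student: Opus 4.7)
The plan is to use any generator matrix of $\cC$ to define the reconstruction algorithm, and then to build $\Eval$ by expanding $\mathbf{f}(\mathbf{x})$ into monomials in the CNF shares and routing each monomial's computation to a carefully chosen set of servers that jointly hold all of its factors. By Corollary~\ref{cor: j is an int, labeling is wlog}, after reordering coordinates we may assume $n = js$, $\ell = j(s-dt)$ for some positive integer $j$, and $\cL(x) = \lceil x/j \rceil$, so that each server $\lambda \in [s]$ is associated with the block of $j$ output symbols indexed by $\cL^{-1}(\lambda)$. Fix a generator matrix $G \in \F^{\ell \times n}$ of $\cC$, let $\Share$ be independent $t$-CNF sharing of the $m\ell$ input secrets $x_k^{(i)}$, and set $\Rec(z_1, \ldots, z_s) := G \cdot (z_1 \circ \cdots \circ z_s)$, where $\circ$ denotes concatenation. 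It remains to design $\Eval$ and verify the three HSS requirements.

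To define $\Eval$, I regard $\mathbf{f}(\mathbf{x}) \in \F^\ell$ as a formal polynomial in the CNF shares by substituting $x_k^{(i)} = \sum_{T \subseteq [s],\, |T|=t} y_{k,i,T}$ into each $f_i$. This writes $\mathbf{f}(\mathbf{x})$ as an $\F$-linear combination of vectors of the form $M \cdot \mathbf{e}_i$, where $M$ ranges over \emph{share-monomials} $\prod_{h=1}^{d'} y_{k_h,\, i,\, T_h}$ with $d' \leq d$ and $\mathbf{e}_i$ is the standard basis vector recording which coordinate of $\mathbf{f}(\mathbf{x})$ the monomial contributes to. The crucial structural point is that the set of servers who jointly know every $y$-factor of $M$ is $\Lambda_M := [s] \setminus \bigcup_{h=1}^{d'} T_h$, and since $|\bigcup_h T_h| \leq d' t \leq dt$ we have $|\Lambda_M| \geq s - dt$.

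For each share-monomial $M$ appearing in the expansion, select any subset $\Lambda_M^\star \subseteq \Lambda_M$ with $|\Lambda_M^\star| = s - dt$. By Corollary~\ref{cor: generator of good labelweight code is MDS like}, the submatrix $G(\Lambda_M^\star) \in \F^{\ell \times \ell}$ is invertible, so I can set $\mathbf{w}_M := G(\Lambda_M^\star)^{-1}(M \cdot \mathbf{e}_i)$ and extend it to $\mathbf{z}_M \in \F^n$ by placing $\mathbf{w}_M$ on the coordinates with labels in $\Lambda_M^\star$ and zeros elsewhere; then $G \mathbf{z}_M = M \cdot \mathbf{e}_i$ by construction. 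I define $\Eval(f, \lambda, \cdot)$ to output the $\lambda$-block of $\sum_M \mathbf{z}_M$. This is a legal local computation: the $\lambda$-block of $\mathbf{z}_M$ is nonzero only when $\lambda \in \Lambda_M^\star \subseteq \Lambda_M$, and in that case its entries are fixed scalars (read off from $G(\Lambda_M^\star)^{-1}$) times the scalar $M$, which server $\lambda$ can indeed evaluate from its own shares.

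Correctness then follows by linearity: $\Rec(\mathbf{z}) = G \sum_M \mathbf{z}_M = \sum_M M \cdot \mathbf{e}_{i(M)} = \mathbf{f}(\mathbf{x})$. The $t$-privacy of $\pi$ is inherited directly from $t$-CNF sharing, since $\Share$ is unchanged and only post-processing happens afterwards. Finally, the total download is $n$ symbols of $\F$ while the output has $\ell = n(s-dt)/s$ symbols of $\F$, giving download rate exactly $(s-dt)/s$. The main obstacle in this plan is the legality of $\Eval$ --- ensuring that each server's emitted block is actually a polynomial in that server's shares --- and this is precisely the step that forces the use of Corollary~\ref{cor: generator of good labelweight code is MDS like}: good labelweight is exactly the structural condition guaranteeing that for every degree-$\leq d$ share-monomial $M$, there exists an invertible $\ell \times \ell$ submatrix of $G$ whose columns are labeled only by servers capable of computing $M$.
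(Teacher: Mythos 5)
Your proposal is correct and follows essentially the same route as the paper: CNF sharing, $\Rec$ given by a generator matrix $G$, and an $\Eval$ built monomial-by-monomial using the invertibility of $G(\Lambda)$ for every label set $\Lambda$ of size $s-dt$ (Corollary~\ref{cor: generator of good labelweight code is MDS like}). The only difference is presentational: you write down the solution explicitly by inverting $G(\Lambda_M^\star)$ for each share-monomial, whereas the paper packages the same per-monomial decomposition as a block-diagonal linear system $S\mathbf{e}=\mathbf{g}$ and argues that $S$ has full row rank.
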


\begin{proof}
    The proof is by construction; we give an example of this construction in Appendix \ref{sec: example (apx)}, so the reader may wish to flip to that as they read the proof. 
    
    Let $\cC$ be as in the theorem statement, and let $G \in \F^{\ell \times n}$ be any generator matrix for $\cC$.
    In order to define $\pi$, we need to define the functions $\Share$, $\Eval$, and $\Rec$.  For $\Share$, we will use $t$-CNF sharing (Definition~\ref{def:CNF}).  We define $\Rec$ using the generator matrix $G$.  In particular, we will (soon) define $\Eval$ so that server $j \in [s]$ returns $n_j := |\cL^{-1}(j)|$ elements of $\F$ as output shares.  We will gather these output shares into a vector $\mathbf{z} \in \F^n$, where $n = \sum_{j \in [s]} n_j$.  Then $\Rec$ will be given by
    \[ \Rec(\mathbf{z}) = G\mathbf{z}.\]
    
    Finally, it remains to define $\Eval$.  To do so, we will set up a linear system that essentially says ``the $\Rec$ function we just defined is correct.''  Then we will show that this linear system has a solution, and that will yield our $\Eval$ function.

    Below, we assume without loss of generality that the function $\mathbf{f} = (f_1, \ldots, f_\ell)$ that the HSS scheme is trying to compute has $f_j(x_1, \ldots, x_m) = \prod_{i=1}^d x_i$ for all $j \in [\ell]$, and we will define $\Eval(\mathbf{f}, j, \mathbf{y}_j)$ on only this $\mathbf{f}$.  To obtain the general result for any polynomials $(p_1, p_2, \ldots, p_\ell) \in \POLYdmF^\ell$, we first observe that the argument goes through if the $f_j$'s are \emph{any} monomials of degree at most $d$, possibly with a leading coefficient; this follows by re-ordering the secrets, and possibly including some dummy secrets that are identically equal to a constant (that is, increasing the parameter $m$, which does not appear in any of the results) to obtain monomials of degree less than $d$ and/or with a leading coefficient. 
    Then to pass to general polynomials, and not just monomials, we observe that since $\Rec$ is linear, we may define $\Eval$ additively; that is, if $p_i(\mathbf{x}^{(i)}) = \sum_r f_{i,r}(\mathbf{x}^{(i)})$ for some monomials $f_{i,r}$, server $j$ will compute $\Eval$ on each $\mathbf{f}_r = (f_{1,r}, \ldots, f_{\ell,r})$, and then return their sum.

    Now we return to the task of setting up a linear system to define $\Eval$ for the particular function $\mathbf{f}$ defined above.
    In order to set up this linear system, we introduce some notation.  Let 
    $\mathcal{T} = \lbrace T \subseteq [s] : |T| = t \rbrace$  be the set of size-$t$ subsets of $[s]$.  Let $\mathbf{x} = (\mathbf{x}^{(1)}, \ldots, \mathbf{x}^{(\ell)}) \in (\F^m)^\ell$ denote the secrets to be shared.  For $T \in \mathcal{T}$, $r \in [\ell]$, and $j \in [m]$, let $y^{(r)}_{j, T}$ denote the CNF shares of $x^{(r)}_j$, so 
    \[ x^{(i)}_j = \sum_T y^{(i)}_{j,T}.\]  Thus, for each $i \in [\ell]$ the function we would like to recover is
    \begin{equation}\label{eq:want}
    f_i(\mathbf{x}^{(i)}) = \sum_{\mathbf{T} \in \mathcal{T}^d} \prod_{k=1}^d y^{(i)}_{k,T_k}. 
    \end{equation}
    Let $\mathbf{y}_j$ denote the set of CNF shares that server $j$ holds: $\mathbf{y}_j = (y_{k,T}^{(i)} : k \in [d], T \in \mathcal{T}, i \in [\ell], j \not\in T)$.  We will treat $\mathbf{y}_j$ as tuples of formal variables.

Next, we define the following classes of monomials, in the variables $y_{j,T}^{(i)}$.  Let
\begin{equation*}
    \mathcal{M} = \left\lbrace y_{1,T_1}^{(i)} y_{2, T_2}^{(i)} \cdots y_{d, T_d}^{(i)} \; : \; \mathbf{T} \in \mathcal{T}^d, i \in [\ell] \right\rbrace.
\end{equation*}
Given a server $j \in [s]$, let
\begin{equation*}
    \mathcal{M}_j  = \left\lbrace y_{1,T_1}^{(i)} y_{2, T_2}^{(i)} \cdots y_{d, T_d}^{(i)} \in \mathcal{M}  \; : \; \mathbf{T} \in \mathcal{T}^d, i \in [\ell], j \not\in \bigcup_{k \in [d]} T_k \right\rbrace.
\end{equation*}
That is, $\mathcal{M}_j$ is
the subset of $\mathcal{M}$ locally computable by server $j$.  

The function $\Eval(\mathbf{f}, j, \mathbf{y}_j)$ that determines server $j$'s output shares will be defined by a sequence of $n_j = |\cL^{-1}(j)|$ polynomials of degree $d$, constructed from the monomials in $\mathcal{M}_j$.  To that end, we will define a vector of variables $\mathbf{e} \in \F^{ \sum_{r \in [n]} |\mathcal{M}_{\cL(r)}|}$, indexed by pairs $(r, \chi)$ for $\chi \in \cM_{\cL(r)}$.  The vector $\mathbf{e}$ will encode the function $\Eval$ as follows.  For each $r \in [n]$,  we define $z_r = z_r(\mathbf{y}_{\cL(r)})$ to be the polynomial in the variables $\mathbf{y}_{\cL(r)}$ given by
\begin{equation}\label{eq:zr}
z_r(\mathbf{y}_{\cL(r)})  := \sum_{\chi \in \cM_{\cL(r)}} \mathbf{e}_{r, \chi} \cdot \chi(\mathbf{y}_{\cL(r)}). 
\end{equation}
Then we define $\Eval$ by
\[ \Eval( \mathbf{f}, j, \mathbf{y}_j ) = ( z_r(\mathbf{y}_j) : r \in \cL^{-1}(j) ) \in \F^{n_j} \]
for each server $j \in [s]$.

Now we will set up our system to solve for the coefficients in $\mathbf{e}$, which will define $\Eval$ as above.

Define the matrix $S \in \F^{\ell |\cM| \times \sum_{r \in [n]}|\cM_{\cL(r)}|}$ as follows.
\begin{itemize}
    \item The rows of $S$ are indexed by pairs $(i, \fm) \in [\ell] \times \cM$.
    \item The columns of $S$ are indexed by pairs $(r, \chi)$ for $r \in [s]$ and $\chi \in \cM_r$.
    \item The entry of $S$ indexed by $(i,\fm)$ and $(r, \chi)$ is given by:
    \[ S[(i,\fm), (r, \chi)] = \begin{cases} G[i,r] & \fm = \chi \\ 0 & \text{else} \end{cases}.\]
\end{itemize}

Define a vector $\mathbf{g} \in \F^{\ell|\cM|}$ so that the coordinates of $\mathbf{g}$ are indexed by pairs $(i,\fm) \in [\ell] \times \cM$, so that
\[ \mathbf{g}[(i,\fm)] = \begin{cases} 1 & \psi_i(\fm) \\ 
0 & \text{else} \end{cases}\]
where
\[ \psi_i(\fm) = \begin{cases} 1 & \text{ $\fm$ is of the form $\prod_{k=1}^d y_{k,T_k}^{(i)}$ for some $\mathbf{T} \in \mathcal{T}^d$ } \\ 0 & \text{else} \end{cases}\]
Notice that \eqref{eq:want} implies that for $i \in [\ell]$,
\begin{equation}\label{eq:want2} f_i(\mathbf{x}^{(i)}) = \sum_{\mathbf{T} \in \mathcal{T}^d} \prod_{k=1}^d y_{k,T_k}^{(i)} = \sum_{\fm \in \mathcal{M}} \psi_i(\fm) \cdot \fm(\mathbf{y}).\end{equation}

\begin{claim}\label{claim:correct}
    Suppose that $S \cdot \mathbf{e} = \mathbf{g}$.  Then the HSS scheme $\pi = (\Share, \Eval, \Rec)$, where $\Share$ and $\Rec$ are as above, and $\Eval$ is defined by $\mathbf{e}$ as above, is correct.  (That is, it satisfies the \emph{Correctness} property in Definition~\ref{def:HSS}).
\end{claim}
\begin{proof}
    Recall that we will set $\Rec(\mathbf{z}) = G\mathbf{z}$, where the $r$'th coordinate of $\mathbf{z}$ is given by $z_r$ as in \eqref{eq:zr}. In particular, for $i \in [\ell]$, the $i$'th output of $\Rec(\mathbf{z})$ is
    \begin{align}
    \label{eq:recis}
    (G\mathbf{z})_i &= \sum_{r \in [n]} G[i,r] \cdot  z_r = \sum_{r \in [n]} G[i,r] \left( \sum_{\chi \in \cM_{\cL(r)}} \mathbf{e}_{(r, \chi)} \chi( \mathbf{y}_{\cL(r)} ) \right).
    \end{align}
    On the other hand, consider the quantity $f_i(\mathbf{x}^{(i)})$ that we want to recover, from \eqref{eq:want2}.  This is:
    \begin{align*}
    f_i(\mathbf{x}^{(i)}) &= \ \sum_{\fm \in \mathcal{M}} \psi_i(\fm) \cdot \fm(\mathbf{y}) \\
    &= \sum_{\fm \in \mathcal{M}} (S \cdot \mathbf{e})_{(i,\fm)} \cdot \fm(\mathbf{y}) \qquad \text{ as $S\cdot\mathbf{e} = \mathbf{g}$} \\
    &= \sum_{\fm \in \mathcal{M}} \left( 
        \sum_{r \in [n]} \sum_{\chi \in \cM_{\cL(r)}} S[ (i,\fm), (r, \chi) ] \cdot \mathbf{e}_{(r,\chi)}
    \right) \cdot \fm(\mathbf{y}) \\
      &= \sum_{\fm \in \mathcal{M}} \left( 
        \sum_{r \in [n]} \sum_{\chi \in \cM_{\cL(r)}} \mathbf{1}[\fm = \chi] G[i,r] \cdot \mathbf{e}_{(r,\chi)}
    \right) \cdot \fm(\mathbf{y}) \\
          &= \sum_{\fm \in \mathcal{M}} \left( 
        \sum_{r \in [n]} \mathbf{1}[\fm \in \mathcal{M}_{\cL(r)}] G[i,r] \cdot \mathbf{e}_{(r,\fm)}
    \right) \cdot \fm(\mathbf{y}) \\
    &= \sum_{r \in [n]} G[i,r] \left( \sum_{\chi \in \cM_{\cL(r)}} \mathbf{e}_{(r, \chi)} \chi( \mathbf{y}_{\cL(r)} ) \right) \\
    &= (G\mathbf{z})_i \qquad \text{ by \eqref{eq:recis} } \\
    &= \Rec(\mathbf{z})_i.
    \end{align*}
    Thus, $\Rec$ returns what it is supposed to with probability $1$, and the HSS scheme is correct.
\end{proof}

Given Claim~\ref{claim:correct}, we now only need to show that we can find a vector $\mathbf{e}$ so that $S \cdot \mathbf{e} = \mathbf{g}$.  To do this, we will show that the matrix $S$ has full row rank, and in particular we can solve the above affine system.

\begin{claim}\label{claim:fullrank}

Let $S$ be as above.  Then $S$ has full row rank.
\end{claim}
\begin{proof}
    For $\fm \in \cM$, consider the block $S^{(\fm)}$ of $S$ restricted to the rows $\{ (i, \fm) : i \in [\ell] \}$ and the columns $\{ (r, \fm) : r \in [n] \}$.  Thus, $S^{(\fm)} \in \F^{\ell \times n}$, and we may index the rows of $S^{(\fm)}$ by $i \in [\ell]$ and the columns by $r \in [n]$.  By the definition of $S$, we have
    \[S^{(\fm)}[i,r] = G[i,r] \cdot \mathbf{1}[\fm \in \cM_{\cL(r)}]. \]
    For $\Lambda \subseteq [s]$, let $G(\Lambda)$ denote the restriction of $G$ to the columns $r \in [n]$ so that $\cL(r) \in \Lambda$. For $\fm \in \cM$, let $\Lambda_\fm = \inset{ \lambda \in [s] : \fm \in \cM_\lambda }.$   Then 
    \begin{equation}\label{eq:samerank} \mathrm{rank}(S^{(\fm)}) = \mathrm{rank}(G(\Lambda_{\fm})),
    \end{equation}
    as we can make $G(\Lambda_{\fm})$ out of $S^{(\fm)}$ by dropping all-zero columns.
    We claim that the rank in \eqref{eq:samerank} is exactly $\ell$.  To see this, first
    note that $|\Lambda_{\fm}| \geq s - dt$.  Indeed, if 
    $ \fm = \prod_{k=1}^d y^{(i)}_{k, T_k} $ for some $\mathbf{T} \in \mathcal{T}^d$, then the only $\lambda \in [s]$ so that $\fm \not\in \cM_\lambda$ are those so that $\lambda \in \bigcup_{k=1}^d T_k$, and there are at most $dt$ such $\lambda$.  Let $\Lambda_{\fm}' \subseteq \Lambda_{\fm}$ be an arbitrary subset of size $s - dt$.  Then Corollary~\ref{cor: generator of good labelweight code is MDS like} implies that $\det(G(\Lambda_\fm')) \neq 0$, which in particular implies that $G(\Lambda_\fm)$ has full row rank.
    Equation~\ref{eq:samerank} then implies that $S^{(\fm)}$ also has full row rank, namely that it has rank $\ell$.

    Next, observe that, after rearranging rows and columns appropriately, $S$ is a block-diagonal matrix with blocks $S^{(\fm)}$ on the diagonal, for each $\fm \in \cM$.  Indeed, any entry $S[(i,\fm), (r, \chi)]$ where $\fm \neq \chi$ is zero by construction, and any entry where $\fm = \chi$ is included in the block $S^{(\fm)}$.  Thus, we conclude that $S$ too has full row-rank, as desired.
\end{proof}

Claim~\ref{claim:correct} says that any $\mathbf{e}$ so that $S\cdot\mathbf{e} = \mathbf{g}$ corresponds to a correct $\Eval$ function (with $\Share$ and $\Rec$ as given above), and Claim~\ref{claim:fullrank} implies that we can find such an $\mathbf{e}$ efficiently.  Thus we can efficiently find a description of $\pi = (\Share, \Eval, \Rec)$.  It remains to verify that $\textsf{DownloadRate}(\pi) \geq (s - dt)/s$, which follows from construction, as the download rate of $\pi$ is equal to the rate of $\cC$, which is by definition $(s-dt)/s$.\end{proof}

\section{Linear Codes with Good Labelweight}\label{sec:codes}

Now that we know that download-optimal linear HSS schemes are equivalent to linear codes with good labelweight, we focus on constructions and limitations of such codes.  In this section we give a construction, and a nearly-matching infeasibility result.  We begin by stating the constructive result, which as noted in the Introduction is a very slight improvement over the construction in \cite{FIKW22}.

\begin{theorem}[Construction of linear codes with good labelweight]\label{thm:lblwt}
    Let $\F$ be a finite field of size $q$.
    Let $s,d,t \in \mathbb{Z}^+$ such that $s-dt>0$. For all integers 
    \begin{equation*}
        j \geq \begin{cases}
            \log_q(s-1) & q^j \text{ odd, or } (s-dt) \not\in \lbrace 3, q^j-1 \rbrace\\
            \log_q(s-2) & q^j \text{ even, and } (s-dt) \in \lbrace 3, q^j-1 \rbrace
        \end{cases},
    \end{equation*}
    there is an explicit construction of a linear code $\mathcal{C} \subset \F^{js}$ of dimension $j(s-dt)$, and a labeling $\mathcal{L}:[js] \to [s]$ such that $\Delta_\mathcal{L}(\mathcal{C}) \geq dt+1$.

\end{theorem}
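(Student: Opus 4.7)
The plan is to lift an MDS code over the extension field $\F_{q^j}$ to a code over $\F_q$ by expanding each symbol in a fixed $\F_q$-basis, then transfer the minimum distance of the extension-field code to the labelweight of the $\F_q$-code. Concretely, I would let $\cC' \subseteq \F_{q^j}^s$ be an $[s, s-dt]$ MDS code, fix an $\F_q$-basis $\beta_1, \ldots, \beta_j$ of $\F_{q^j}$, and for each element $c \in \F_{q^j}$ use its unique expansion $c = \sum_{k=1}^j c_{(k)} \beta_k$ with $c_{(k)} \in \F_q$. Then $\cC \subseteq \F_q^{js}$ is defined to be the image of $\cC'$ under the map
\[
(c_1, \ldots, c_s) \;\longmapsto\; (c_{1,(1)}, \ldots, c_{1,(j)}, \, c_{2,(1)}, \ldots, c_{2,(j)}, \, \ldots, \, c_{s,(1)}, \ldots, c_{s,(j)}),
\]
and the labeling is $\cL(x) = \lceil x/j \rceil$, grouping each block of $j$ consecutive coordinates under one label in $[s]$.

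Two properties then need verification. First, since $c \mapsto (c_{(1)}, \ldots, c_{(j)})$ is an $\F_q$-linear bijection $\F_{q^j} \to \F_q^j$, the $\F_q$-dimension of $\cC$ equals $j$ times the $\F_{q^j}$-dimension of $\cC'$, which is $j(s-dt)$, as required; the block length is $js$ by construction. Second, for any nonzero $\mathbf{c} \in \cC$ coming from a nonzero $\mathbf{c}' \in \cC'$, one has
\[
\Delta_{\cL}(\mathbf{c}) \;=\; |\{i \in [s] : c'_i \neq 0\}|,
\]
because under this labeling, coordinate $i$ of $\mathbf{c}'$ vanishes if and only if all $j$ symbols of $\mathbf{c}$ with label $i$ vanish. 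Since $\cC'$ is MDS, its minimum Hamming distance equals $s - (s-dt) + 1 = dt + 1$, and hence $\Delta_{\cL}(\cC) \geq dt + 1$.

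What remains is producing the MDS code $\cC'$ with the required parameters, and this is where the case distinction in the theorem enters. For the generic case $j \geq \log_q(s-1)$, i.e., $s \leq q^j + 1$, the extended Reed--Solomon construction over $\F_{q^j}$ (evaluating a polynomial of degree less than $s-dt$ at all elements of $\F_{q^j}$ together with the point at infinity) gives an explicit $[s, s-dt]$ MDS code. For the special case in which $q^j$ is even and the dimension $s - dt$ lies in $\{3,\, q^j - 1\}$, the classical doubly-extended Reed--Solomon codes (arising from hyperovals in even characteristic, together with their duals) provide explicit $[q^j + 2, s-dt]$ MDS codes over $\F_{q^j}$; this extends the allowed range to $s \leq q^j + 2$, i.e., $j \geq \log_q(s-2)$.

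I expect the main obstacle in fleshing out the proof to be cleanly describing the extra-column extension in the even-characteristic special case, since it is this subtlety—rather than the basis-expansion trick—that distinguishes Theorem~\ref{thm:lblwt} from the straightforward construction yielding $j \geq \log_q(s)$. The paper's Appendix~\ref{sec: rs with extra col (apx)} handles this explicitly; once the appropriate MDS code $\cC'$ is in hand, the labelweight guarantee for $\cC$ follows immediately from the MDS property as verified above.
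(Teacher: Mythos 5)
Your proposal is correct, and the code you construct is in fact the same code the paper builds: expanding each $\F_{q^j}$-symbol of a systematic MDS codeword in a fixed $\F_q$-basis is exactly what the paper's embedding $\varphi$ of Example~\ref{ex:embedding} does to the generator matrix $[I \mid B]$. The difference is in how the labelweight bound is certified. The paper routes the argument through its structural machinery: Theorem~\ref{thm: mds iff tns} (MDS $\Leftrightarrow$ totally nonsingular), Lemma~\ref{lem:TNtoBlockTN} (TN over $\F_{q^j}$ lifts to block-TN over $\F_q$), and Lemma~\ref{lem:BTN_iff_goodLabelWt} (block-TN $\Leftrightarrow$ optimal labelweight), so the feasibility result drops out of the same characterization used for the infeasibility direction. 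You instead observe directly that the symbol-expansion map is an $\F_q$-linear bijection on each coordinate, so the labelweight of the expanded codeword under $\cL(x)=\lceil x/j\rceil$ equals the Hamming weight of the original codeword, and the MDS distance $dt+1$ transfers immediately. Your route is shorter and more elementary for proving Theorem~\ref{thm:lblwt} in isolation; what it does not give you is the converse structural information (that \emph{every} optimal labelweight code arises this way up to coordinate permutation), which the paper needs elsewhere. One small point to be careful about when fleshing out the generic case: evaluating at all of $\F_{q^j}$ plus the point at infinity gives length $q^j+1$, so for $s<q^j+1$ you should puncture down to $s$ coordinates (which preserves the MDS property since $s\geq s-dt$); with that, the parameter bookkeeping $s\leq q^j+1$ (resp.\ $q^j+2$) matches the stated thresholds $j\geq\log_q(s-1)$ (resp.\ $\log_q(s-2)$) exactly.
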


The following theorem says that Theorem~\ref{thm:lblwt} is basically optimal, in that we cannot take $j$ to be substantially smaller:

\begin{theorem}[Limitations on linear codes with good labelweight]\label{thm:lblwt_limitations}
    Let $\F$ be a finite field of size $q$.  Let $s,d,t \in \mathbb{Z}^+$ such that $s - dt > 0$. 
    Suppose that there is a code $\cC \subseteq \F^n$ with dimension $\ell$ and rate $(s-dt)/s$; and a labeling function $\cL:[n] \to [s]$ so that $\Delta_\cL(\cC) \geq dt + 1$.  Suppose that $j$ is such that $\ell = j(s- dt)$ and $n = js$.    Then $j$ is an integer, and 
  \[j \geq \lceil \max\inset{ \log_q(s -dt +1), \log_q(dt +1)} \rceil.\]
\end{theorem}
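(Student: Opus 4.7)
The plan is to leverage the block-level structural characterization of optimal-labelweight codes and derive the two bounds via an MDS-style counting argument at the block level.

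That $j$ is a positive integer follows immediately from Corollary~\ref{cor: j is an int, labeling is wlog}. For the quantitative bounds, I would invoke Lemma~\ref{lem:BTN_iff_goodLabelWt} to obtain a systematic generator matrix $G = [I \mid A]$ for $\cC$, where $A \in \F_q^{j(s-dt) \times j\cdot dt}$ is viewed as an $(s-dt) \times dt$ array of $j \times j$ blocks, and every square sub-array of blocks is invertible.

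To prove $dt \leq q^j - 1$ (assuming $s-dt \geq 2$), I first normalize so that the first block-row of $A$ consists of identity blocks: since each block $A_{1,q}$ is invertible (via the $1\times1$ condition), right-multiplying block-column $q$ by $A_{1,q}^{-1}$ preserves the block-totally-nonsingular property while making $A_{1,q} = I$ for all $q \in [dt]$. Then for any $p \geq 2$ and distinct $q_1, q_2 \in [dt]$, invertibility of the $2 \times 2$ block sub-array with top row $(I, I)$ and bottom row $(A_{p,q_1}, A_{p,q_2})$ is equivalent (via Schur complement) to $A_{p,q_2} - A_{p,q_1}$ being invertible. Fixing any nonzero $v \in \F_q^j$, the vectors $A_{p,1} v, \ldots, A_{p,dt} v$ are then pairwise distinct (from invertibility of the differences) and each nonzero (from invertibility of each block), so they furnish $dt$ distinct elements of $\F_q^j \setminus \{0\}$, giving $dt \leq q^j - 1$, i.e., $j \geq \log_q(dt+1)$.

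For the symmetric bound $s-dt \leq q^j - 1$ (under $dt \geq 2$), I would observe that $A^T$ is also block-totally-nonsingular, since the invertibility of any square block sub-array is preserved under transposition, and apply the argument above with the roles of block-rows and block-columns swapped. Combining the two bounds and taking ceilings produces $j \geq \lceil \max\{\log_q(s-dt+1), \log_q(dt+1)\}\rceil$. The main obstacle I anticipate is handling the degenerate cases where $s-dt$ or $dt$ equals $1$, so that the $2 \times 2$ sub-array argument is not directly available; those will require a separate direct treatment, though the relevant $\log_q(2) \le 1$ bound is trivial once $j \geq 1$.
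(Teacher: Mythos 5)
Your proposal follows the paper's own route essentially step for step: reduce to a block totally nonsingular matrix $[I\mid \mathbf{A}]$ via Lemma~\ref{lem:BTN_iff_goodLabelWt} (with Corollary~\ref{cor: j is an int, labeling is wlog} giving integrality of $j$), normalize one block-row or block-column to identities, use $2\times 2$ block minors to deduce that the blocks in a line have pairwise invertible differences, and count that at most $q^j-1$ invertible matrices can have this property. This is exactly the paper's Lemma~\ref{lem:BNS_lowerbound} combined with Lemma~\ref{lem: max size of additive GL group}; your cosmetic variations (normalizing the first block-row rather than the first block-column, and counting the distinct nonzero images $A_{p,q}\mathbf{v}$ of a fixed $\mathbf{v}\neq 0$ rather than the distinct nonzero first rows) change nothing of substance.

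The one place your write-up goes wrong is the closing remark on the degenerate cases. When $dt=1$, the bound that becomes trivial is $j\geq\lceil\log_q(dt+1)\rceil$, but the bound you still owe is $j\geq\lceil\log_q(s-dt+1)\rceil$ --- and that is precisely the half whose proof (the transposed argument) required $dt\geq 2$. It cannot be rescued by a ``separate direct treatment,'' because it is false there: take $q=2$, $d=t=1$, $s=4$, $j=1$, $\cL$ the identity on $[4]$, and $\cC\subseteq\F_2^4$ the even-weight code generated by $[I_3\mid\mathbf{1}]$; then $\Delta_\cL(\cC)=2=dt+1$ while $\lceil\log_2(s-dt+1)\rceil=2>1=j$. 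Symmetrically, when $s-dt=1$ the repetition code violates $j\geq\lceil\log_q(dt+1)\rceil$. So the correct resolution is to exclude $\min\{s-dt,dt\}=1$ from the statement rather than to patch the proof. To be fair, this is a gap you share with the paper: Lemma~\ref{lem:BNS_lowerbound} carries the hypothesis $\min\{r,u\}>1$, and the claim that it ``immediately implies'' Theorem~\ref{thm:lblwt_limitations} silently drops that hypothesis. Your instinct to flag the edge case was right; only your proposed disposal of it is not.
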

\begin{remark}[Gap between upper and lower bounds on $j$.]\label{rmk:gap}
As discussed in the Introduction,  Theorem~\ref{thm:lblwt_limitations} and Theorem~\ref{thm:lblwt} do not quite match.  However, because $j$ must be an integer, in fact the bounds are exactly the same for many parameter settings, especially when $q$ is large.  

We also remark that, when Theorems~\ref{thm:lblwt_limitations} and \ref{thm:lblwt} disagree, we conjecture that the \emph{construction} (Theorem~\ref{thm:lblwt}) is the correct one.  Indeed, our construction in Theorem~\ref{thm:lblwt} follows from ``puffing up'' a totally nonsingular (TN) matrix; the value of the parameter $j$ has to do with the field size over which this TN matrix is defined.  We show in the Section \ref{sec:MDS} that, assuming the MDS conjecture for extension fields, the field size that we use in our construction---and hence the value of $j$---is the best possible.  Thus, if Theorem~\ref{thm:lblwt} is not optimal, either the MDS conjecture is false over extension fields, or else there is an alternate way to construct Block TN matrices without going through TN matrices over larger fields.
\end{remark}

Given Lemma~\ref{lem:HSS_implies_lblwt} and Theorem~\ref{thm:lblwtImpliesHSS}, Theorems~\ref{thm:lblwt} and \ref{thm:lblwt_limitations} immediately imply the following corollary about linear HSS schemes.

\begin{corollary}[Classification of Amortization Parameter for Linear HSS Schemes]\label{cor:amortization}
    Let $s,d,t,m,\ell$ be positive integers, so that $m \geq d$.
    
    Let $\pi = (\Share, \Eval, \Rec)$ be a linear HSS scheme for some nontrivial (Definition~\ref{def:nontriv}) function class $\mathcal{F} \subseteq \POLYdmF$, with download rate $(s-dt)/s$.  Then $\ell = j(s - dt)$ for some integer $j$ that satisfies
    \[ j \geq \max\{ \lceil \log_q(s-dt + 1) \rceil, \lceil \log_q(dt + 1) \rceil \}. \]

    Conversely, let $j$ be any integer so that $j$ is as in Theorem~\ref{thm:lblwt}.  Then there are explicit constructions of linear HSS schemes for any $\mathcal{F} \subseteq \POLYdmF$ with download rate $(s-dt)/s$ and amortization parameter $\ell = j(s-dt)$.
\end{corollary}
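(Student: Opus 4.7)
The plan is to assemble the corollary directly from the previously established equivalence between download-optimal linear HSS schemes and linear codes with optimal labelweight. The forward direction (the lower bound on $j$) chains Theorem~\ref{thm:CNFuniversal}, Lemma~\ref{lem:HSS_implies_lblwt}, Corollary~\ref{cor: j is an int, labeling is wlog}, and Theorem~\ref{thm:lblwt_limitations}; the converse chains Theorem~\ref{thm:lblwt} with Theorem~\ref{thm:lblwtImpliesHSS}, together with Theorem~\ref{thm:FIKW rate LB} to pin the rate down from above.

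For the necessity direction, I would first apply Theorem~\ref{thm:CNFuniversal} to replace the given linear HSS $\pi$ by a linear HSS $\pi'$ for the same $\mathcal{F}$ that uses $t$-CNF sharing: each server locally converts its $t$-CNF share into a share of the original $\Share$ via an $\F$-linear map and then runs the original $\Eval$; because $\Rec$ and the output shares are unchanged, the download rate stays at $(s-dt)/s$. Invoking Lemma~\ref{lem:HSS_implies_lblwt} on $\pi'$ yields a linear code $\cC \subseteq \F^n$ of rate $(s-dt)/s$ and a labeling $\cL:[n]\to[s]$ with $\Delta_\cL(\cC) \geq dt+1$. Corollary~\ref{cor: j is an int, labeling is wlog} then forces $\ell = j(s-dt)$ and $n = js$ for some integer $j$, and Theorem~\ref{thm:lblwt_limitations} applied to $(\cC, \cL)$ delivers the required bound $j \geq \lceil \max\{\log_q(s-dt+1), \log_q(dt+1)\} \rceil$.

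For the converse, given $j$ satisfying the hypothesis of Theorem~\ref{thm:lblwt}, I would invoke that theorem to obtain an explicit linear code $\cC \subseteq \F^{js}$ of dimension $j(s-dt)$ with a labeling $\cL$ satisfying $\Delta_\cL(\cC) \geq dt+1$. Applying Theorem~\ref{thm:lblwtImpliesHSS} with $\ell = j(s-dt)$ and $n = js$ (so that the hypothesis $\ell s = n(s-dt)$ holds) then produces an explicit $t$-private, $s$-server linear HSS for $\POLYdmF^\ell$ with download rate at least $(s-dt)/s$; Theorem~\ref{thm:FIKW rate LB} pins this rate down to exactly $(s-dt)/s$. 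Since any $\mathcal{F} \subseteq \POLYdmF$ embeds into $\POLYdmF^\ell$ via the amortization of Remark~\ref{rem:amortization}, the same HSS serves $\mathcal{F}$.

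The proof is largely a bookkeeping exercise built from earlier sections, and there is no genuine new obstacle. The most delicate step is the opening CNF reduction in the necessity direction: one must verify that the local re-sharing performed on top of $t$-CNF shares truly yields a bona fide linear HSS for the same $\mathcal{F}$ and preserves both the function class and the download rate, so that Lemma~\ref{lem:HSS_implies_lblwt} can legitimately be applied. Once this is in place, everything else follows mechanically from the already-established coding-theoretic results of Section~\ref{sec:codes}.
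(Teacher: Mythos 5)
Your proposal is correct and follows essentially the same route as the paper, which derives the corollary immediately by combining Lemma~\ref{lem:HSS_implies_lblwt} and Theorem~\ref{thm:lblwtImpliesHSS} with Theorems~\ref{thm:lblwt} and \ref{thm:lblwt_limitations} (with the CNF-universality reduction of Theorem~\ref{thm:CNFuniversal} justifying the passage from arbitrary linear sharing to $t$-CNF sharing, exactly as you describe). Your explicit attention to verifying that the CNF conversion preserves the function class and download rate is a reasonable point of care, but it introduces no deviation from the paper's argument.
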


\begin{remark}
    As in Remark~\ref{rmk:gap}, we observe that the bounds are quite close.  Indeed, for any setting of parameters, there is at most one value of $\ell$ (namely, $\ell = j(s-dt)$ for one particular integer $j$) for which we do not know whether or not there exists a download-optimal linear HSS scheme for $\POLYdmF^\ell$.  Moreover, for many parameter settings (especially when $q$ is large), the bounds exactly match.
\end{remark}

We prove Theorems~\ref{thm:lblwt} and \ref{thm:lblwt_limitations} later in this section.
The technical meat of the proofs is Lemma~\ref{lem:BTN_iff_goodLabelWt} (stated later), which characterizes codes of high labelweight in terms of \emph{block totally nonsingular matrices}, which we informally defined in the Introduction and which we formally define in Definition~\ref{def:blockMDS} below.   Theorems~\ref{thm:lblwt} and \ref{thm:lblwt_limitations} then follow from an analysis of block totally nonsingular matrices.

\subsection{Block Totally-Nonsingular Matrices}

Now that we understand the structure of optimal labeling functions, we move on to the connection to block totally-nonsingular matrices, which is the main technical result in this section.  We begin with some definitions.

We say that a matrix $A$ is \emph{totally nonsingular} if any square sub-matrix of $A$ (not necessarily contiguous) is nonsingular~\cite{gantmacher1980theory}. We extend the definition of total nonsingularity to block matrices as follows.

\begin{definition}[Block Totally-Nonsingular Matrices]\label{def:blockMDS}
Let $\F$ be a finite field, and fix integers $j,r,u$.  
    Let $\mathbf{A} = \left[ A_{i,k} \; : \; i \in [r], k \in [u] \right] \in \left(\text{GL}(\mathbb{F}, j) \right)^{r \times u}$ be a $r \times u$ block array of invertible $j \times j$ matrices $A_{i,k}$. We say that $\mathbf{A}$ is \emph{Block Totally Nonsingular} (Block TN) if every square sub-array of block matrices is full-rank. More precisely, for all $S \subseteq [r]$ and $S' \subseteq [u]$ of the same size, the matrix $\mathbf{A}' = [A_{i,i'} \in \mathbf{A} \; : \; i \in S, i' \in S' ] \in \F^{ j|S| \times j|S'|}$ is nonsingular.
\end{definition}

With Definition~\ref{def:blockMDS} out of the way, we can state and prove the main technical lemma of this section, which says that constructing codes with good labelweight is equivalent to constructing Block TN matrices:

\begin{lemma}\label{lem:BTN_iff_goodLabelWt}

    Let $j, s, d, t$ be positive integers so that $s > dt$, and let $q$ be a prime power.  Then the following are equivalent.
    \begin{itemize}
        \item[(i)] There is a linear code $\cC \subseteq \F_q^{js}$ with block length $js$ and dimension $j(s-dt)$ and a labeling $\cL: [js] \to [s]$ so that $\Delta_{\cL}(\cC) \geq dt + 1$.
        \item[(ii)] There exists a Block TN matrix $\mathbf{A} \in GL(\F_q,j)^{(s-dt) \times dt}$.
    \end{itemize}

    Moreover, the equivalence is constructive: if $\mathbf{A}$ is a Block TN matrix as in (ii), then the code $\cC$ in (i) is generated by the matrix $[I | \mathbf{A}]$.  If $\cC$ is a code as in (i), then it has a generator matrix of the form $[I | \mathbf{A}]$, where $\mathbf{A}$ is the Block TN matrix as in (ii). 
\end{lemma}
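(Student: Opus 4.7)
The plan is to exploit Corollary~\ref{cor: j is an int, labeling is wlog} to reduce to the canonical labeling $\cL(x) = \lceil x/j \rceil$, under which the $js$ coordinates of $\cC$ partition into $s$ consecutive blocks of size $j$, each carrying a distinct label. Both directions then manipulate generator matrices of $\cC$ in ``block form,'' viewing them as $(s-dt) \times s$ arrays of $j \times j$ sub-matrices.

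For the forward direction ((i)$\Rightarrow$(ii)), I start from an arbitrary generator matrix $G$ for $\cC$. Corollary~\ref{cor: generator of good labelweight code is MDS like}, applied to $\Lambda = \{1, \ldots, s-dt\}$, guarantees that the $j(s-dt) \times j(s-dt)$ submatrix of $G$ formed by the first $s-dt$ block-columns is nonsingular, so left-multiplying by its inverse puts the generator matrix in the form $G = [I \,|\, \mathbf{A}]$ for some $\mathbf{A} \in \F_q^{j(s-dt) \times j\,dt}$, which I regard as a $(s-dt) \times dt$ array of $j \times j$ blocks. To check that $\mathbf{A}$ is Block TN, suppose for contradiction that some $k \times k$ sub-array $\mathbf{A}(S, S')$ is singular, and take a nonzero left kernel vector. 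Embedded into $\F_q^{j(s-dt)}$ as a vector $\mathbf{m}$ supported only on the block-rows indexed by $S$, the resulting codeword $\mathbf{m}^T G = [\mathbf{m}^T \,|\, \mathbf{m}^T \mathbf{A}]$ touches at most $|S| = k$ labels in the systematic part, and in the non-systematic part the block-columns indexed by $S'$ all vanish so at most $dt - |S'| = dt - k$ labels are touched. This produces a nonzero codeword of labelweight at most $dt$, contradicting $\Delta_{\cL}(\cC) \geq dt + 1$. Specialising to $k = 1$ gives that each individual block of $\mathbf{A}$ is nonsingular, matching the requirement $\mathbf{A} \in \mathrm{GL}(\F_q, j)^{(s-dt) \times dt}$ in Definition~\ref{def:blockMDS}.

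For the converse ((ii)$\Rightarrow$(i)), I let $\mathbf{A}$ be Block TN, set $G := [I \,|\, \mathbf{A}]$, let $\cC$ be its rowspan, and verify $\Delta_{\cL}(\cC) \geq dt + 1$ under the canonical labeling. Fix a nonzero message $\mathbf{m} \in \F_q^{j(s-dt)}$, let $S$ index its nonzero blocks, and set $k := |S| \geq 1$. If $k \geq dt + 1$ then the systematic part $\mathbf{m}^T$ of $\mathbf{m}^T G$ already touches at least $dt + 1$ distinct labels. Otherwise $k \leq dt$, and I count the vanishing block-columns of $\mathbf{m}^T \mathbf{A}$: if there were at least $k$ such vanishing block-columns $S'$, then for any $S'' \subseteq S'$ with $|S''| = k$ the $jk \times jk$ sub-array $\mathbf{A}(S, S'')$ would have the restriction of $\mathbf{m}$ to $S$ in its left kernel, contradicting Block TN. Hence at most $k-1$ block-columns of $\mathbf{m}^T \mathbf{A}$ vanish, so the non-systematic part touches at least $dt - (k-1)$ labels, yielding a total labelweight of at least $k + (dt - k + 1) = dt + 1$.

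The main subtlety lies in the converse case analysis: the accounting must exactly balance ``nonzero systematic blocks'' against ``nonzero non-systematic block-columns'' so that the total labelweight hits $dt + 1$ on the nose for every codeword, and the shape of the singular sub-array that is extracted from a hypothetical low-weight codeword has to match the $|S|=|S'|$ condition in the Block TN definition. The forward direction is comparatively mechanical once Corollary~\ref{cor: generator of good labelweight code is MDS like} is in hand. Both arguments are constructive and explicitly realise the correspondence $\mathbf{A} \leftrightarrow [I \,|\, \mathbf{A}]$ asserted in the ``moreover'' clause.
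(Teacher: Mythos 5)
Your proposal is correct and follows essentially the same route as the paper's proof: reduce to the canonical labeling via Corollary~\ref{cor: j is an int, labeling is wlog}, systematize $G = [I \mid \mathbf{A}]$ via Corollary~\ref{cor: generator of good labelweight code is MDS like}, and in both directions run the same block-support counting argument that trades nonzero systematic blocks against vanishing block-columns of $\mathbf{m}^T\mathbf{A}$. Your converse is in fact slightly more carefully stated than the paper's (the explicit case $k \geq dt+1$ and the selection of a $k$-subset $S''$ of the vanishing columns), but the substance is identical.
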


\begin{proof}

    We begin by showing that (i) implies (ii).
    Let $\cC$ be a code as in (i), and 
    let $G \in \mathbb{F}_q^{j(s-dt) \times js}$ be a generator matrix of $\mathcal{C}$. Corollary~\ref{cor: j is an int, labeling is wlog} implies that the labeling function $\mathcal{L}$ may be given by $\mathcal{L}:x \mapsto \lceil x/j \rceil$ after a re-ordering of the codeword coordinates; and Corollary~\ref{cor: generator of good labelweight code is MDS like} implies that after this re-ordering, the first $s-dt$ columns of any generator matrix $G$ for $\cC$ form a square invertible matrix, so we may put $G$ in systematic form by a sequence of row operations. Hence without loss of generality, we may write 
    \[ G = [ I | \mathbf{A} ]\]
    for some matrix $\mathbf{A} \in \F^{j(s-dt) \times jdt}$.
    We will view $G$ as a matrix of $j \times j$ blocks, as follows:
    \begin{equation*}
        G = \left[
        \begin{array}{c|c}
                    \begin{matrix}
                         I^{j \times j} &                  &            &         \\
                                        &  I^{j \times j}  &           &                     \\
                                        &                  & \ddots    &                       \\
                                        &                  &           & I^{j \times j}      
                    \end{matrix} & \left[ \begin{matrix}
                           A_{1,1} & A_{1,2} & \cdots & A_{1,dt}\\
                              A_{2,1} & A_{2,2} & \cdots & A_{2,dt}\\
                              \vdots  & \vdots  & \ddots & \vdots\\
                              A_{s-dt,1} & A_{s-dt,2} & \cdots & A_{s-dt, dt}
                    \end{matrix} \right]
        \end{array}
        \right] =: \left[\begin{array}{l}
             G_1 \\
             G_2 \\
                \vdots\\
                G_{s-dt}
        \end{array}\right].
    \end{equation*}
    where $G_1, G_2, \ldots, G_{s-dt} \in \mathbb{F}_q^{j \times js}$.

    We will show that $\mathbf{A}$ is a block TN matrix as in (ii).  The dimensions follow by construction, so we must show that every square sub-array of $\mathbf{A}$ is nonsingular.
    Let $\mathbf{A}' \in (GL(\F_q, j))^{u \times u}$ denote an arbitrary $u \times u$ block sub-array of $\mathbf{A}$; suppose that the $u$ block-rows in $\mathbf{A}'$ are indexed by $U \subseteq [s - dt]$ and the $u$ block-columns of $\mathbf{A}'$ are indexed by $V \subseteq [dt]$.
    
    Let $\mathbf{A}'' \in (GL(\F_q, j))^{u \times dt - u}$ denote the block sub-array that contains block-rows indexed by $U$ and block-columns indexed by $[dt] \setminus V$.  That is, $\mathbf{A}''$ is the ``rest'' of  $\mathbf{A}$, restricted to the rows that $\mathbf{A}'$ touches.

    To show that $\mathbf{A}'$ is full rank, we will show that for every vector $\mathbf{m} \in (\F^j)^u$, $\mathbf{m}^T \mathbf{A}' \neq \mathbf{0}$.  
    
    Suppose towards a contradiction that $\mathbf{m}^T \mathbf{A}' = \mathbf{0}$ for some $\mathbf{m}$.
    Consider the vector $\tilde{\mathbf{m}} \in (\F^j)^{(s - dt)}$ that is identified with $\mathbf{m}$ on $U$, and is $0$ elsewhere.
Then $\mathbf{c} := \tilde{\mathbf{m}}^T G$ is a codeword in $\cC$. Consider which blocks intersect the support of $\mathbf{c}$:
\begin{itemize}
    \item On the systematic coordinates (the first $s-dt$ blocks), the support of $\mathbf{c}$ is exactly the blocks indexed by $U$; there are $u$ of these.
    \item On the blocks indexed by $V$, the support of $\mathbf{c}$ is the support of $\mathbf{m}^T \mathbf{A}'$.  As we are assuming that this is zero, $\mathbf{c}$ has no support on these blocks.
    \item On the blocks indexed by $[dt] \setminus V$, there is at most $|[dt] \setminus V| = dt - u$ blocks of support.
\end{itemize}
But together, this implies that the number of blocks that the support of $\mathbf{c}$ touches is at most $u + (dt - u) = dt < dt + 1$.  Since by construction, each block corresponds to a set $\cL^{-1}(\lambda)$ for some $\lambda \in [s]$, this implies that $\Delta_\cL(\mathbf{c}) \leq dt$, which contradicts the assumption that $\Delta_{\cL}(\cC) \geq dt + 1.$  Thus $\mathbf{A}'$ is non-singular.  Since $\mathbf{A}'$ was arbitrary, we conclude that all such matrices are non-singular, and hence $\mathbf{A}$ is block TN, proving (ii).

    Now we show that (ii) implies (i).  Let $\mathbf{A}$ be as in (ii), and consider the matrix
    \[ G = [ I | \mathbf{A} ].\]
    Define $\cL: [js] \to [s]$ by $\cL(x) = \lceil x/j \rceil$, and let $\cC \subseteq \F_q^{js}$ be the code of dimension $j(s - dt)$ whose generator matrix is $G$.  Then $\cC$ has the block length and dimension in (i), so it remains to show that $\Delta_\cL(\cC) \geq dt + 1$.  Let $\mathbf{c} = \mathbf{m}^T G$ be an arbitrary codeword of $\cC$, where we view $\mathbf{m}$ as an element of $(\F_q^j)^{s -dt}$.  Let $U \subseteq [s - dt]$ be the support of $\mathbf{m}$ (that is, the set of \emph{blocks} on which it is nonzero); say that $|U| = u$.  Consider $\mathbf{m}^T \mathbf{A}$, viewed as an element of $(\F_q^j)^{dt}$.   
 Let $V \subseteq [dt]$ be the support of $\mathbf{m}^T \mathbf{A}$ (again, note that $V$ represents a set of \emph{blocks}).  We claim that $|V| \geq dt + 1  - u$.  Indeed, otherwise the sub-array $\mathbf{A}'$ of $\mathbf{A}$ whose block-rows are indexed by $U$ and whose block-columns are indexed
by $V^c$ would be a $u \times u$ sub-array so that $\mathbf{m}_U^T \mathbf{A}' = \mathbf{0}$ (where $\mathbf{m}_U$ represents the restriction of $\mathbf{m}$ to $U$).  Thus, $\mathbf{A}'$ would be singular, but this would contradict the block total nonsingularity of $\mathbf{A}$.  We conclude that $|V| \geq dt + 1 - u$, as claimed. 

Thus, the support of $\mathbf{c}$ includes $u$ blocks from the systematic coordinates, and at least $dt + 1 - u$ blocks from the non-systematic coordinates, for a total of at least $dt + 1$.  With our definition of $\cL$, this implies that $\Delta_\cL(\mathbf{c}) \geq dt + 1$.  Since $\mathbf{c}$ was arbitrary, this implies that $\Delta_\cL(\cC) \geq dt + 1$ as well.  This proves (i).
 
\end{proof}

\subsection{Understanding Block-TN Matrices: Proofs of Theorems~\ref{thm:lblwt} and \ref{thm:lblwt_limitations}}

Lemma~\ref{lem:BTN_iff_goodLabelWt} tells us that constructing linear codes with optimal labelweight is  equivalent to constructing block-TN matrices.  Thus, in this section, we investigate how well we can do this, which will lead to the Theorems~\ref{thm:lblwt} and \ref{thm:lblwt_limitations} stated earlier.

\subsubsection{Limitations on block-TN matrices}
We begin with infeasibility results.
In this section, we will prove the following lemma.
\begin{lemma}\label{lem:BNS_lowerbound}
    Let $r,u \in \mathbb{Z}^+$ such that $\min\lbrace r, u \rbrace > 1$.  Let $q$ be a prime power. Suppose that there exists a block-TN matrix $\mathbf{A} \in (GL(\F_q, j))^{r \times u}$.  Then
    \[ j \geq \lceil \max\inset{ \log_q(r+1), \log_q(u+1)} \rceil. \]
\end{lemma}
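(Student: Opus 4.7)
The plan is to reduce the lower bound on $j$ to an elementary counting bound on the size of a family of $j \times j$ matrices over $\F_q$ whose pairwise differences are all non-singular. The block-TN hypothesis, applied to $2 \times 2$ block sub-arrays and combined with the invertibility of the individual blocks, will produce such a family.

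Concretely, to obtain $j \geq \lceil \log_q(u+1) \rceil$, I would fix the first two block-rows of $\mathbf{A}$ and, for each column $k \in [u]$, set $M_k := A_{2,k}\, A_{1,k}^{-1} \in GL(\F_q, j)$. Block row-reduction on the $2\times 2$ sub-array with rows $\{1,2\}$ and columns $\{k,k'\}$ expresses its determinant (up to the nonzero factor $\det(A_{1,k})\det(A_{1,k'})$) as $\det(M_{k'} - M_k)$, so the block-TN hypothesis forces $M_{k'} - M_k$ to be non-singular for every $k \neq k'$. Since each $M_k$ is itself invertible, adjoining the zero matrix yields a family $\mathcal{S} := \{0, M_1, \ldots, M_u\}$ of $u+1$ distinct matrices in $\F_q^{j\times j}$ whose pairwise differences are all non-singular. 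For any fixed nonzero $v \in \F_q^j$, the evaluation $M \mapsto Mv$ is injective on $\mathcal{S}$ (otherwise $(M - M')v = 0$ for some distinct $M, M' \in \mathcal{S}$, contradicting non-singularity of $M - M'$), so $u+1 \leq q^j$, and integrality of $j$ gives $j \geq \lceil \log_q(u+1) \rceil$.

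The matching bound $j \geq \lceil \log_q(r+1) \rceil$ follows by applying the same argument to the full transpose $\mathbf{A}^T \in (GL(\F_q, j))^{u \times r}$, whose $(k,i)$-block is $A_{i,k}^T$. Every square block sub-array of $\mathbf{A}^T$ is the entry-wise transpose of a square block sub-array of $\mathbf{A}$ and so has the same (nonzero) determinant, and each block $A_{i,k}^T$ remains invertible; hence $\mathbf{A}^T$ is again block-TN, and the first bound applied to it gives the second.

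I expect the main subtlety to be the ``$+1$'' in $u+1$ and $r+1$: this relies crucially on the invertibility of each $M_k$ (inherited from $A_{i,k} \in GL(\F_q,j)$) to justify adjoining the zero matrix to $\mathcal{S}$ while preserving pairwise non-singularity of differences. Skipping this step would only yield $j \geq \log_q u$, which would widen the already-existing gap between the construction (Theorem~\ref{thm:lblwt}) and infeasibility result (Theorem~\ref{thm:lblwt_limitations}) and would no longer match the construction in the tight parameter regimes highlighted after Theorem~\ref{thm:lblwt_Informal}.
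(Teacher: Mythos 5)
Your proposal is correct and follows essentially the same route as the paper: reduce to $2\times 2$ block sub-arrays, normalize by inverses of individual blocks to obtain a family of invertible matrices whose pairwise differences are invertible, and then count (the paper's Lemma~\ref{lem: max size of additive GL group} bounds such a family by $q^j-1$ via distinct nonzero first rows, which is equivalent to your ``adjoin $0$ and evaluate on a fixed nonzero vector'' argument). The only cosmetic differences are that the paper normalizes block-rows against the first block-column and varies the row index while you vary the column index, and that it invokes symmetry where you explicitly pass to the transpose.
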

We note that, along with Lemma~\ref{lem:BTN_iff_goodLabelWt}, Lemma~\ref{lem:BNS_lowerbound} immediately implies Theorem~\ref{thm:lblwt_limitations}.

Before we prove Lemma~\ref{lem:BNS_lowerbound}, we make an observation about the maximum size of sets of invertible matrices whose differences are also invertible:
\begin{lemma}\label{lem: max size of additive GL group}
Let $\F$ be a finite field of size $q$, and 
    let $\mathcal{W} \subseteq \text{GL}(\mathbb{F}, j)$ be a set so that for all distinct $W, W' \in \mathcal{W}$, $\det(W - W') \neq 0 $. Then $|W| \leq q^j - 1$.
\end{lemma}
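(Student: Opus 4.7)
The plan is to construct an explicit injection from $\mathcal{W}$ into the set of nonzero vectors $\mathbb{F}^j \setminus \{\mathbf{0}\}$, which has exactly $q^j - 1$ elements. This would immediately yield the desired bound $|\mathcal{W}| \leq q^j - 1$.

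First I would fix any nonzero vector $\mathbf{v} \in \mathbb{F}^j$ (e.g., the first standard basis vector) and define the evaluation map $\phi : \mathcal{W} \to \mathbb{F}^j$ by $\phi(W) = W\mathbf{v}$. Next I would verify that $\phi$ is injective on $\mathcal{W}$: if $\phi(W) = \phi(W')$ for $W \neq W'$, then $(W - W')\mathbf{v} = \mathbf{0}$, but by hypothesis $W - W'$ is invertible, forcing $\mathbf{v} = \mathbf{0}$, a contradiction. Finally I would check that $\phi$ maps into $\mathbb{F}^j \setminus \{\mathbf{0}\}$: since each $W \in \mathcal{W}$ is itself invertible and $\mathbf{v} \neq \mathbf{0}$, we have $W\mathbf{v} \neq \mathbf{0}$.

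Combining these three observations, $|\mathcal{W}| = |\phi(\mathcal{W})| \leq |\mathbb{F}^j \setminus \{\mathbf{0}\}| = q^j - 1$. There is no serious obstacle here; the only subtlety worth noting is that both hypotheses are used in distinct ways (the invertibility of differences controls injectivity, while the invertibility of individual elements controls the nonzero image), and both are essential for the argument to close in a single step.
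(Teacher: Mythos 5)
Your proof is correct and is essentially the same argument as the paper's: the paper injects $\mathcal{W}$ into $\F^j\setminus\{\mathbf{0}\}$ by extracting the first \emph{row} of each matrix (nonzero since each $W$ is invertible, distinct since $W-W'$ is invertible), which is just the transpose of your evaluation map $W\mapsto W\mathbf{v}$ with $\mathbf{v}$ the first standard basis vector. No changes needed.
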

\begin{proof}
    First, we observe that no element $W \in \mathcal{W}$ can have a first row that is all zero, as each must be non-singular.  Similarly, any two elements $W, W' \in \mathcal{W}$ cannot have the same first row, otherwise $W-W'$ would have a zero first row and be singular. The bound on $|W|$ then follows immediately from the observation that there are only $q^j-1$ non-zero elements of $\mathbb{F}_q^j$. 
\end{proof}
We remark that the bound above is tight.  Indeed, consider the following example, which we will also use in our construction later.  (See, e.g., \cite{Wardlaw1994MatrixRO} for more details). 

\begin{example}\label{ex:embedding}
Fix a prime power $q$ and any integer $e$.
Let $\mathbb{E}$ be a finite field of size $q^e$, and let $\mathbb{F} \leq \mathbb{E}$ be a subfield of size $q$.
Then $\mathbb{E}$ is a vector space over $\F$ of dimension $e$.  Let $\mathcal{B} = \{b_1, \ldots, b_e\}$ be any basis for $\mathbb{E}$ over $\F$.  For any $\alpha \in \mathbb{E}$, multiplication by $\alpha$ is an $\F$-linear map, and in particular it can be represented in the basis $\mathcal{B}$ by a matrix $M_\alpha \in \F^{e \times e}$.  Let  $\varphi:\mathbb{E} \to \F^{e \times e}$ be the embedding so that $\varphi: \alpha \mapsto M_\alpha$.

Then let $\mathcal{W} = \varphi(\mathbb{E}^*)$.  
We have $|\mathcal{W}| = q^e - 1$.  Further, every element of $\mathcal{W}$ is invertible, as $\varphi(\alpha)^{-1} = \varphi(\alpha^{-1})$.  Further, since $\varphi(\alpha) - \varphi(\beta) = \vphi(\alpha - \beta)$, $\mathcal{W}$ is closed under differences and in particular $W - W'$ is also invertible for any distinct $W,W' \in \mathcal{W}$.  
\end{example}

Now we can prove Lemma~\ref{lem:BNS_lowerbound}.
\begin{proof}[Proof of Lemma~\ref{lem:BNS_lowerbound}]
    Let $\mathbf{A}$ be a block-TN matrix with entries $A_{i,k} \in \text{GL}(\mathbb{F}_q,j)$:
    \begin{equation*}
        \mathbf{A} = \left[ \begin{matrix}
                           A_{1,1} & A_{1,2} & \cdots & A_{1,u}\\
                              A_{2,1} & A_{2,2} & \cdots & A_{2,u}\\
                              \vdots  & \vdots  & \ddots & \vdots\\
                              A_{r,1} & A_{r,2} & \cdots & A_{r, u}
                    \end{matrix} \right].
    \end{equation*}
   Notice that the property of being Block-TN is maintained under elementary row operations within a given block-row $\mathbf{A}[i,:]$ or elementary column operations within a given block-column $\mathbf{A}[:,k]$. Hence
    \begin{equation*}
        \mathbf{A}' = \left[ \begin{array}{cccc}
                           I & A_{1,1}^{-1}A_{1,2} & \cdots &A_{1,1}^{-1} A_{1,u}\\
                              I & A_{2,1}^{-1}A_{2,2} & \cdots & A_{2,1}^{-1}A_{2,u}\\
                              \vdots  & \vdots  & \ddots & \vdots\\
                              I & A_{r,1}^{-1} A_{r,2} & \cdots & A_{r,1}^{-1} A_{r, u}
                    \end{array} \right]
    \end{equation*}
    is also block TN.  In particular, for any distinct $\rho, \sigma \in [r]$ and $\mu \in [u] \setminus \lbrace 1 \rbrace$, we must have
    \begin{equation*}
        0 \neq \det\left( \left[ \begin{matrix}
            I & A_{\rho, 1}^{-1} A_{\rho, u}\\
            I & A_{\sigma, 1}^{-1} A_{\sigma, u}
        \end{matrix} \right] \right) = \det\left(  A_{\sigma, 1}^{-1} A_{\sigma, u} - A_{\rho, 1}^{-1} A_{\rho, u} \right),
    \end{equation*}
    where the non-equality is due to the assumption of block total nonsingularity, and the equality is due to the fact that
    \[ \det\begin{bmatrix} I & W \\ I & W' \end{bmatrix} = \det(W' - W)\]
    for any matrices $W,W'$.
    It follows that for all $\mu \in [u]$, $\mathbf{A}'[:, \mu]$ is a $r$-tuple of distinct elements of $\text{GL}(\mathbb{F}_q,j)$, satisfying the condition that the sum of any two coordinates is again an element of $\text{GL}(\mathbb{F}_q,j)$. Then by Lemma \ref{lem: max size of additive GL group}, we have $r \leq q^{j}-1$ so that $j \geq \log_{q}(r+1)$. By an identical argument, we also have that $j \geq \log_q(u+1)$, and this yields the result.
\end{proof}

\subsubsection{Connection to MDS Conjecture}\label{sec:MDS}

Before we give our construction (proving Theorem~\ref{thm:lblwt}), we take a quick detour to expain the connection between such constructions and the \emph{MDS conjecture} (Conjecture~\ref{conj: MDS conjecture} below) over extension fields, which is a long-standing open problem.  This connection will inform our construction in Theorem~\ref{thm:lblwt}, and also suggest that we may not be able to do better.

As we will see formally below in Lemma~\ref{lem:TNtoBlockTN}, one way to construct Block TN matrices with entries in $\text{GL}(\mathbb{F}_q, j)$ is to construct (non--block) TN matrices with entries in $\mathbb{F}_{q^j}$, and then ``puff up'' each entry using the embedding $\vphi$ from Example~\ref{ex:embedding}.  We show below in Theorem~\ref{thm: mds iff tns} that constructing TN matrices is equivalent to constructing \emph{MDS matrices} (that is, matrices that are generator matrices of MDS codes) over extension fields, which are well-studied.

As a corollary of that result, we get the main result for this section: \emph{assuming} the MDS conjecture, we can characterize the best possible TN matrices we can get, and thus the best possible block-TN matrices that we can get via the method ``start with a TN matrix and apply $\vphi$.''  This is formally stated as the following corollary.

\begin{corollary}\label{cor: mds conj applied to hss parameters}
    Assume the MDS conjecture. Then there exists an (efficiently computable) totally nonsingular matrix $A \in \left(\mathbb{F}_{q^j} \setminus \lbrace 0 \rbrace\right)^{(s-dt) \times s}$ if and only if $s-dt \leq q^j$ and
    \begin{equation*}
        j \geq \begin{cases}
            \log_q(s-1) & q^j \text{ odd, or } (s-dt) \not\in \lbrace 3, q^j-1 \rbrace\\
            \log_q(s-2) & q^j \text{ even, and } (s-dt) \in \lbrace 3, q^j-1 \rbrace
        \end{cases}.
    \end{equation*}
    Furthermore, the assumption of the MDS conjecture is necessary only for the forward direction.
\end{corollary}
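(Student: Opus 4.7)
The strategy is to use Theorem~\ref{thm: mds iff tns} to reduce the question about totally nonsingular matrices to one about MDS codes over extension fields, and then apply the MDS conjecture directly. By Theorem~\ref{thm: mds iff tns}, the existence of an efficiently computable TN matrix $A \in (\mathbb{F}_{q^j} \setminus \{0\})^{(s-dt) \times s}$ is equivalent to the existence of an efficiently computable generator matrix of an $[s, s-dt]$ MDS code over $\mathbb{F}_{q^j}$. The necessary condition $s - dt \leq q^j$ is then simply the statement that the code dimension does not exceed the alphabet size.

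For the forward direction, I would apply Conjecture~\ref{conj: MDS conjecture} with $Q = q^j$, $n = s$, and $k = s - dt$. The MDS conjecture states that any $[n, k]_Q$ MDS code with $2 \leq k \leq Q - 1$ satisfies $n \leq Q + 1$, except when $Q$ is even and $k \in \{3, Q - 1\}$, in which case $n \leq Q + 2$ is allowed. Inverting these inequalities, $s \leq q^j + 1$ gives $j \geq \log_q(s - 1)$ and $s \leq q^j + 2$ gives $j \geq \log_q(s - 2)$, matching the case split in the corollary exactly. The degenerate ranges $k = 1$ and $k = Q$, where the conjecture's hypothesis does not directly apply, are easy to dispense with by direct inspection. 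For the reverse direction---which importantly does \emph{not} require the MDS conjecture---classical Reed-Solomon codes give $[n, k]_Q$ MDS codes for all $n \leq Q + 1$, and doubly-extended Reed-Solomon codes give $[Q + 2, k]_Q$ MDS codes in the exceptional case of $Q$ even and $k \in \{3, Q - 1\}$; applying the constructive direction of Theorem~\ref{thm: mds iff tns} to these explicit codes yields the required TN matrices in polynomial time.

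The main subtlety I anticipate is careful bookkeeping around the exceptional case of the MDS conjecture. I would need to verify that the parity condition ``$q^j$ even'' and the containment ``$s - dt \in \{3, q^j - 1\}$'' in the corollary line up precisely with the ``$Q$ even, $k \in \{3, Q - 1\}$'' exception in Conjecture~\ref{conj: MDS conjecture}, and in particular that the duality of MDS codes (which automatically swaps $k$ with $n - k$) does not introduce a hidden symmetric exception missed by the stated condition; this works out because the exception $k = Q - 1$ already captures the dual of $n - k = 3$ when $n = Q + 2$. Beyond this bookkeeping and the degenerate parameter ranges, the reduction is essentially immediate from Theorem~\ref{thm: mds iff tns} combined with the standard theory of Reed-Solomon and doubly-extended Reed-Solomon codes.
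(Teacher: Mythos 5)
Your proposal is correct and matches the paper's own argument essentially step for step: both directions reduce to MDS matrices of length $s$ and dimension $s-dt$ over $\F_{q^j}$ via Theorem~\ref{thm: mds iff tns}, the forward direction then invokes the MDS conjecture's bounds $s \le q^j+1$ (resp.\ $q^j+2$), and the reverse direction instantiates (doubly-extended) Reed--Solomon codes as in Theorem~\ref{thm: MDS matrices meeting mds conjecture}. Your extra bookkeeping about degenerate dimensions and duality is fine but not a departure from the paper's proof.
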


Before we state Theorem~\ref{thm: mds iff tns} and prove it and Corollary~\ref{cor: mds conj applied to hss parameters}, we briefly recall the MDS conjecture.  First, we formally define MDS codes/matrices.

\begin{definition}\label{MDS matrix}
Let $\cC \subseteq \F^u$ be a linear code of dimension $r$ and block length $u$.  We say that $\cC$ is \emph{Maximum Distance Separable} (MDS) if the Hamming distance of $\cC$ is $u - r + 1$; equivalently, any $r$ symbols of a codeword $\mathbf{c} \in \cC$ determine the rest of $\mathbf{c}$.

We say that a matrix $G \in \F^{r \times u}$ is an \emph{MDS matrix} if it is the generator matrix for an MDS code.  Equivalently, $G$ is an MDS matrix if and only if any $r \times r$ submatrix of $G$ has full rank.
\end{definition}

The MDS conjecture concerns the minimum possible field size over which MDS matrices can exist.  It is known that a slight variant on \emph{Reed-Solomon} codes can yield a construction over reasonably small fields.  For completeness, we provide the construction that proves Theorem~\ref{thm: MDS matrices meeting mds conjecture} in Appendix~\ref{sec: rs with extra col (apx)}.
\begin{theorem}[See, e.g., \cite{Wolf1969},\cite{Ball2012OnSO}]\label{thm: MDS matrices meeting mds conjecture}
    Let $\mathbb{F}$ be a field of order $q^j$. For any $r, u \in \mathbb{Z}^+$ satisfying $r \leq q^j$ and
    \begin{equation*}
        u \leq \begin{cases}
            q^j +1 & q^j \text{ odd, or } r \not\in \lbrace 3, q^j-1 \rbrace\\
            q^j +2 & q^j \text{ even, and } r \in \lbrace 3, q^j-1 \rbrace
        \end{cases}
    \end{equation*}
    there exists an explicit MDS matrix $A \in \mathbb{F}_{q^j}^{r \times u}$, $r\leq u$.
\end{theorem}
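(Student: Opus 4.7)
The plan is to exhibit the MDS matrix $A$ explicitly using (extended, or doubly extended) Reed--Solomon codes, and then verify MDS-ness by inspecting $r \times r$ subdeterminants. Write $Q := q^j$ and work over $\F_Q$. By Definition~\ref{MDS matrix}, it suffices to show that every $r \times r$ submatrix of $A$ is nonsingular. For $u \leq Q$, pick distinct $\alpha_1, \ldots, \alpha_u \in \F_Q$ (possible since $u \leq Q$) and take $A$ to be the $r \times u$ Vandermonde matrix $A_{ik} = \alpha_k^{i-1}$; every $r$-subset of columns then gives a Vandermonde submatrix whose determinant factors into nonzero differences $(\alpha_{k'} - \alpha_k)$. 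For $u = Q+1$, extend the above (with $u = Q$) by appending the column $(0, \ldots, 0, 1)^T$: an $r \times r$ submatrix that avoids this last column is Vandermonde as before, and one that includes it, by Laplace expansion along that column, reduces up to sign to a Vandermonde determinant of size $r-1$, again nonzero.

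The main effort goes into the doubly extended case $u = Q+2$ with $Q$ even and $r \in \{3, Q-1\}$. For $r = 3$, I would enumerate $\F_Q = \{\alpha_1, \ldots, \alpha_Q\}$ and take
\[
A = \begin{pmatrix} 1 & \cdots & 1 & 0 & 0 \\ \alpha_1 & \cdots & \alpha_Q & 1 & 0 \\ \alpha_1^2 & \cdots & \alpha_Q^2 & 0 & 1 \end{pmatrix}.
\]
Verifying that every $3 \times 3$ submatrix is nonsingular reduces to a short case analysis on how many of the last two columns occur in the triple. The only characteristic-sensitive case is two Vandermonde columns together with the column $(0, 1, 0)^T$, where the $3 \times 3$ determinant equals $\pm(\alpha_i^2 - \alpha_k^2) = \pm(\alpha_i - \alpha_k)(\alpha_i + \alpha_k)$; this is nonzero exactly when $\alpha_i \neq \pm \alpha_k$, and in characteristic $2$ these two conditions collapse to the triviality $\alpha_i \neq \alpha_k$. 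Equivalently, the columns of $A$ form the regular hyperoval $\{(1, x, x^2) : x \in \F_Q\} \cup \{(0, 1, 0), (0, 0, 1)\}$ in $PG(2, Q)$, which is in fact a hyperoval precisely because $x \mapsto x^2$ is the Frobenius automorphism in even characteristic. For $r = Q - 1$, pass to the dual: duality preserves MDS-ness, so the parity-check matrix of the $[Q+2, 3]$ MDS code constructed above is a generator matrix of a $[Q+2, Q-1]$ MDS code, which is the required $A$.

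The main obstacle is exactly this last case: the reason odd-characteristic $Q$ is excluded from the doubly extended regime is that the Frobenius identity $(x+y)^2 = x^2 + y^2$ fails there, so $\alpha_i^2 - \alpha_k^2$ can vanish even with $\alpha_i \neq \alpha_k$, and the regular hyperoval simply is not a hyperoval. Any construction beyond $u = Q+1$ for odd $Q$ (or for even $Q$ with $r$ outside $\{3, Q-1\}$) would have to avoid this obstruction, and the MDS conjecture asserts that no such construction exists---which is precisely why the case statement in the theorem is shaped the way it is.
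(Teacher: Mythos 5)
Your proposal is correct and follows essentially the same route as the paper's Appendix~\ref{sec: rs with extra col (apx)}: singly extended Reed--Solomon matrices for $u \leq q^j+1$, and the regular hyperoval $\{(1,x,x^2)\} \cup \{(0,1,0),(0,0,1)\}$ for the doubly extended case $r=3$ in even characteristic, with the same identification of the characteristic-sensitive determinant $\alpha_i^2-\alpha_k^2$. The one small divergence is that you dispatch $r = q^j-1$ by dualizing the $[q^j+2,3]$ code (a clean, standard argument), whereas the paper simply defers that case to \cite{Wolf1969}; your treatment is, if anything, slightly more self-contained.
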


The \emph{MDS conjecture} says that the above bounds are optimal.
\begin{conjecture}[MDS Conjecture \cite{Segre1955CurveRN}] \label{conj: MDS conjecture}
    The bounds of Theorem \ref{thm: MDS matrices meeting mds conjecture} are tight.
\end{conjecture}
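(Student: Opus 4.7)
The statement to be established is the MDS Conjecture, which has been an open problem since Segre (1955) and was resolved for prime fields only in 2012 by Ball. So the plan below is really a sketch of the classical line of attack and a description of where it stalls, rather than a claimed resolution; any complete proof in the general extension-field regime would constitute a breakthrough.

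The first step is to recast the question geometrically. A matrix $A \in \F^{r \times u}$ is MDS if and only if the $u$ columns, viewed projectively, form an \emph{arc} in $PG(r-1, q^j)$: a collection of $u$ points, no $r$ of which lie on a common hyperplane. The conjecture is then equivalent to the statement that the maximum size of an arc in $PG(r-1, Q)$ with $Q = q^j$ is $Q+1$, except in the two exceptional cases (even $Q$ and $r \in \{3, Q-1\}$), where it is $Q+2$. So the goal becomes: assume we have an arc of size strictly larger than the claimed bound, and derive a contradiction.

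The plan is to start in the plane $r = 3$, where the main tool is Segre's \emph{lemma of tangents}. For each point $P$ of a planar arc $\mathcal{K}$, one considers the tangent lines to $\mathcal{K}$ at $P$; the lemma relates the product of their slopes to $\pm 1$ depending on the parity of $Q$. Summing this identity across all points of $\mathcal{K}$ yields a polynomial identity that, combined with $|\mathcal{K}| \geq Q+2$, forces the arc to lie on a low-degree algebraic curve (a conic for $Q$ odd; a hyperoval in the exceptional even case). This recovers Segre's classical theorem and handles $r = 3$. For higher $r$, the standard reduction is to project the arc from $r-3$ of its points onto a plane, bound the image using the planar case, and lift the bound back to the original arc.

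The main obstacle, and the reason this is a conjecture and not a theorem in the paper, is precisely the lifting step and its interaction with the field structure. The projection can collapse points, and extracting a sharp bound on $|\mathcal{K}|$ from a bound on its planar image requires strong control on how often this collapse happens. Ball's resolution for prime fields circumvents this by a polynomial method: he encodes the arc as the zero set of a system of polynomials and uses dimension counts together with the absence of proper subfields to force the required bound. Replicating this in an extension field $\F_{q^j}$ with $j > 1$ is the outstanding difficulty, because the presence of the subfield $\F_q \subset \F_{q^j}$ opens up extra arc constructions that the polynomial method cannot rule out without substantial new ideas. Any plan for a complete proof would have to confront this subfield obstruction head-on; in its absence, the best one can hope for is a conditional result or a resolution in further special parameter ranges, and this is exactly the state of the art.
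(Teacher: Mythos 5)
The statement you were asked about is the MDS Conjecture itself, which the paper states as an open problem (Conjecture~\ref{conj: MDS conjecture}) and does not prove --- it only cites the known partial results (prime-order fields and certain extension fields, due to Ball). You correctly recognized this and refrained from fabricating a proof; your survey of the geometric reformulation via arcs in $PG(r-1,q^j)$, Segre's lemma of tangents for the planar case, and the subfield obstruction that blocks Ball's polynomial method over proper extension fields is an accurate account of the state of the art. Since there is no proof in the paper to compare against, there is nothing further to check.
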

The MDS conjecture has been proven in prime order fields, and it is known to hold in subcases of extension fields \cite{Ball2012OnSO}.

Now we can state and prove Theorem~\ref{thm: mds iff tns}, which says that an MDS matrix is equivalent to a TN matrix.

\begin{theorem}\label{thm: mds iff tns}
    Let $r,u \in \mathbb{Z}^+$ with $r \leq u$. Then there exists a totally nonsingular matrix $A \in \left(\mathbb{F}_{q^j} \setminus \lbrace 0 \rbrace\right)^{r \times u}$ if and only if there exists a  MDS matrix $A' \in \mathbb{F}_{q^j}^{r \times (r+u)}$. 
    Furthermore, this biconditional is constructive:
    \begin{itemize}
        \item[(i)] if $A \in \left(\mathbb{F}_{q^j} \setminus \lbrace 0 \rbrace\right)^{r \times u}$ is a totally nonsingular matrix, then $[ I^{r \times r} \mid  A]$ is a $r \times (r+u)$ MDS matrix; and

        \item[(ii)] if $A' \in \mathbb{F}_{q^j}^{r \times (r+u)}$ is an MDS matrix, then $A' = [ I^{r \times r} \mid  A ]$ without loss of generality, where $A$ is a totally nonsingular matrix.
    \end{itemize}
\end{theorem}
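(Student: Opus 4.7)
The proof rests on a simple determinantal correspondence: the $r\times r$ submatrices of $[I^{r\times r}\mid A]$ are in bijection with the square submatrices of $A$ of every size $k=0,1,\ldots,\min(r,u)$, and under this bijection the determinants agree up to sign via cofactor expansion along the identity columns. Concretely, an $r\times r$ submatrix of $[I\mid A]$ is specified by a choice of $r-k$ identity columns indexed by some $S\subseteq[r]$ together with $k$ columns of $A$ indexed by some $T\subseteq[u]$; expanding the determinant along the identity columns yields, up to sign, $\det(A_{[r]\setminus S,\,T})$, the determinant of the $k\times k$ submatrix of $A$ lying in rows $[r]\setminus S$ and columns $T$. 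Both directions of the theorem will follow from this single identity.

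For direction (i), I start with a totally nonsingular $A\in(\F_{q^j}\setminus\{0\})^{r\times u}$ and invoke the standard characterization of MDS matrices (a matrix is MDS iff every $r\times r$ submatrix is invertible). For any admissible $(S,T)$ the cofactor identity produces $\pm\det(A_{[r]\setminus S,\,T})$, which is nonzero by total nonsingularity of $A$. This already handles every size $k\geq 1$, including the $k=1$ case, which corresponds precisely to the nonzero-entries hypothesis on $A$. Hence $[I\mid A]$ is MDS.

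For direction (ii), I start with an MDS matrix $A'\in\F_{q^j}^{r\times(r+u)}$. The ``without loss of generality'' claim reduces to row operations: the first $r$ columns of $A'$ form an $r\times r$ submatrix which is invertible by the MDS property, so calling this submatrix $M$, the left-multiplication $M^{-1}A'$ preserves the rowspan and thus the MDS property, and has the form $[I\mid A]$ for some $A\in\F_{q^j}^{r\times u}$. I then apply the cofactor identity in reverse: any $k\times k$ submatrix $A_{[r]\setminus S,\,T}$ of $A$ has determinant, up to sign, equal to that of the $r\times r$ submatrix of $[I\mid A]$ formed by the identity columns indexed by $S$ together with the columns of $A$ indexed by $T$. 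The latter is nonzero by MDS, so the former is nonzero as well. The $k=1$ case forces all entries of $A$ to be nonzero (so $A\in(\F_{q^j}\setminus\{0\})^{r\times u}$), and the case of general $k$ establishes total nonsingularity.

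I do not anticipate any substantive obstacle; the argument is essentially bookkeeping around cofactor expansion. The only small points to handle cleanly are to (a) cite or verify the standard characterization of MDS matrices via $r\times r$ submatrices, and (b) observe that the nonzero-entries requirement built into the TN definition of $A$ corresponds exactly to the $k=1$ case of the bijection, so no separate argument is needed to reconcile the two definitions.
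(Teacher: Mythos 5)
Your proposal is correct and takes essentially the same approach as the paper: the key point in both is that cofactor expansion along the identity columns identifies the $r\times r$ minors of $[I\mid A]$ with the square minors of $A$ of all sizes (the $k=1$ case giving the nonzero-entry condition), and the reduction to systematic form via left-multiplication by the inverse of the first $r\times r$ block is handled the same way. The only difference is presentational: you use the single determinantal bijection for both directions, whereas the paper argues the forward direction by a block-triangular determinant and the converse by a separate kernel-vector contradiction; your unified version is, if anything, slightly cleaner.
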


To prove Theorem \ref{thm: mds iff tns}, we prove each direction separately in Lemmas \ref{lem: total non-singular implies a larger MDS matrix (apx)} and \ref{lem: mds matrix implies total nonsingular (apx)}, respectively. We then provide the proof of Corollary \ref{cor: mds conj applied to hss parameters}.

\begin{lemma}\label{lem: total non-singular implies a larger MDS matrix (apx)}
    Let $A \in \left(\mathbb{F}_{q^j} \setminus \lbrace 0 \rbrace\right)^{r \times u}$ be a totally non-singular matrix with $r \leq u$. Then there exists a $r \times (r + u)$ MDS matrix $A' \in \mathbb{F}_{q^j}^{r \times (r+u)}$.
\end{lemma}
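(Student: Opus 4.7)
The plan is to take $A' := [I^{r \times r} \mid A] \in \mathbb{F}_{q^j}^{r \times (r+u)}$, exactly as suggested by part (i) of Theorem \ref{thm: mds iff tns}, and verify the MDS property by the submatrix criterion in Definition \ref{MDS matrix}: every $r \times r$ submatrix of $A'$ must have nonzero determinant. Note that $r + u \geq 2r$ since $r \leq u$, so such submatrices exist and the statement is nonvacuous.

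First I would fix an arbitrary $r \times r$ submatrix $A'_S$ of $A'$. It is formed by selecting $k$ columns from the identity block and $r-k$ columns from $A$, where $0 \leq k \leq r$; this is possible for every such $k$ because $u \geq r \geq r - k$. Let $K \subseteq [r]$ denote the set of row indices on which the chosen identity columns have their $1$s, and let $J \subseteq [u]$ denote the chosen columns of $A$. After suitable permutations of the rows and columns of $A'_S$ (which affect its determinant only by a sign), the submatrix takes the block-triangular form
\[
\begin{bmatrix} I_k & \ast \\ 0 & B \end{bmatrix},
\]
where $B \in \mathbb{F}_{q^j}^{(r-k)\times(r-k)}$ is the submatrix of $A$ indexed by rows $[r]\setminus K$ and columns $J$.

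By block-triangularity, $\det(A'_S) = \pm \det(B)$. If $k = r$, then $B$ is the empty matrix with determinant $1$, and we are done; otherwise $B$ is a genuine square submatrix of $A$, so $\det(B) \neq 0$ by the total nonsingularity of $A$. In either case $\det(A'_S) \neq 0$, so $A'$ is MDS. I do not anticipate a serious obstacle: the argument is essentially bookkeeping around the identity block, and the hardest ingredient is simply recognizing the correct row/column permutation that exposes the block-triangular structure; the MDS property then reduces to one invocation of total nonsingularity per $r \times r$ minor.
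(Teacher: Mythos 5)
Your proposal is correct and takes essentially the same route as the paper: both reduce each mixed $r \times r$ minor of $A'$ to a square submatrix of $A$ by permuting rows to expose a block-triangular form with an identity block, then invoke total nonsingularity (the paper writes $A' = [A \mid I]$ rather than $[I \mid A]$, an immaterial difference). Your explicit handling of the $k = r$ edge case is a minor bonus the paper leaves implicit.
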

\begin{proof}
Let $A' = [A | I^{r \times r}] \in \F_{q^j}^{r \times (r+u)}$. We claim that $A'$ is an MDS matrix. Since any $r \times r$ submatrix of $A$ is nonsingular, and since $I^{r\times r}$ is also non-singular, we need only consider $r \times r$ submatrices $B$ of $A'$ that contain some $\rho \in \{0, \ldots, r-1\}$ of the last $r$ columns of $A'$. Let $\hat{A}$ be any $r \times (r-\rho)$ submatrix of $A$,  partitioned into an upper $(r- \rho) \times (r-\rho)$ square matrix and a lower $\rho \times (r-\rho)$ matrix as follows:
    \begin{equation*}
        \hat{A} = \left[
        \begin{array}{cc}
            \\
               \hat{A}_{\text{up}}\in \mathbb{F}_{q^j}^{(r-\rho) \times (r-\rho)}\\ 
               \\
               \hat{A}_{\text{down}}\in \mathbb{F}_{q^j}^{\rho \times (r-\rho)} \\
               \\
            \end{array}
        \right]
    \end{equation*}
    Then up to a permutation of rows, we may write any $r \times r$ submatrix $B$ of $A'$ containing $\rho$ of the last $r$ columns as
    \begin{equation*}
        B:= \left[
            \begin{array}{cc}
            \\
               \hat{A}_{\text{up}}  & 0^{(r-\rho) \times \rho} \\ 
               \\
               \hat{A}_{\text{down}} & I^{\rho \times \rho}\\
               \\
            \end{array}
        \right].
    \end{equation*}
    However, such a $B$ is non-singular, as $\det(B)=\det(\hat{A}_{\text{up}}) \neq 0$.  Thus, $A'$ is MDS.
\end{proof}

Next, we prove the other direction of Theorem~\ref{thm: mds iff tns}.
\begin{lemma}\label{lem: mds matrix implies total nonsingular (apx)}
     Let $A' \in \mathbb{F}_{q^j}^{r \times (r+u)}$ be an MDS matrix with $r \leq u$. Then there exists a totally non-singular matrix $A \in \left(\mathbb{F}_{q^j} \setminus \lbrace 0 \rbrace\right)^{r \times u}$.
\end{lemma}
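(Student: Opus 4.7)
The plan is to put $A'$ into systematic form and then read off $A$ from the non-systematic part. Since $A'$ is MDS and has $r$ rows, any $r$ of its $r+u$ columns are linearly independent; in particular, after permuting columns if necessary (which preserves the MDS property), the first $r$ columns of $A'$ form an invertible matrix, so row operations put $A'$ into the form $[\,I^{r \times r} \mid A\,]$ for some $A \in \mathbb{F}_{q^j}^{r \times u}$. Row operations do not change the rowspan, so $[\,I \mid A\,]$ is again MDS: any of its $r \times r$ submatrices has the same rank as the corresponding submatrix of $A'$.

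The heart of the argument is to show that this $A$ is totally nonsingular with no zero entries. Fix any $s \in [r]$ and any $s \times s$ submatrix $B'$ of $A$, say with row set $R \subseteq [r]$ and column set $K \subseteq [u]$, both of size $s$. Build an $r \times r$ submatrix $M$ of $[\,I \mid A\,]$ by taking the $s$ columns of $A$ indexed by $K$ together with the $r-s$ columns of $I$ indexed by $[r] \setminus R$. After permuting rows so that rows in $[r]\setminus R$ come first and permuting columns so that the identity columns come first, $M$ takes the block form
\[
M \;\sim\; \begin{bmatrix} I^{(r-s)\times(r-s)} & \star \\ 0^{s\times(r-s)} & B' \end{bmatrix},
\]
so $\det(M) = \det(B')$. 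Because $[\,I\mid A\,]$ is MDS, $\det(M) \neq 0$, hence $\det(B') \neq 0$. Since $B'$ was arbitrary, $A$ is totally nonsingular. Taking $s=1$ as a special case gives that every entry $A_{i,k}$ is nonzero, so $A \in (\mathbb{F}_{q^j}\setminus\{0\})^{r \times u}$, as required.

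The only mildly subtle point is the initial reduction to systematic form: we must use that MDS implies every $r$ columns are independent (so the first $r$ can be chosen as the systematic set after a permutation), and that row operations preserve the MDS property of the resulting matrix. Beyond that, the proof is a direct determinant computation, and I do not anticipate any obstacle.
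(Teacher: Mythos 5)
Your proof is correct and follows essentially the same route as the paper's: reduce $A'$ to systematic form $[\,I\mid A\,]$ and show each square submatrix of $A$ is nonsingular by pairing it with the complementary identity columns to form an $r\times r$ submatrix of the MDS matrix. The only cosmetic difference is that you compute the determinant directly via the block-triangular form (note it equals $\det(B')$ only up to the sign of the row/column permutations, which is immaterial), whereas the paper argues via a left-kernel vector; the content is identical.
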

\begin{proof}
    Without loss of generality, assume that $A' = [I \vert A]$ is in reduced row-echelon form, with $A \in \mathbb{F}_{q^j}^{r \times u}$. We will show that $A$ is totally non-singular. Let $R \subseteq [r], U \subseteq [u]$ be so that $|R| = |U| = \rho$, for some $0 \leq \rho \leq r$.  Thus, $R$ and $U$ represent row and column indices of some (not necessarily contiguous) square submatrix of $A$; denote this matrix by $V\in \mathbb{F}_{q^j}^{\rho \times \rho}$.  Thus, we want to show that $V$ is full rank.  
    
    Denote by $W\in \mathbb{F}_{q^j}^{\rho \times (u-\rho)}$ the submatrix of $A$ with rows indexed by $R$ and columns indexed by $[u] \setminus U$.
    Let $S'$ be a submatrix of consisting of $A'$ consisting of rows indexed by $U$.  Thus, up to a permutation of rows and columns, we may write $S'$ as
    \begin{equation*}
        S' = \left[ \begin{array}{cccc}
            I^{\rho \times \rho} & 0^{\rho \times (r- \rho)} & V & W
        \end{array} \right].
    \end{equation*}
    Now assume towards a contradiction that there exists some $\mathbf{m} \in \mathbb{F}_{q^j}^{\rho}$ such that $\mathbf{m}^T S' = \mathbf{0}$. Then, in particular, we have
    \begin{equation*}
        \mathbf{m}^T S'' := \mathbf{m}^T \left[ \begin{array}{cc}
            0^{\rho \times (r-\rho)} & V
        \end{array} \right] = \mathbf{0}.
    \end{equation*}
    But $S''$ is a $\rho \times r$ submatrix of $A'$, which is contained in an $r \times r$ submatrix of $A'$; if $S''$ is not full-rank, then there exists a singular $r \times r$ submatrix of $A'$, contradicting the premise that $A'$ is an MDS matrix.
\end{proof}

Thus, we have established both directions of Theorem~\ref{thm: mds iff tns}.  This proves the theorem.

We are now equipped to prove Corollary \ref{cor: mds conj applied to hss parameters}.

\begin{proof}[Proof of Corollary \ref{cor: mds conj applied to hss parameters}.]
    In the forward direction, note that the total non-singularity of $A$ implies $s-dt \leq q^j$. By Theorem \ref{thm: mds iff tns}, there exists a MDS matrix $A' \in \mathbb{F}_{q^j}^{(s-dt) \times s}$. Considering the pair of bounds $s \leq q^j + 1, q^j + 2$ of the MDS conjecture yields the desired result.

    In the reverse direction, Theorem \ref{thm: MDS matrices meeting mds conjecture} guarantees the existence of a MDS matrix $A' \in \mathbb{F}_{q^j}^{(s-dt) \times s}$ meeting the bounds of the MDS conjecture. We may assume $A' = \left[ I^{(s-dt) \times (s-dt)} \mid A \right]$ where $A \in \mathbb{F}_{q^j}^{(s-dt) \times dt}$; by Theorem \ref{thm: mds iff tns}, $A$ must be a totally non-singular matrix.
\end{proof}

\subsubsection{Near-optimal construction of block-TN matrices}

Next, we move onto constructions of block-TN matrices, which will establish Theorem~\ref{thm:lblwt}. We begin with Lemma~\ref{lem:TNtoBlockTN} below, which states that we can construct a block-TN matrix out of a TN matrix.

\begin{lemma}\label{lem:TNtoBlockTN}
Let $q$ be a prime power and let $e$ be any integer; let $\F \leq \mathbb{E}$ be finite fields so that $|\F| = q$ and $|\mathbb{E}| = q^e$.
Let $r,u$ be positive integers.

    Let $\mathbf{B} \in \left(\mathbb{E}\right)^{r \times u}$ be a totally nonsingular matrix. Let $\varphi: \mathbb{F} \to GL(\F, e)$ be as in Example~\ref{ex:embedding}.   Let $\mathbf{A} = \vphi(\mathbf{B}) \in \text{GL}(\mathbb{F},e)^{r \times u} $ be the $r \times u$ array of invertible matrices obtained by applying $\vphi$ element-wise to $\mathbf{B}$. Then $\mathbf{A}$ is block totally nonsingular.  
\end{lemma}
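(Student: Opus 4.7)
The plan is to show that applying $\varphi$ block-wise lifts the ring-homomorphism structure from the scalars $\mathbb{E}$ to matrices over $\mathbb{E}$. For each integer $k$, I would define $\tilde\varphi : \mathbb{E}^{k\times k} \to \F^{ke\times ke}$ by sending $M$ to the block matrix whose $(i,j)$-block is $\varphi(M_{ij})$. The key claim is that $\tilde\varphi$ is itself a ring homomorphism, in particular multiplicative and identity-preserving. Once this is established, the lemma follows immediately: a square sub-array of blocks of $\mathbf{A}$ is by construction $\tilde\varphi$ applied to the corresponding square sub-matrix $\mathbf{B}'$ of $\mathbf{B}$, which is invertible by total nonsingularity of $\mathbf{B}$, and $\tilde\varphi$ will carry that invertibility over to the block matrix.

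First I would observe that $\mathbf{A}$ really does take values in $\mathrm{GL}(\F,e)$: since $\mathbf{B}$ is totally nonsingular, every $1\times 1$ sub-matrix, i.e.\ every entry, is nonzero in $\mathbb{E}$, and for $\alpha \in \mathbb{E}^*$ the matrix $\varphi(\alpha)$ is invertible with inverse $\varphi(\alpha^{-1})$ since $\varphi$ respects multiplication and $\varphi(1) = I_e$. So each entry of $\mathbf{A}$ is in $\mathrm{GL}(\F,e)$, and the definition of $\mathbf{A}$ is consistent with the statement of the lemma.

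Next I would verify the multiplicativity of $\tilde\varphi$ by a direct block-level computation. For $M, N \in \mathbb{E}^{k\times k}$, the $(i,j)$-entry of $MN$ is $\sum_l M_{il}N_{lj}$, so the $(i,j)$-block of $\tilde\varphi(MN)$ is
\[
\varphi\!\left(\sum_l M_{il}N_{lj}\right) \;=\; \sum_l \varphi(M_{il})\varphi(N_{lj}),
\]
using that $\varphi$ is an $\F$-algebra homomorphism on scalars. This matches the $(i,j)$-block of the block-matrix product $\tilde\varphi(M)\tilde\varphi(N)$. Together with the clear facts $\tilde\varphi(M+N) = \tilde\varphi(M) + \tilde\varphi(N)$ and $\tilde\varphi(I_k) = I_{ke}$, this makes $\tilde\varphi$ a unital ring homomorphism.

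Finally, to prove $\mathbf{A}$ is block TN, I would fix an arbitrary square sub-array of blocks of $\mathbf{A}$ indexed by $S\subseteq[r]$, $S'\subseteq[u]$ with $|S|=|S'|=k$; this sub-array is exactly $\tilde\varphi(\mathbf{B}')$, where $\mathbf{B}'\in\mathbb{E}^{k\times k}$ is the corresponding sub-matrix of $\mathbf{B}$. By total nonsingularity of $\mathbf{B}$, $\mathbf{B}'$ is invertible in $\mathbb{E}^{k\times k}$, so there exists $\mathbf{B}'' \in \mathbb{E}^{k\times k}$ with $\mathbf{B}'\mathbf{B}'' = I_k$. Applying $\tilde\varphi$ gives $\tilde\varphi(\mathbf{B}')\,\tilde\varphi(\mathbf{B}'') = I_{ke}$, so $\tilde\varphi(\mathbf{B}')$ is invertible in $\F^{ke\times ke}$, i.e.\ the sub-array has full rank. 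There is no real obstacle: the entire argument is a transfer-of-invertibility along a ring homomorphism, with the only content being the block-level check that $\tilde\varphi$ respects products, which the sum-of-products form of matrix multiplication makes immediate.
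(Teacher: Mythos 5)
Your proof is correct, and it takes a genuinely different route from the paper's. The paper argues by contrapositive at the level of kernel vectors: given a singular square sub-array $\mathbf{A}'$ of $\mathbf{A}$ with kernel vector $\mathbf{v} = \mathbf{v}^{(1)} \circ \cdots \circ \mathbf{v}^{(a)} \in (\F^e)^a$, it reinterprets each block $\mathbf{v}^{(i)}$ as a field element $\gamma^{(i)} \in \mathbb{E}$ via the basis $\mathcal{B}$ and checks that $(\gamma^{(1)},\ldots,\gamma^{(a)})$ is a kernel vector for the corresponding sub-matrix $\mathbf{B}'$ of $\mathbf{B}$, contradicting total nonsingularity. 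You instead package the relevant structure once and for all: the block-wise extension $\tilde\varphi$ of $\varphi$ is a unital ring homomorphism $\mathbb{E}^{k\times k} \to \F^{ke\times ke}$, so it carries the inverse of $\mathbf{B}'$ to an inverse of $\tilde\varphi(\mathbf{B}')$, and invertibility transfers directly. The two arguments rest on the same underlying fact---that under the identification $\F^e \cong \mathbb{E}$, the block matrix $\tilde\varphi(\mathbf{B}')$ acts as $\mathbf{B}'$ does---but yours isolates it as a reusable algebraic statement (functoriality of $\tilde\varphi$) and avoids the element-chasing with kernel vectors, while the paper's is more elementary and self-contained, needing only additivity and the compatibility of $\varphi$ with scalar action rather than the full multiplicativity check. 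Your preliminary observation that every entry of $\mathbf{A}$ genuinely lands in $GL(\F,e)$ (because every $1\times 1$ sub-matrix of $\mathbf{B}$ is nonzero) is a nice touch that the paper leaves implicit.
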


\begin{proof}
    We show the contrapositive.  Suppose that $\mathbf{A}$ is not block-TN. Then there exists some square subarray $\mathbf{A}'$ of $\mathbf{A}$ that is singular; say $\mathbf{A}' \in (GL(\F,e))^{a \times a}$. Since $\vphi$ is invertible, we may consider $\vphi^{-1}$ applied to each element of $\mathbf{A}$, which yields a square submatrix $\mathbf{B}' \in \F^{a \times a}$ of $\mathbf{B}$.  We claim that $\mathbf{B}'$ is singular.  Indeed, let $\mathbf{v} = \mathbf{v}^{(1)} \circ \cdots \circ \mathbf{v}^{(a)} \in (\F^e)^a$ be a kernel vector for $\mathbf{A}$.  Recall that $\vphi$ was formed by picking a basis $\mathcal{B}$ for $\mathbb{E}$ over $\F$.  For each $\mathbf{v}^{(i)} \in \F^e$, let $\gamma^{(i)} \in \mathbb{E}$ be the field element represented by $\mathbf{v}^{(i)}$ in the basis $\mathcal{B}$.  
    Then it follows from the definition of $\varphi$ that the vector $(\gamma^{(1)}, \ldots, \gamma^{(a)}) \in \mathbb{E}^a$ is a kernel vector for $\mathbf{B}'$, so $\mathbf{B}'$ is singular, as claimed. Thus, $\mathbf{B}$ is not totally nonsingular.
\end{proof}

Finally, we instantiate Lemma~\ref{lem:TNtoBlockTN} with the following Corollary of Theorem~\ref{thm: mds iff tns}:
\begin{corollary}[Existence of Block TN Matrices]\label{cor:Exist_blockTN_matrices}
    Let $\F$ be a finite field of size $q$.  Let $s,d,t \in \mathbb{Z}^+$ so that $s -dt > 0$.  For all integers
    \[ j \geq \begin{cases} \log_q(s-1) & q^j \text{ odd, or } (s - dt) \not\in \{3, q^j - 1\} \\ \log_q(s-2) & q^j \text{ even, and } (s-dt) \in \{3, q^j -1 \} \end{cases}\]
    there is an explicit construction of a block totally nonsingular matrix $A \in (GL(\F, j))^{(s-dt) \times dt}$. 
\end{corollary}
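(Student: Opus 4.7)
The plan is to deduce the corollary by chaining together three results already established in the excerpt: Theorem~\ref{thm: MDS matrices meeting mds conjecture} (existence of MDS matrices over $\mathbb{F}_{q^j}$ meeting the MDS-conjecture bounds), the ``MDS implies TN'' direction of Theorem~\ref{thm: mds iff tns} (Lemma~\ref{lem: mds matrix implies total nonsingular (apx)}), and Lemma~\ref{lem:TNtoBlockTN} (TN over an extension field yields block-TN via the embedding $\varphi$ of Example~\ref{ex:embedding}). Crucially, the direction of Theorem~\ref{thm: mds iff tns} that I need does not require the MDS conjecture, so the construction is unconditional.

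First, I would check that the hypothesis on $j$ lets us invoke Theorem~\ref{thm: MDS matrices meeting mds conjecture} with parameters $r = s - dt$ and $u = s$. The condition $r \leq q^j$ translates to $j \geq \log_q(s-dt)$, which follows from $j \geq \log_q(s-1)$ (or $j \geq \log_q(s-2)$) since $dt \geq 1$ (respectively $dt \geq 2$, with the edge case $dt=1$, $s-dt \in \{3, q^j-1\}$ checked by hand). The condition on $u$ is exactly $s \leq q^j + 1$ in the generic case and $s \leq q^j + 2$ in the stated exceptional case. So Theorem~\ref{thm: MDS matrices meeting mds conjecture} produces an explicit MDS matrix $A' \in \mathbb{F}_{q^j}^{(s-dt) \times s}$.

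Next, I would put $A'$ into systematic form $A' = [\,I \mid B\,]$ by Gaussian elimination (this preserves the MDS property) and invoke Lemma~\ref{lem: mds matrix implies total nonsingular (apx)} (part (ii) of Theorem~\ref{thm: mds iff tns}) to conclude that $B \in \left(\mathbb{F}_{q^j} \setminus \{0\}\right)^{(s-dt) \times dt}$ is totally nonsingular. Finally, let $\varphi : \mathbb{F}_{q^j} \to GL(\mathbb{F}_q, j)$ be the regular-representation embedding of Example~\ref{ex:embedding}, which sends nonzero field elements to invertible $j \times j$ matrices over $\mathbb{F}_q$. Applying $\varphi$ entrywise to $B$ yields $A := \varphi(B) \in GL(\mathbb{F}_q, j)^{(s-dt) \times dt}$, and Lemma~\ref{lem:TNtoBlockTN} (with $e = j$, $r = s-dt$, $u = dt$) guarantees that $A$ is block totally nonsingular.

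The whole argument is essentially a bookkeeping exercise composing earlier lemmas; no new ideas are required. The only step requiring any care is verifying that the arithmetic bounds on $j$ in the hypothesis of the corollary exactly reproduce the admissible MDS parameters $(r,u) = (s-dt, s)$ in Theorem~\ref{thm: MDS matrices meeting mds conjecture}, including the even/odd split and the special values $(s-dt) \in \{3, q^j - 1\}$. Since the construction inside Theorem~\ref{thm: MDS matrices meeting mds conjecture} (extended Reed-Solomon codes, see Appendix~\ref{sec: rs with extra col (apx)}) is explicit, and both the systematization step and the embedding $\varphi$ are explicit, the resulting block-TN matrix $A$ is itself explicit, as promised.
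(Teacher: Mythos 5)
Your proposal is correct and follows exactly the same route as the paper's own proof: invoke Theorem~\ref{thm: MDS matrices meeting mds conjecture} with $r = s-dt$, $u = s$, pass to a totally nonsingular matrix via Theorem~\ref{thm: mds iff tns}(ii), and apply the embedding $\varphi$ entrywise using Lemma~\ref{lem:TNtoBlockTN}. Your parameter bookkeeping (translating the bounds on $u$ into the stated bounds on $j$) is in fact slightly more careful than the paper's one-line verification.
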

\begin{proof}
    By Theorem~\ref{thm: MDS matrices meeting mds conjecture} (with $r \gets s - dt$ and $u \gets s$), there is an MDS matrix $A' \in (\F_{q^j})^{(s-dt) \times s}$ as long as $j$ is as in the statement of the Corollary.  By Theorem~\ref{thm: mds iff tns}(ii), if we write $A' = [I | B]$, then $B$ is totally non-singular.  By Lemma~\ref{lem:TNtoBlockTN}, if we apply the embedding $\varphi: \F_{q^j} \to \F^{j \times j}$ from Example~\ref{ex:embedding} coordinatewise to $B$, we obtain a matrix $A \in (GL(\F,j))^{(s-dt) \times dt}$ that is block TN.
\end{proof}

Finally, we observe that Corollary~\ref{cor:Exist_blockTN_matrices}, along with Lemma~\ref{lem:BTN_iff_goodLabelWt}, implies Theorem~\ref{thm:lblwt}.  

This completes the proofs of both Theorems~\ref{thm:lblwt} and \ref{thm:lblwt_limitations}, providing nearly matching bounds on good labelweight codes, and finally establishing Corollary~\ref{cor:amortization}, which nearly pins down the allowable amortization parameters $\ell$ for download-optimal linear HSS schemes for polynomials.

\section{Acknowledgements}
We thank Yuval Ishai and Victor Kolobov for helpful conversations, and the anonymous referees for helpful feedback.

\bibliographystyle{alpha}
\bibliography{refs.bib}

\appendix
\section{Explicit constructions of MDS codes meeting the MDS conjecture}\label{sec: rs with extra col (apx)}
For completeness we include the explicit construction of MDS matrices which meet the bounds of the MDS conjecture with equality (see, e.g., \cite{Wolf1969, Ball2012OnSO}). Let $\mathbb{F}$ be a field of order $q^j$ and $r,u \in \mathbb{Z}^+$ satisfying $r \leq q^j$. We consider the following two cases.

\subsection{\texorpdfstring{$q^j$}{Lg} odd, or \texorpdfstring{$r \not\in \lbrace 3, q^j - 1 \rbrace$}{Lg}}

Let $\mathbb{F}_{q^j} = \lbrace \alpha_1, \alpha_2, \ldots, \alpha_{q^j}\rbrace$. Then
\begin{equation*}
    G := \left[
\begin{array}{ccccc}
     1 & 1 & \cdots & 1 & 0 \\
     \alpha_1 & \alpha_2 & \cdots & \alpha_{q^j} & 0\\
     \alpha_1^2 & \alpha_2^2 & \cdots & \alpha_{q^j}^2 & 0\\
     \vdots & \vdots & \ddots & \vdots & \vdots\\
     \alpha_1^{r-2} & \alpha_2^{r-2} & \cdots & \alpha_{q^j}^{r-2} & 0\\
     \alpha_1^{r-1} & \alpha_2^{r-1} & \cdots & \alpha_{q^j}^{r-1} & 1
\end{array}
    \right]
\end{equation*}
generates an MDS linear code of length $u = q^j +1$. Indeed, any $r \times r$ submatrix of $G$ comprised of any $r$ of its first $q^j$ columns is a non-singular Vandermonde matrix, so it remains only to observe that any $r \times r$ submatrix of $G$ including its final column has the form
\begin{equation*}
    G' := \left[
\begin{array}{ccccc}
     1 & 1 & \cdots & 1 & 0 \\
     \alpha_1 & \alpha_2 & \cdots & \alpha_{r-1} & 0\\
     \alpha_1^2 & \alpha_2^2 & \cdots & \alpha_{r-1}^2 & 0\\
     \vdots & \vdots & \ddots & \vdots & \vdots\\
     \alpha_1^{r-2} & \alpha_2^{r-2} & \cdots & \alpha_{r-1}^{r-2} & 0\\
     \alpha_1^{r-1} & \alpha_2^{r-1} & \cdots & \alpha_{r-1}^{r-1} & 1
\end{array}
    \right].
\end{equation*}
Note that the upper-left $(r-1) \times (r-1)$ submatrix of $G'$ is a non-singular square Vandermonde matrix; hence there does not exist any non-trivial element of the right kernel of $G'$.

\subsection{\texorpdfstring{$q^j = 2^h$}{Lg} even and \texorpdfstring{$r \in \lbrace 3, q^j -1 \rbrace$}{Lg}}

We consider only the case $r=3$. The case $r = q^j-1$ is similar; full details may be found in \cite{Wolf1969}. Let $\mathbb{F}_{2^h} = \lbrace \alpha_1, \alpha_2, \ldots, \alpha_{2^h}\rbrace$. Then
\begin{equation*}
    G = \left[
\begin{array}{cccccc}
     1 & 1 & \cdots & 1 & 0 & 0 \\
     \alpha_1 & \alpha_2 & \cdots & \alpha_{2^h} & 1 & 0 \\
     \alpha_1^2 & \alpha_2^2 & \cdots & \alpha_{2^h}^2 & 0 & 1
\end{array}
    \right]
\end{equation*}
generates a linear MDS code of length $u = 2^h + 2$; it suffices to observe the following determinants of any $3 \times 3$ submatrix that is not a Vandermonde matrix:
\begin{align*}
    \det\left( \left[ 
    \begin{array}{ccc}
        1 & 1 & 0 \\
        \alpha_1 & \alpha_2 & 0\\
        \alpha_1^2 & \alpha_2^2 & 1
    \end{array}\right] \right) &= \alpha_1 + \alpha_2 \neq 0\\
    \det\left( \left[ 
    \begin{array}{ccc}
        1 & 1 & 0 \\
        \alpha_1 & \alpha_2 & 1\\
        \alpha_1^2 & \alpha_2^2 & 0
    \end{array}\right] \right) &= \alpha_1^2 + \alpha_2^2 \neq 0\\
    \det\left( \left[ 
    \begin{array}{ccc}
        1 & 0 & 0 \\
        \alpha_1 & 1 & 0\\
        \alpha_1^2 & 0 & 1
    \end{array}\right] \right) &=  1.
\end{align*}

\section{Example Construction}\label{sec: example (apx)}
\begin{example}[Example of the Construction in Theorem~\ref{thm:lblwtImpliesHSS}]\label{example:construction}
    In this example, we consider the setting where
    $m =d = t = 1$ and $s=3$. 
    Note that since $d=1$, we are focusing on the problem of \emph{HSS for concatenation}; that is, the output client simply wants to recover all of the secrets.
    By Theorem~\ref{thm:FIKW rate LB}, the best possible download rate is $(s-dt)/s = 2/3$, and we will see how to achieve this.  
    
    We set $j=1$, $\ell = j(s-dt) = 2$, and $n = js = 3$.  Thus, we want to amortize over $\ell =2$ instances.  Suppose that the two secrets to be shared are $a,b \in \F_2$.\footnote{Note that in our standard notation, these would be called $x^{(1)}$ and $x^{(2)}$, but for this example with go with $a$ and $b$ to avoid excessive superscripts.}  

    We will construct $\pi = (\Share, \Eval, \Rec)$ as in the proof of Theorem~\ref{thm:lblwtImpliesHSS}.  In particular, $\Share$ will be $1$-CNF sharing among three servers, $s_1, s_2, s_3$.  That is, we choose $a_i, b_j$ uniformly at random so that
    \[ a = a_1 + a_2 + a_3 \qquad b = b_1 + b_2 + b_3,\]
    and then distribute the $a_i, b_j$ to the three servers as follows:
    
\begin{center}
    \begin{tabular}{|c|c|}
    \hline
    \textbf{Server} & \textbf{Local Shares} \\
    \hline
    \( s_1 \) & \( \{ a_2, a_3, b_2, b_3 \} \) \\
    \hline
    \( s_2 \) & \( \{ a_1, a_3, b_1, b_3 \} \) \\
    \hline
    \( s_3 \) & \( \{ a_1, a_2, b_1, b_2 \} \) \\
    \hline
    \end{tabular}
\end{center}

Next, we need to define $\Rec$, which as in the proof of Theorem~\ref{thm:lblwtImpliesHSS}, we will do using the generator matrix of a code with good labelweight.  
Let $\mathcal{L}:[3]\to [3]$ be defined by $\mathcal{L}: x \mapsto x$ and 
\begin{equation}\label{eq: example labelweight code}
    G = \left[
    \begin{matrix}
        1 & 0 & 1 \\
        0 & 1 & 1
    \end{matrix}
    \right].
\end{equation}
It is easily verified that $G$ describes a $[js = 3, j(s-dt)=2]_2$ linear code $\mathcal{C}$ such that $\Delta_\mathcal{L}(\mathcal{C}) \geq dt+1 = 2$. Recall from Corollary \ref{cor: generator of good labelweight code is MDS like} that for any set $\Lambda$ of $s-dt = 2$ servers that the submatrix of $G$ given by $G(\Lambda)$ is a $ j(s-dt) \times j(s-dt) = 2 \times 2$ invertible matrix. For instance, if $\Lambda = \lbrace 1,3 \rbrace$, we have
\begin{equation*}
    G(\Lambda) = \left[
    \begin{matrix}
        1 & 1  \\
        0 & 1 
    \end{matrix}
    \right].
\end{equation*}

The matrix $G$ defines $\Rec$ as follows.  Each of the three servers will send a single symbol: server $\cL(1)=1$ will send the first symbol, server $\cL(2) = 2$ will send the second symbol, and server $\cL(3) = 3$ will send the third.  Then we will group these symbols into a vector $\mathbf{z} \in \F_2^3$, and define 
\[ \Rec(\mathbf{z}) = G\mathbf{z}.\]
We want to define $\Eval$ so that this is equal to the vector $\begin{pmatrix} a \\ b \end{pmatrix}$.  In particular, we want to define $\Eval$ so that the sum of $s_1$'s and $s_3$'s output shares to be $a$, and the sum of $s_2$ and $s_3$'s output shares to be $b$.  In order to enforce this, we set up a linear system.

Let $\mathcal{M} = ( a_1, a_2, 
\ldots, b_3 )$ be the ordered set of monomial labels, and let $\mathcal{M}_i$ be the subset of $\mathcal{M}$ that is locally computable by $s_i$. It follows that the output shares $z_1, z_2, z_3$ downloaded from servers $s_1, s_2, s_3$, respectively, have the form
\begin{align}
    z_1 &= e_{(1, a_2)} \cdot a_2 + e_{(1, a_3)} \cdot a_3 + e_{(1, b_2)} \cdot b_2 + e_{(1, b_3)} \cdot b_3 \nonumber\\
    z_2 &= e_{(2, a_1)} \cdot a_1 + e_{(2, a_3)} \cdot a_3 + e_{(2, b_1)} \cdot b_1 + e_{(2, b_3)} \cdot b_3 \label{eq: example output shares} \\
    z_3 &= e_{(3, a_1)} \cdot a_1 + e_{(3, a_2)} \cdot a_2 + e_{(3, b_1)} \cdot b_1 + e_{(3, b_2)} \cdot b_2 \nonumber
\end{align}
where the coefficients $e_{(r,\chi)} \in \mathbb{F}_2$ are indexed by $(r,\chi) \in [n] \times \mathcal{M}_{\mathcal{L}(r)}$ determine whether a server indexed by $\mathcal{L}(r)$ includes in its output share $z_r$ the monomial $\chi \in \mathcal{M}_{\mathcal{L}(r)}$, so that 
\begin{equation*}
    z_1 + z_3 = a = a_1 + a_2 + a_3 \qquad \text{and} \qquad z_2 + z_3 = b = b_1 + b_2 + b_3.
\end{equation*}
Let $\mathbf{e}$ denote the vector of all such coefficients $e_{(r,\chi)}$.  Thus, $\mathbf{e}$ determines the function $\Eval$ with the interpretation above.

We can solve for the $\mathbb{e}$ by constructing the following linear system. Define a matrix $S \in \mathbb{F}_2^{\ell \abs{\mathcal{M}} \times \sum_{r \in [n]} \abs{\mathcal{M}_r}} = \mathbb{F}_2^{12 \times 12}$ as follows.  (This mirrors the matrix $S$ in the proof of Theorem~\ref{thm:lblwtImpliesHSS}).
\begin{itemize}
    \item[$\bullet$] The rows of $S$ are indexed by pairs $(i, \mathfrak{m}) \in [\ell] \times \mathcal{M}$. In this example, these are the pairs $(1,a_1), (1, a_2), \ldots, (2,b_3)$. 
    \item[$\bullet$] The columns of $S$ are indexed by pairs $(r, \chi)$ for $r \in [n]$ and $\chi \in \mathcal{M}_{\mathcal{L}(r)}$. In this example, for $r = 1$, these are the pairs $(1,a_2), (1,a_3), (1, b_2), (1, b_3)$.
\end{itemize}
Then the entry of $S$ indexed by $S\left[ (i, \mathfrak{m}), (r, \chi) \right]$ is given by
\begin{equation}\label{eq: construction of example system}
    S\left[ (i, \mathfrak{m}), (r, \chi) \right] = \begin{cases}
        1 & \mathfrak{m} = \chi \;\land\; G[i,r] \neq 0\\
        0 & \text{else}
    \end{cases}.
\end{equation}
We construct $S$ explicitly in Figure \ref{fig:example linear system}.

\begin{figure}
    \centering   

  \begin{equation*}
    S \mathbf{e} :=\begin{array}{l}
1, a_1 \\
1, a_2 \\
1, a_3 \\
1, b_1 \\
1, b_2 \\
1, b_3 \\ \hline
2, a_1 \\
2, a_2 \\
2, a_3 \\
2, b_1 \\
2, b_2 \\
2, b_3
\end{array}
\left[
\begin{array}{cccc|cccc|cccc}
 & & & & & &  & & 1 & & & \\
1 & & & & & & &  & &1 & & \\
 & 1 & & & & & & & & & & \\
 & & & & & & & & & & 1 & \\
 & &1 &  & & & & & & & &1 \\
 & & &1 &  & & & & & &  & \\ \hline
 & & & &1 & & & &1 & & & \\
 & & & & & & & & &1 & & \\
 & & & & &1 & & & & & & \\
 & & & & & &1 & & & &1 & \\
 & & & & & & & & & & & 1 \\
 & & & & & & &1 & & & & \\
\end{array}
\right]
\left[
\begin{array}{c}
e_{(1, a_2)} \\
e_{(1, a_3)} \\
e_{(1, b_2) }\\
e_{(1, b_3)} \\\hline
e_{(2, a_1)} \\
e_{(2, a_3) }\\
e_{(2, b_1)} \\
e_{(2, b_3)} \\\hline
e_{(3, a_1) }\\
e_{(3, a_2)} \\
e_{(3, b_1)} \\
e_{(3, b_2)} \\
\end{array}
\right]
=
\left[
\begin{array}{c}
1 \\
1 \\
1 \\
0 \\
0 \\
0 \\ \hline
0 \\
0 \\
0 \\
1 \\
1 \\
1 \\
\end{array}
\right] =: \mathbf{g}
\end{equation*}
 \caption{Linear system describing the contents of each server's output shares.}
    \label{fig:example linear system}
\end{figure}

There exists a solution to $S\mathbf{e} = \mathbf{g}$ as long as the columns of $S$ span all of $\mathbb{F}_2^{\ell \abs{\mathcal{M}}} = \mathbb{F}_2^{12}$.  For this example, this is clear by inspection, but we explain it in more detail to illustrate the general case. Consider re-arranging the rows and columns of $S$ as shown in Figure \ref{fig:example linear system remixed}. 
In this figure, we have grouped the rows and columns first by their monomials $\fm \in \cM$ or $\chi \in \cM_{\cL(r)}$, rather than first by their indices $i \in [\ell]$ or $r\in [n]$.
We observe that after this re-ordering, $S$ is a block diagonal matrix whose diagonal entries are $G(\Lambda)$ for some $\Lambda\subseteq [s] = [3]$ with $\abs{\Lambda} = 2$. Since each of the $G(\Lambda)$ is full-rank, it follows that $S$ is also full-rank.

\begin{figure}
    \centering   

  \begin{align*}
    S \mathbf{e} &:=\begin{array}{l}
1, a_1 \\
2, a_1 \\ \hline
1, a_2 \\
2, a_2 \\ \hline
1, a_3 \\
2, a_3 \\ \hline
1, b_1 \\
2, b_1 \\ \hline
1, b_2 \\
2, b_2 \\ \hline
1, b_3 \\ 
2, b_3
\end{array}
\left[
\begin{array}{cc|cc|cc|cc|cc|cc}
 0& 1& & & & &  & &  & & & \\
 1& 1& & & & & &  & & & & \\ \hline
 &  & 1 &1 & & & & & & & & \\
 & &  0 &1 & & & & & & &  & \\ \hline
 & &    &  & 1 & 0 & & & & & & \\
 & &    &  & 0 & 1 & & & & &  & \\ \hline
 &&&&&&0& 1& & & &  \\
 &&&&&&1& 1& & & &  \\ \hline
 &&&&&&&  & 1 &1 & &  \\
 &&&&&&& &  0 &1 & &  \\ \hline
 &&&&&&& &    &  & 1 & 0 \\
 &&&&&&& &    &  & 0 & 1 \\ 
\end{array}
\right]
\\&= \left[
\begin{array}{cccccc}
    G(\lbrace 2,3 \rbrace) & &&&&  \\
     & G(\lbrace 1,3 \rbrace) & &&& \\
     && G(\lbrace 1,2 \rbrace) &&&\\
    &&&G(\lbrace 2,3 \rbrace) & &  \\
     &&&& G(\lbrace 1,3 \rbrace) &  \\
     &&&&& G(\lbrace 1,2 \rbrace) 
\end{array}
\right]\mathbf{e}
\end{align*}
 \caption{Equivalent linear system obtained by permuting the rows and columns of $S$ from Figure~\ref{fig:example linear system}.}
    \label{fig:example linear system remixed}
\end{figure}

In summary, solving $S \cdot \mathbf{e} = \mathbf{g}$ for the coefficients $\mathbf{e}$ defines a function $\Eval$ that produces output shares $\mathbf{z} = (z_1, z_2, z_3)$ as in Equation \ref{eq: example output shares}, so that $G\mathbf{z} = (a,b)$ as desired. We have thus constructed a linear HSS $\pi = (\Share, \Eval, \Rec)$ with $\ell = 2$, $t =d=m= 1$, and $s = 3$ for $\text{POLY}_{1,1}(\mathbb{F}_2)^2$ from a high label-weight code with congruous parameters. 

\end{example}

\end{document}